\noindent \framebox{$\downarrow$UNDER CONSTRUCTION$\downarrow$}}%
\noindent \framebox{$\uparrow$UNDER CONSTRUCTION$\uparrow$}\clearpage}%
\newcommand{\assumpa}{\hypertarget{href: assumpa}{$(\ast)$}}          
\newcommand{\assumparef}{\hyperlink{href: assumpa}{$(\ast)$}}     
\newcommand{\assumpb}{\hypertarget{href: assumpb}{$(\ast')$}}          
\newcommand{\assumpbref}{\hyperlink{href: assumpb}{$(\ast')$}}     
\newcommand{\assumpc}{\hypertarget{href: assumpc}{$(\ast\ast)$}}          
\newcommand{\assumpcref}{\hyperlink{href: assumpc}{$(\ast\ast)$}}
\newcommand{\ddiamond}{\spadesuit}
\newcommand{\Z}{\mathbb{Z}}
\newcommand{\N}{\mathbb{N}}
\newcommand{\C}{\mathbb{C}}
\newcommand{\R}{\mathbb{R}}
\newcommand{\half}{\mathbb{H}}
\newcommand{\disc}{\mathbb{D}}
\newcommand{\ii}{\mathrm{i}}
\renewcommand{\P}{\mathbb{P}}
\newcommand{\E}{\mathbb{E}}
\newcommand{\ind}{\mathbbm{1}}
\newcommand{\F}{\mathcal{F}}
\newcommand{\OO}{\mathcal{O}}
\newcommand{\de}{\mathrm{d}}
\DeclareMathOperator{\dist}{dist}
\newcommand{\ffrac}[2]{\left(\frac{#1}{#2}\right)}
\newcommand{\eps}{\varepsilon}
\DeclareMathOperator{\imag}{Im}
\DeclareMathOperator{\real}{Re}
\newcommand{\dd}{\mathrel{\mathop:}}
\newcommand{\cd}{\partial}
\newcommand{\cdbar}{\bar{\partial}}
\DeclareMathOperator{\Proj}{Proj}
\newcommand{\varggr}{\Gamma}
\newcommand{\outeredge}{\extarc{\medge_o}{\medge_i}}
\newcommand{\inneredge}{\extarc{\varmedge_o}{\varmedge_i}}
\newcommand{\Gouter}{G_{\outeredge}}
\newcommand{\Ginner}{G_{\inneredge}}
\newcommand{\coeffa}{\alpha}
\newcommand{\coeffb}{\beta}
\newcommand{\argu}{\upsilon}
\newcommand{\argv}{\phi}
\newcommand{\arguu}{\Upsilon}
\newcommand{\argvv}{\Phi}
\newcommand{\auxvara}{\rho}
\newcommand{\auxvarb}{\mu}
\newcommand{\confmap}{\Psi}  % e.g. mapping \domain -> \disc
\newcommand{\varconfmap}{\psi}
\newcommand{\medge}{e}  
\newcommand{\varmedge}{f}
\newcommand{\vroot}{v_\text{root}}
\newcommand{\tree}{\mathcal{T}}
\newcommand{\lattice}{\mathbb{L}}
\newcommand{\sqlattice}{\lattice^\bullet}
\newcommand{\dsqlattice}{\lattice^\circ}
\newcommand{\dmdlattice}{\lattice^\diamond}
\newcommand{\ddmdlattice}{\lattice^\ddiamond}
\newcommand{\orient}{\rightarrow}
\newcommand{\odmdlattice}{\dmdlattice_\orient}
\newcommand{\oddmdlattice}{\ddmdlattice_\orient}
\newcommand{\LoopEnseble}{\mathcal{L}}
\newcommand{\Loop}{L}
\newcommand{\RandomLoopEnseble}{\Theta}
\newcommand{\RandomLoop}{\theta}
\newcommand{\lpe}{\LoopEnseble}
\newcommand{\lp}{\Loop}
\newcommand{\rndlpe}{\RandomLoopEnseble}
\newcommand{\rndlp}{\RandomLoop}
\newcommand{\Tree}{\mathcal{T}}
\newcommand{\Branch}{T}
\newcommand{\RandomTree}{\Tree}
\newcommand{\RandomBranch}{\Branch}
\newcommand{\ttr}{\Tree}
\newcommand{\bran}{\Branch}
\newcommand{\rndttr}{\RandomTree}
\newcommand{\rndbran}{\RandomBranch}
\newcommand{\therndttr}{\RandomTree}
\newcommand{\therndbran}{\RandomBranch}
\newcommand{\delp}{\de_{\textnormal{loop}}}
\newcommand{\delpe}{\de_{\textnormal{LE}}}
\newcommand{\dettr}{\de_{\textnormal{tree}}}
\newcommand{\decc}{\de_{\textnormal{curve}}}
\newcommand{\domain}{\Omega}
\newcommand{\vardomain}{\domain}
\newcommand{\unf}[2]{#1^\textnormal{u}_#2}
\newcommand{\extarc}[2]{#1 \smile #2}
\newcommand{\intarcp}[4]{(#1 \frown #2, #3 \frown #4)}
\newcommand{\todisc}[1]{#1_\disc}
\newcommand{\sle}[1]{SLE$(#1)${}}
\newcommand{\cle}[1]{CLE$(#1)${}}
\newcommand{\slek}{\sle{\kappa}}
\newcommand{\clek}{\cle{\kappa}}
\newcommand{\thekr}{\sle{{\frac{16}{3},-\frac{2}{3}}}}
\newcommand{\Znneg}{\Z_{\geq 0}}
\newcommand{\Ztime}{\Znneg}
\newcommand{\Rtime}{[0,\infty)}
\newcommand{\Vtarget}{V_\textnormal{mid}}
\newcommand{\inz}[1]{{#1 \in \Ztime}}
\newcommand{\tinz}{\inz{t}}
\theoremstyle{plain}
\newtheorem{theorem}{Theorem}[section]
\newtheorem{lemma}[theorem]{Lemma}
\newtheorem{proposition}[theorem]{Proposition}
\newtheorem{corollary}[theorem]{Corollary}
\theoremstyle{definition}
\newtheorem{definition}[theorem]{Definition}
\newtheorem*{condition*}{Condition~G}
\theoremstyle{remark}
\newtheorem{remark}[theorem]{Remark}
\newcommand{\showsectioncomments}{show them}  % comment this line if you don't 
\newcommand{\refcond}{Condition~G}
\newcommand{\LemmaReverseLoewner}{Lemma~5.6 of \cite{Kemppainen:tb}}
\newcommand{\LemmaReverseLoewnerLip}{Lemma~6.1 of \cite{Kemppainen:tb}}
\newcommand{\PropositionLoewnerLip}{Proposition~6.1 of \cite{Kemppainen:tb}}
\begin{document}

% TITLE

\title[Conformal invariance in random cluster models. II.]{%
  Conformal invariance in random cluster models. II. Full scaling limit as a branching SLE.}

% OTHER DATA

\author{Antti Kemppainen$^1$}
\address{$^1$Department of Mathematics and Statistics\\
         Faculty of Science\\
         P.O. Box 68\\
         FIN-00014 University of Helsinki\\
         Finland}
\email{Antti.H.Kemppainen@helsinki.fi}
\author{Stanislav Smirnov$^2$}
\address{$^2$Section de math\'ematiques,
         Universit\'e de Gen\`eve,
         2-4, rue du Li\`evre, c.p. 64,
         1211 Gen\`eve 4, Switzerland, \textnormal{and} 
Skolkovo Institute of Science and Technology, Russia, \textnormal{and}
Chebyshev Laboratory, St.~Petersburg State University, Russia  
}
\email{Stanislav.Smirnov@unige.ch}

%
%\date{\today}
\date{}
\begin{abstract}
In the second article of this series, we establish the convergence of the loop ensemble 
of interfaces in the random cluster Ising model to a conformal loop ensemble (CLE)
--- thus completely describing the scaling limit of the model in terms of the random geometry
of interfaces. 
The central tool of the present article 
is the convergence of an exploration tree of the discrete loop ensemble
to a family of branching \thekr{} curves.
Such branching version of the Schramm's SLE not only enjoys the locality property, but
also arises logically from the Ising model observables.
\end{abstract}
\maketitle

% MAIN TEXT

% !TEX root = Conformal_invariance_in_random_cluster_models_II-full_scaling_limit.tex

\section{Introduction}

Starting with the introduction of the Lenz-Ising model of ferromagnetism,
lattice models of natural phenomena played important part in modern mathematics and physics.
While overly simplified 
--- e.g. restricted to a regular lattice of spins with nearest-neighbor interaction ---
they often give a very accurate qualitative description of what we observe in nature.
In particular, they exhibit phase transitions when temperature passes through 
the critical point, 
and the critical system
is expected to enjoy (in the scaling limit) universality and (at least in 2D) conformal invariance.

While this is well understood on the physical and computational level, mathematical proofs (and understanding) are often lacking.
In the first paper of this series \cite{Smirnov:2010ie}, one of us established conformal invariance of some observables in the FK Ising model at criticality, from which description of 
the scaling limit of a single interface 
as a universal, conformally invariant, fractal curve was -- the so-called Schramm's \sle{16/3} -- was deduced 
\cite{Chelkak:2014gs}.
The mathematical theory of such curves
was started by Oded Schramm in his seminal paper \cite{Schramm:2000th}.
The SLE curves are obtained by running a Loewner evolution with 
a speed $\kappa$ Brownian driving term,
and form a one-parameter family of fractals, interesting in themselves
\cite{Rohde:2005fy,Lawler:2008wf}. 
Schramm has shown, that all scaling limits of interfaces or domain walls, if they exist and are conformally invariant, are always described by SLEs; for the exact formulation of the principle, see
\cite{Schramm:2000th,Smirnov:2007tp,Kemppainen:2010vu,Kemppainen:tb}.
A generalization of SLE is the conformal loop ensemble (CLE), which describes the joint
law of all the interfaces in a model.

So far, convergence of a single discrete interface  to \slek's has been established for but 
a few models corresponding to special values of the parameter:
 $\kappa=2$ and $\kappa=8$ \cite{Lawler:2004un},
$\kappa=3$ and $\kappa=\frac{16}{3}$ \cite{Chelkak:2014gs},
$\kappa=4$ \cite{Schramm:2005wh,Schramm:2006uc} and
$\kappa=6$ \cite{Smirnov:2001hw,Smirnov:2009vj}.
However, the framework for the full scaling limit, including all interfaces, is less developed: 
$\kappa=3$ \cite{Benoist:2016uy}, $\kappa=\frac{16}{3}$ \cite{Kemppainen:2015vu}
and $\kappa=6$ \cite{Camia:2006wv}.

The present article extends the convergence showed in \cite{Kemppainen:2015vu}
to include all the interfaces, not just those (infinitely many in the limit)
that touch the boundary.
Effectively, we give a geometric description of the full scaling limit of the FK Ising model, 
which is universal, conformally invariant, and can be obtained
by a canonical coupling of branching SLE curves.

\subsection{The main theorem}

As we indicated above, we consider in this article the Fortuin--Kasteleyn model 
with the cluster parameter $q=2$ (and hence call it the FK Ising model) 
and when the percolation parameter $p$ equals to its critical value
$\sqrt{2}/(1+\sqrt{2})$
(and we say that the model is at criticality).
We study, more specifically, the loop representation of the model. 
A loop configuration of 
the FK Ising model
is illustrated in Figure~\ref{sfig: loop config and exploration a}. 
The underlying Fortuin--Kasteleyn random cluster model configuration
lives on the edges connecting neighboring gray octagons
(the edges form a regular square lattice), while the loop configuration
is a dense collection of non-intersecting loops on the square--octagon lattice.
The probability of the loop configuration of the critical FK Ising
model is proportional to $\sqrt{2}^{(\# \textnormal{ of loops})}$.

\begin{figure}[tbh]
\centering
\subfigure[A random cluster configuration (with wired boundary)
shown here as the red subcollection of the edges of the square lattice
and its loop configuration shown here as
the blue collection of closed loops which are dense,
in the sense that each horizontal or vertical edge of the square-octagon
lattice is traversed by a loop (exactly one to be more accurate).] 
{
	\label{sfig: loop config and exploration a}
	\includegraphics[scale=.375]
{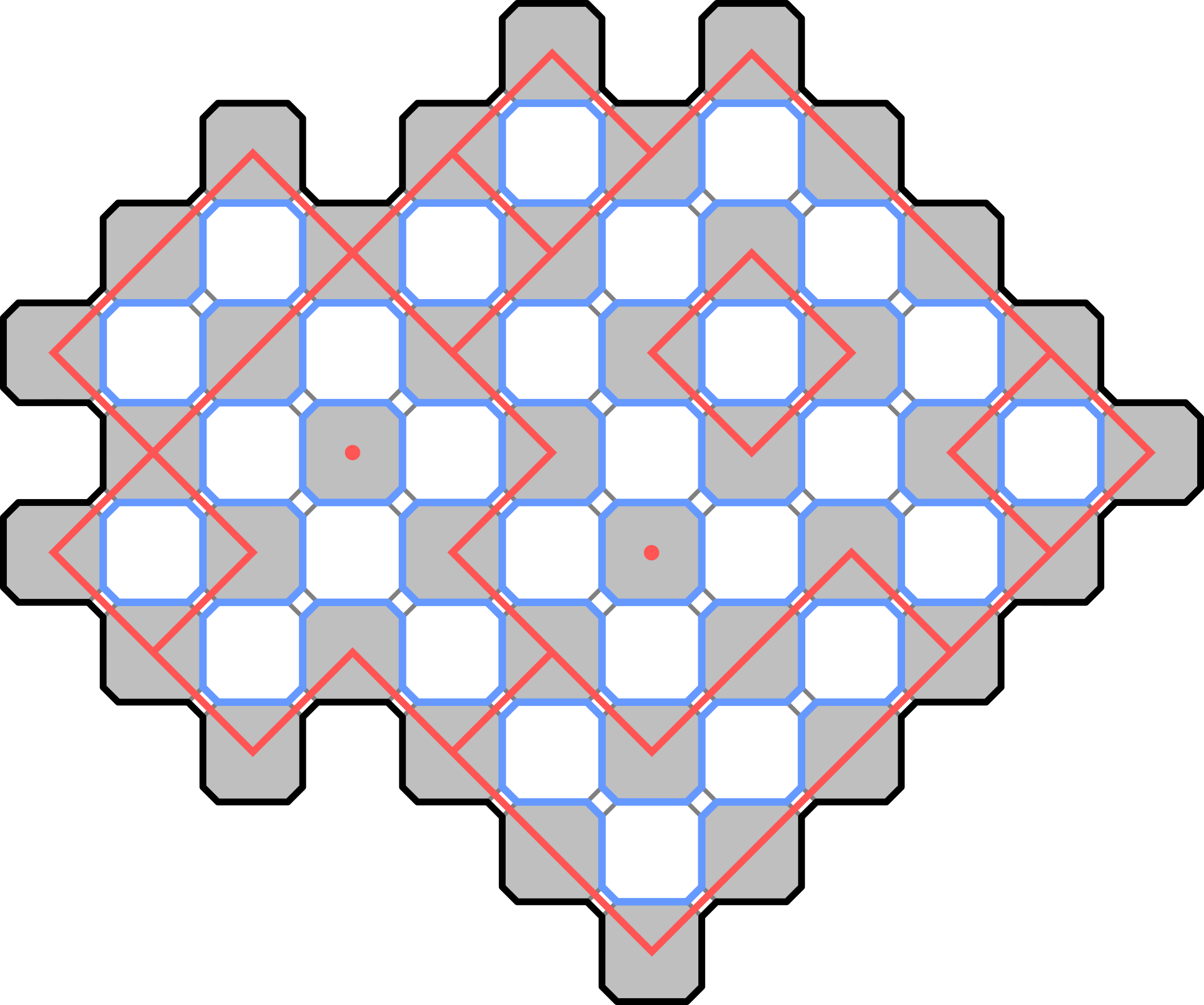}
} 
\hspace{0.5cm}
\subfigure[First steps of exploration: the exploration process
 starts to follow
the dark blue path at the root point
and it branches to green, yellow and orange paths, 
in that order, upon each disconnection event of the exploration region in to two 
disconnected halves.]
{
	\label{sfig: loop config and exploration b}
	\includegraphics[scale=.375]
{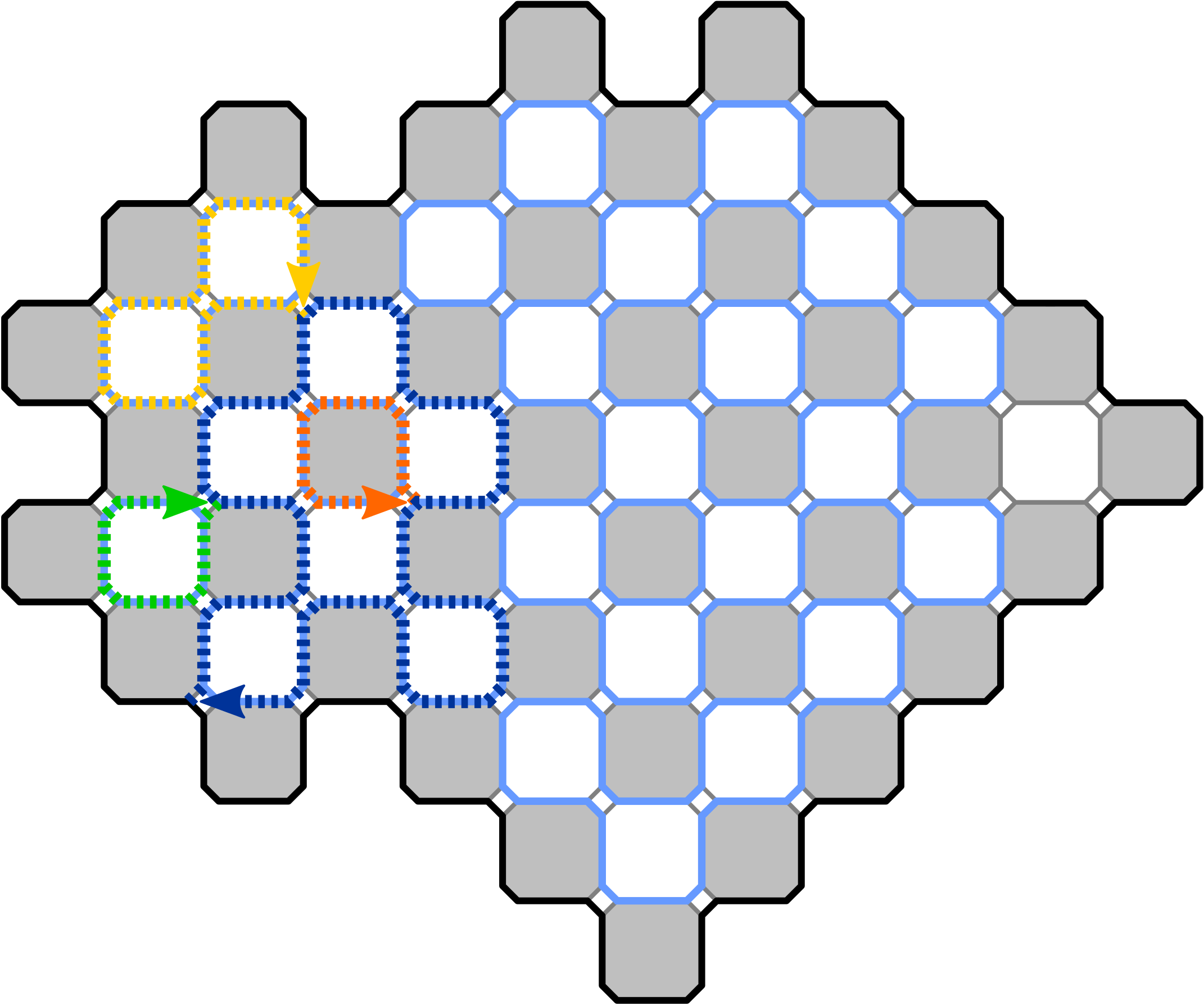}
}
\caption{The loop configuration and the exploration process
of a random cluster model model}
\label{fig: loop config and exploration}
\end{figure}

Figure~\ref{sfig: loop config and exploration b}
 illustrates the first couple of steps of the \emph{exploration process} which is defined
 to be a collection of paths, each starting from a fixed root and then exploring loops in clockwise direction and splitting to two paths whenever a disconnection of an area in to two 
lattice-disconnected halves occurs. The exploration paths form a tree and mathematically the tree
consists of paths from the root to any other lattice point on the domain
(the other end point of the branch is called the target point).

The following theorem is the main theorem of this article
establishing the convergence of 
the FK Ising loop ensemble to \cle{16/3}. We consider a sequence
of discrete domains $\domain_\delta$, where $\delta>0$ is the lattice mesh, 
converging to a domain $\domain$, and a sequence of root points $a_\delta$
converging to a boundary point of $\domain$. The topology of such convergence
is discussed below.

\begin{theorem}\label{thm: main thm}
The joint law of the FK Ising loop ensemble in a discrete domain $\domain_\delta$
and its exploration tree 
(rooted at $a_\delta$)
converges in distribution 
to the joint law of CLE$(\kappa)$ and its SLE$(\kappa,\kappa-6)$ exploration tree
with $\kappa=16/3$ in the topology described below.
\end{theorem}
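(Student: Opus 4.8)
The plan is to first establish the convergence of the exploration tree $\rndttr_\delta$ to the branching \thekr, and then to deduce the convergence of the loop ensemble through the tree-to-loops correspondence. The starting point for the tree is that, by the target independence (locality) of the discrete branches noted after the construction of the exploration tree, together with the analogous locality of the radial \sle{\kappa,\kappa-6}, the joint law of any finite family of branches $(\bran_{e_1},\dots,\bran_{e_m})$ is that of a single \emph{branching exploration}: one runs the branch towards $e_1$ until all of $e_2,\dots,e_m$ have been separated from its tip, then re-roots at the branching point and continues towards each remaining target in turn. Hence it suffices to prove, for each fixed $m$ and each choice of limiting targets $z_1,\dots,z_m$, that $(\bran_{e_1(\delta)},\dots,\bran_{e_m(\delta)})$ with $e_i(\delta)\to z_i$ converges in distribution to the branching \thekr{} restricted to these targets, in the product metric built from $\decc$. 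Since $E(\odmdlattice)$ becomes dense and the tree is, up to $\dettr$, determined by branches to a countable dense set of targets, this then yields convergence of the whole tree in the metric $\dettr$.

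For a single branch I would run the usual two-step scheme. First, \emph{tightness}: the critical FK Ising interfaces obey uniform annulus-crossing (RSW-type) bounds, which are exactly what is needed to apply the a priori regularity estimates of \cite{Kemppainen:tb} (in the spirit of \PropositionLoewnerLip). This makes $\todisc{(\bran_e)}$ tight in the $\decc$-topology and forces every subsequential limit to be a Loewner curve with continuous driving process. Second, \emph{identification}: the branch carries Smirnov's discrete fermionic observable from paper~I \cite{Smirnov:2010ie,Chelkak:2014gs}, here in the radial configuration with marked points the root edge $a_\delta$ and the target edge $e$; its scaling limit is a conformally covariant holomorphic function of the domain and of these two points, which one computes explicitly. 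Plugging this discrete martingale observable into the Loewner equation forces the driving pair $(U_t,V_t)$ of any subsequential limit to solve the \sle{\kappa,\kappa-6} system with $\kappa=16/3$, including the instantaneously reflecting behaviour of $\arg(V_t/U_t)$ at the branching times; the value $\kappa=16/3$ and the drift $\kappa-6=-2/3$ are read off from the exponents of the observable, which is the precise sense in which this process arises from the Ising observables. Uniqueness for this SDE then pins down the limit of the single branch.

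To pass from one branch to the tree I would iterate the above: stop the discrete exploration at the first time all target edges are mutually separated --- a stopping time whose continuum analogue is the disconnection time, the convergence being supplied by the tightness together with Carath\'eodory-stability of the separated sub-domains --- apply the single-branch result up to that time, then use the discrete domain Markov property to restart the exploration in each separated sub-domain towards the remaining targets, and induct on $m$. The same recursion describes the branching \thekr{} on the continuum side, and target independence guarantees that the pieces glue consistently. This delivers the convergence of $\rndttr_\delta$ to the branching \thekr{} in the metric $\dettr$.

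Finally, the loops are recovered from the tree by the explicit surgery described in the introduction: each loop is the portion of a branch between its two passages through the associated small square, closed up along a side of that square. I would prove that this tree-to-loops map is, almost surely for the limiting branching SLE, continuous from trees to loop ensembles in the metrics $(\dettr,\delpe)$; the crucial input is the ``$5$-arm property'' of branching points, which identifies a macroscopic branching point as a $5$-arm point of a branch, the attendant one- and five-arm estimates (again from RSW) ensuring that small perturbations of the tree neither create nor destroy macroscopic loops and that the surgeries localise. Combined with the convergence of the trees, this gives $\rndlpe_\delta \to \rndlpe$ in distribution, jointly with the tree, and $\rndlpe$ is identified with \cle{16/3} because its boundary-touching loops are exactly those obtained in \cite{Kemppainen:2015vu} while its bulk loops are the ones produced from the branching SLE by the recovery map, the pair satisfying conformal invariance and the domain Markov property and hence being \cle{16/3} by the characterisation of \cite{Sheffield:2012kw}. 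I expect this last step --- controlling the tree-to-loops correspondence in the scaling limit --- to be the main obstacle: branching points occur at an exceptional, capacity-time-null set of times where the driving function meets the marked point, so the continuity of the recovery map and the absence of lost or spurious loops cannot come from soft compactness and must instead be extracted from quantitative multi-arm estimates.
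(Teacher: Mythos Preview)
Your overall architecture---tightness from annulus-crossing bounds, identification of a single branch via a martingale observable, extension to finitely many branches by locality and domain Markov, then recovery of the loops from the tree---is exactly the paper's. Two points deserve sharpening.

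\textbf{Identification of a single branch.} You invoke ``Smirnov's observable from paper~I in a radial configuration'' as if one observable suffices. The paper does not get away with this. It introduces a \emph{new} discrete observable $F_\delta$ (Section~2.1) defined on the graph $\Gouter$ where the external arc sits at the \emph{boundary} pair $(\medge_i,\medge_o)$; this observable fails to be preholomorphic precisely at the interior target edge $\varmedge$, and its scaling limit has a genuine pole at $w$ (Theorem~2.3, equation~(3.4)). From its Laurent expansion at $w$ one extracts a martingale $M(t)=e^t\cos\frac{\argvv_t-\arguu_t}{2}$. Separately, the \emph{chordal} observable $\tilde F_\delta$ from paper~I, evaluated at the interior point, gives a second martingale $N(t)$. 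The martingale problem in Section~5.2 needs \emph{both}: $M$ alone fixes neither $\kappa$ nor $\rho$, and the pair is what forces $\de\langle M\rangle_t=\tfrac{4}{3}(1-Z_t^2)e^{2t}\,\de t$ and hence $\kappa=16/3$, $\rho=-2/3$. The instantaneous reflection of $\argvv_t-\arguu_t$ at $0$ and $2\pi$ is not automatic either; it is a separate lemma (Lemma~5.1) proved by inverting $N_t$ near $Z_t=1$ and showing the occupation time vanishes. Your sketch should flag that a second observable (or at least a second independent martingale) is required, and that reflection is a step in its own right.

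\textbf{Loops from the tree.} Your plan is to prove the tree-to-loops surgery is $(\dettr,\delpe)$-continuous using $5$-arm estimates. The paper takes a somewhat different route (Theorem~4.1): it does not argue continuity of the map, but instead \emph{characterizes} within the limiting tree exactly which branches carry loops---namely those with a bulk triple point or a boundary double point---shows this point is unique on each such branch, and recovers the loop as the segment between the penultimate and last visits to it. Absence of spurious triple/double points is then read off from known properties of \thekr{} rather than from multi-arm bounds for the discrete model. Either approach is viable; the paper's avoids quantifying the continuity of the surgery. Finally, the paper takes the branching \thekr{} construction as the \emph{definition} of \cle{16/3} in this range, so no appeal to the Sheffield--Werner axiomatic characterization is needed.
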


Below we deepen the definitions required to understand more exactly the main theorem and 
the main tools to give its proof.

\subsection{Fortuin--Kasteleyn representation of the Ising model}

For general background on the Ising model, the random cluster model and other models of statistical physics, 
see the books \cite{Baxter:1989ug,Grimmett:2006jx,Grimmett:2010tt,mccoy2014two}.
See also the first article \cite{Smirnov:2010ie}, Section~2.

\subsubsection{Notation and definitions for graphs}

In this article, the lattice $\sqlattice$ is the \emph{square lattice} $\Z^2$ rotated by $\pi/4$
and scaled by $\sqrt{2}$, 
$\dsqlattice$ is its dual lattice, which itself is also a square lattice, and $\dmdlattice$ is their (common) \emph{medial lattice}. 
More specifically, we define three lattices $G=(V(G),E(G))$, where $G= \sqlattice,\dsqlattice,\dmdlattice$, as
\begin{gather*}
V( \sqlattice ) = \left\{ (i,j) \in \Z^2 \,:\, i+j \text{ even} \right\}, \quad
   E( \sqlattice ) = \left\{ \{v,w\} \subset V( \sqlattice ) \,:\, |v-w| = \sqrt{2} \right\},  \\
V( \dsqlattice ) = \left\{ (i,j) \in \Z^2 \,:\, i+j \text{ odd} \right\}, \quad
   E( \dsqlattice ) = \left\{ \{v,w\} \subset V( \dsqlattice ) \,:\, |v-w| = \sqrt{2} \right\},  \\
V( \dmdlattice ) = (1/2 + \Z)^2 , \quad
   E( \dmdlattice ) = \left\{ \{v,w\} \subset V( \dmdlattice ) \,:\, |v-w| = 1\right\} 
\end{gather*}
Here $V$ stands for vertices (or sites) and $E$ for edges.
Notice that sites of $\dmdlattice$ are the midpoints of the edges of $\sqlattice$ and $\dsqlattice$.

Denote the set of midpoints of the edges of $\dmdlattice$ as
\begin{equation}\label{eq: def midpoints}
V_\textnormal{mid} = \left\{ (i,j) \in \left(\frac{1}{2} \Z\right)^2 \,:\, i+j \in \Z + \frac{1}{2} 
   \right\} .
\end{equation}
It is natural to identify midpoints $V_\textnormal{mid}$ with their corresponding edges $E(\dmdlattice )$.

We call the vertices and edges of $V( \sqlattice )$ \emph{black} and
the vertices and edges of $V( \dsqlattice )$ \emph{white}.
Correspondingly the faces of $\dmdlattice$ are colored black and white depending whether the
center of that face belongs to $V( \sqlattice )$ or $V( \dsqlattice )$.

The directed version $\odmdlattice$ is defined by setting $V(\odmdlattice)=V(\dmdlattice)$
and orienting the edges around any black face in the counter-clockwise direction.

\begin{figure}[tbh]
\centering
\subfigure[Lattices $\sqlattice, \dsqlattice, \dmdlattice$.] 
{
	\label{sfig: graphs a}
	\includegraphics[scale=.6]
{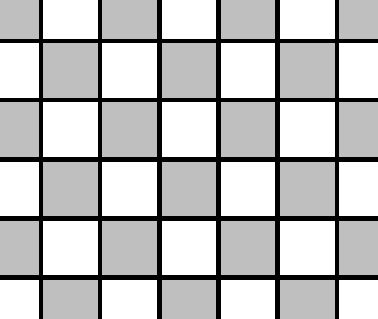}
} 
\hspace{0.5cm}
\subfigure[Modification and $\ddmdlattice$.]
{
	\label{sfig: graphs b}
	\includegraphics[scale=.6]
{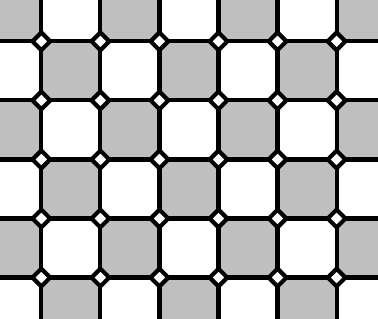}
}
\caption{The square lattices we are considering are $\sqlattice$ formed by the centers of the black squares,
$\dsqlattice$ formed by the centers of the white squares and $\dmdlattice$ formed by the corners of the black and white squares.
We will also consider the square--octagon lattice $\ddmdlattice$ which we see as a modification of $\dmdlattice$.}
\label{fig: graphs}
\end{figure}

The \emph{modified medial lattice} $\ddmdlattice$, which is a square--octagon lattice, 
is obtained from $\dmdlattice$ by replacing each site by a small square.
See Figure~\ref{fig: graphs}.
The faces of $\ddmdlattice$ are refered to 
as \emph{octagons (black or white)} and \emph{small squares}.
The oriented lattice $\oddmdlattice$
is obtained from $\ddmdlattice$ by orienting the edges around black and white octagonal faces
in counter-clockwise and clockwise directions, respectively.

\begin{definition}
A simply connected, non-empty, bounded domain $\domain$ is said to be a \emph{wired $\oddmdlattice$-domain} (or \emph{admissible domain})
if $\partial \domain$ oriented in counter-clockwise 
direction is a path on the directed lattice $\oddmdlattice$.
\end{definition}

\begin{figure}[tbh]
\centering
\subfigure[Oriented lattice $\oddmdlattice$.] 
{
	\label{sfig: orientation and domain a}
	\includegraphics[scale=.2]
{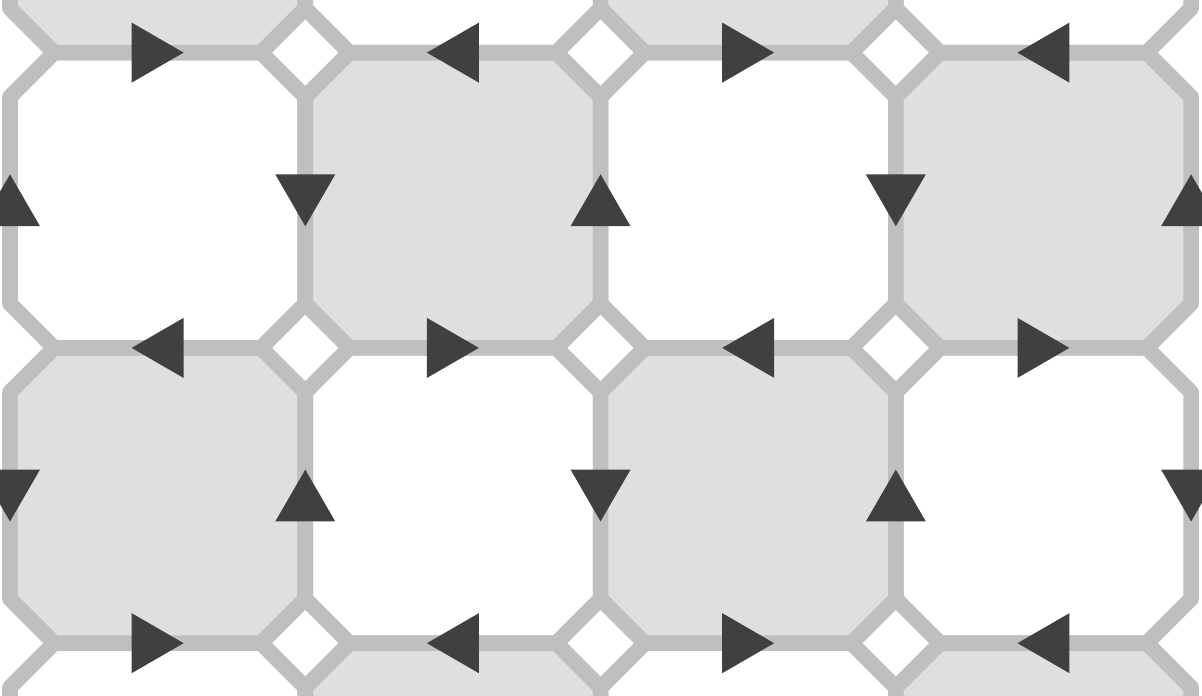}
} 
\hspace{0.5cm}
%\\
\subfigure[Wired $\oddmdlattice$-domain.]
{
	\label{sfig: orientation and domain b}
	\includegraphics[scale=.3]
{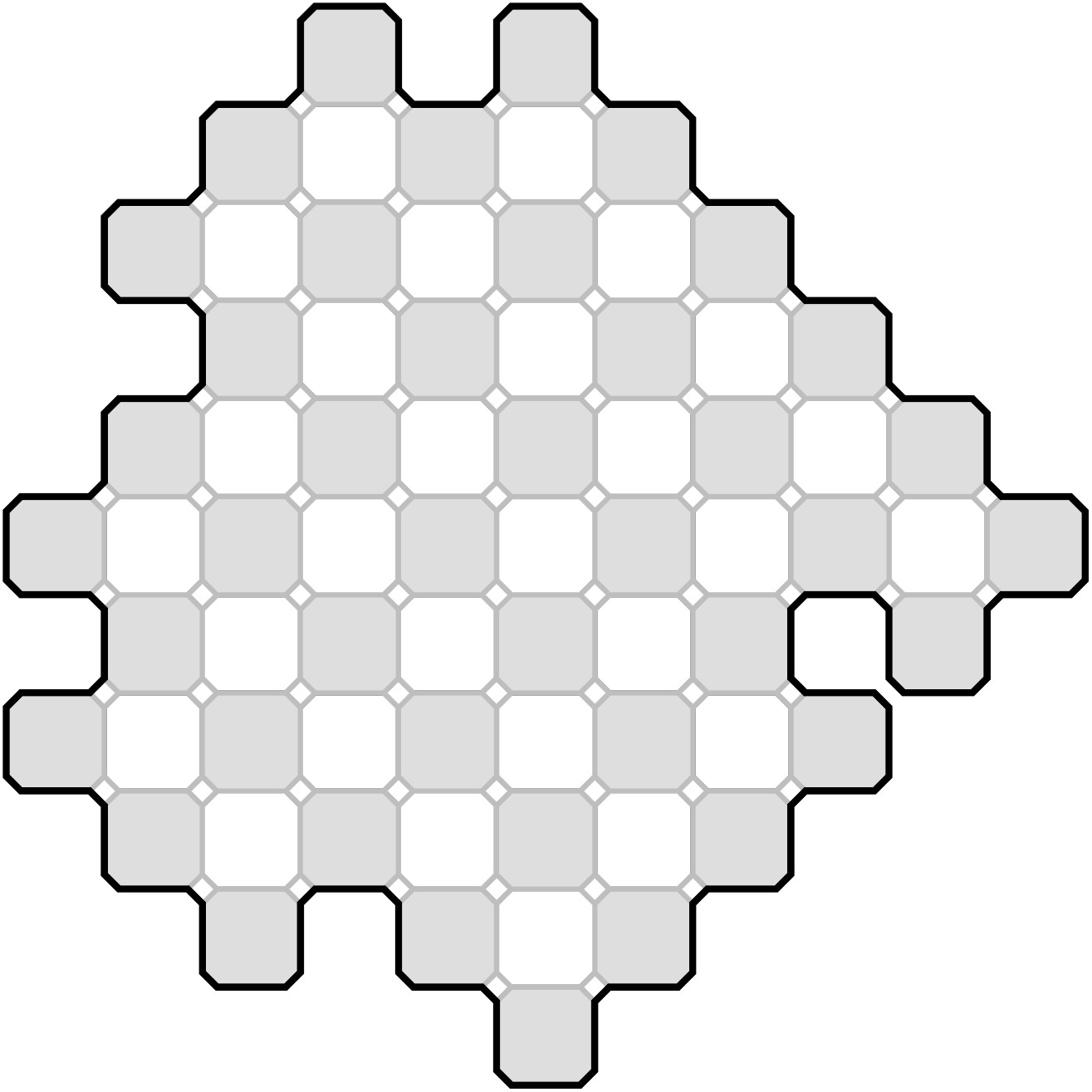}
}
%\hspace{0.5cm}
\subfigure[Loop representation of the random cluster configurations by drawing two quarters of octagons
next to open and dual-open edges.]
{
	\label{sfig: orientation and domain c}
	\includegraphics[scale=.2]
{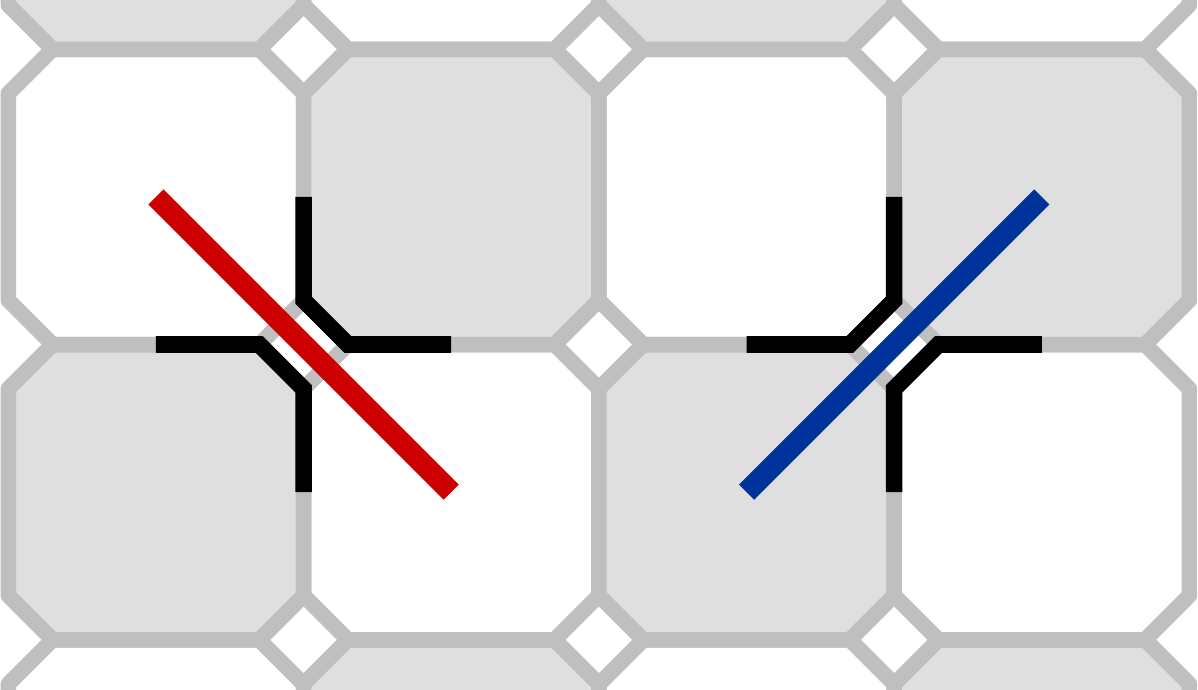}
}
\caption{The oriented lattice and a discrete admissible domain on it.}
\label{fig: orientation and domain}
\end{figure}

See Figure~\ref{fig: orientation and domain} for an example of such a domain.
The wired $\oddmdlattice$-domains are in one to one correspondence with non-empty finite subgraphs of $\sqlattice$
which are simply connected, i.e., they are graphs who have an unique unbounded face 
and the rest of the faces are squares.

\subsubsection{FK Ising model}

We consider the Fortuin--Kastelleyn model, also known as the random cluster model. 
On a finite graph $G$, the state of the model is a collection of open edges (the complementary set of edges are said to be closed) and
the probability of a state
is proportional to $q^{\# \text{ of clusters}} \, (p/(1-p))^{\# \text{ of open}}$, 
that is, 
the  probability given by the edge percolation model,
which gives weight $p$ to open edges 
and $1-p$ to closed edges, is weighted by $q$ per each open connected cluster on the graph.
Denote this probability measure by $\mu_{p,q}^1$.

Let $G$ be a simply connected subgraph of the square lattice $\sqlattice$
corresponding to a wired $\oddmdlattice$-domain. 
Consider the random cluster measure
$\mu=\mu_{p,q}^1$ of $G$ with all \emph{wired boundary conditions}
(the measure is conditioned assuming that all the edges of $G$ next to the boundary are open) 
in the special case of the critical FK Ising model,
that is, when $q=2$ and $p = \sqrt{2}/(1 + \sqrt{2})$. 
Also the planar dual of the random cluster configuration is distributed according to  
a critical FK Ising model. This dual model is defined on the dual graph $G^\circ$ of $G$
which is the (simply connected) subgraph of $\dsqlattice$ containing all vertices and edges
fully contained the domain, and 
satisfies free boundary conditions.
They have common loop representation on the corresponding
subgraph $G^\ddiamond$ of the modified medial
lattice $\ddmdlattice$. The loops are surrounding each open cluster both from its outside and from inside any of its holes.
Equivalently a quarter of an octagon is drawn on both sides of any open or dual-open edge,
see Figure~\ref{fig: orientation and domain}.

We call a collection of loops $\mathcal{L}=(L_j)_{j =1 ,\ldots  N}$ 
on $G^\ddiamond$ \emph{dense collection of non-intersecting loops} (DCNIL)
if 
\begin{itemize}
\item each $L_j \subset G^\ddiamond$ is a simple loop
\item $L_j$ and $L_k$ are vertex-disjoint when $j \neq k$
\item for every edge $e \in E^\diamond$ 
there is a loop $L_j$ that visits $e$. Here we use the fact that 
$E^\diamond$ is naturally a subset of $E^\ddiamond$.
\end{itemize}
We consider loop collections only modulo permutations.
Let the collection of all the loops in the loop representation be $\rndlpe=(\rndlp_j)_{j =1 ,\ldots  N}$. 
Then the set of all DCNIL is exactly the support of $\rndlpe$ and
for any DCNIL collection $\lpe$ of loops 
\begin{equation}\label{eq: def fk ising loop rep}
\mu(\rndlpe = \lpe) = \frac{1}{Z} (\sqrt{2})^{\text{\# of loops in }\lpe}
\end{equation}
where $Z$ is a normalizing constant.
In this paper, we mostly consider the loop representation \eqref{eq: def fk ising loop rep}
which could have been taken as the main definition of the FK Ising model.
Occasionally we refer the underlying random cluster configurations and use the property
that the loops separate open and dual-open clusters.

\subsection{Exploration tree of a loop ensemble}\label{ssec: intro exploration tree}

Suppose that we are given a wired $\oddmdlattice$-domain $\domain$ and
a DCNIL loop collection $\lpe=(\lp_j)_{j =1 ,\ldots  N}$ on $\domain$. We wish
to define a spanning tree which corresponds to $\lpe$ in a one-to-one manner
with an easy rule to recover $\lpe$ from the spanning tree.
We follow here the ideas of \cite{Sheffield:2009}.

\begin{figure}[tbh]
\centering
 	{\includegraphics[scale=.3]
{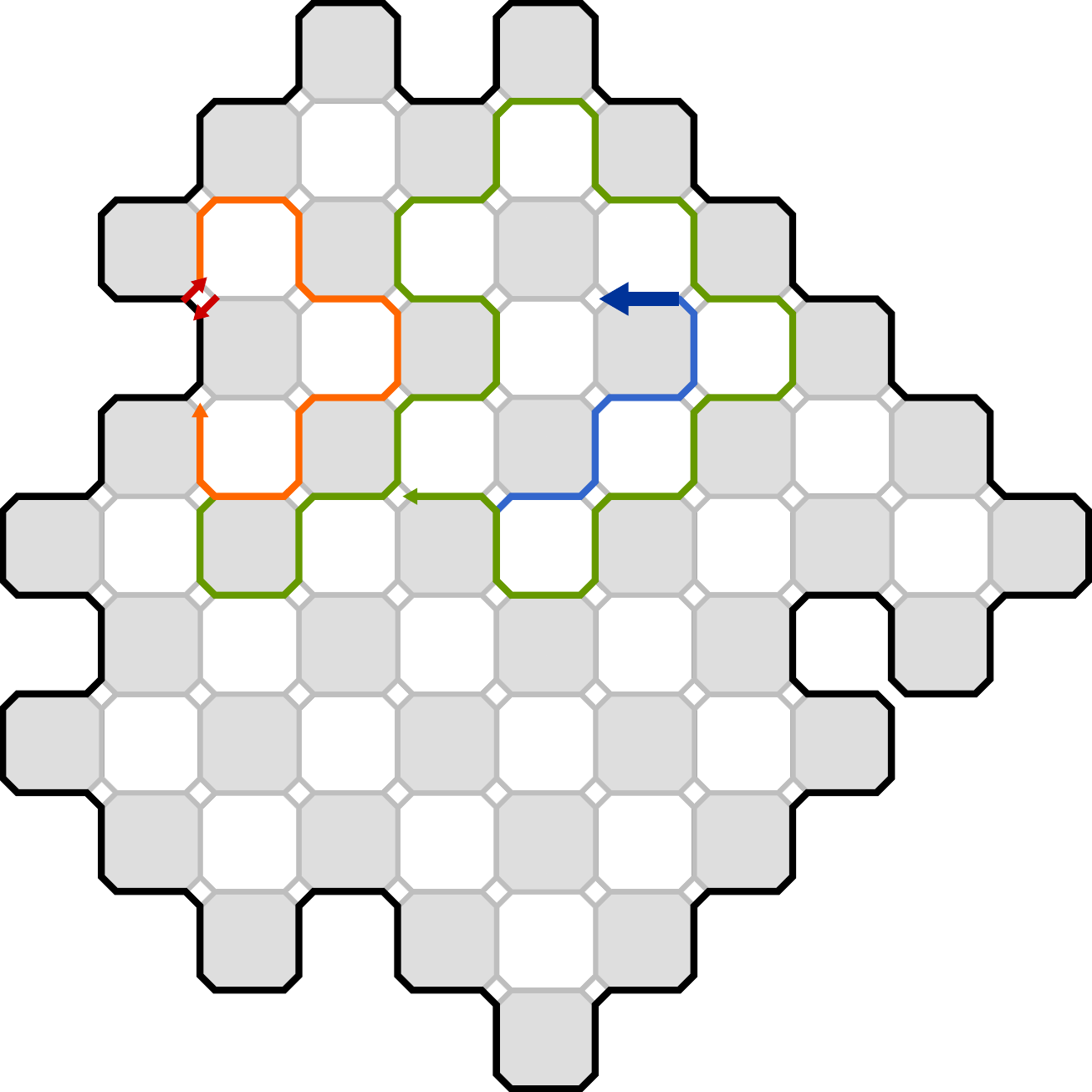}}
\caption{The branch of an exploration tree from $a$ to $e$. Here the square $S_1$ is 
between the red arrows $a$ and $b$ (pointing inwards and outwards, respectively)
and $e$ is the blue arrow.}
\label{fig: domain and a branch}
\end{figure}

\begin{figure}[tbh]
\centering
	{\includegraphics[scale=.3]
{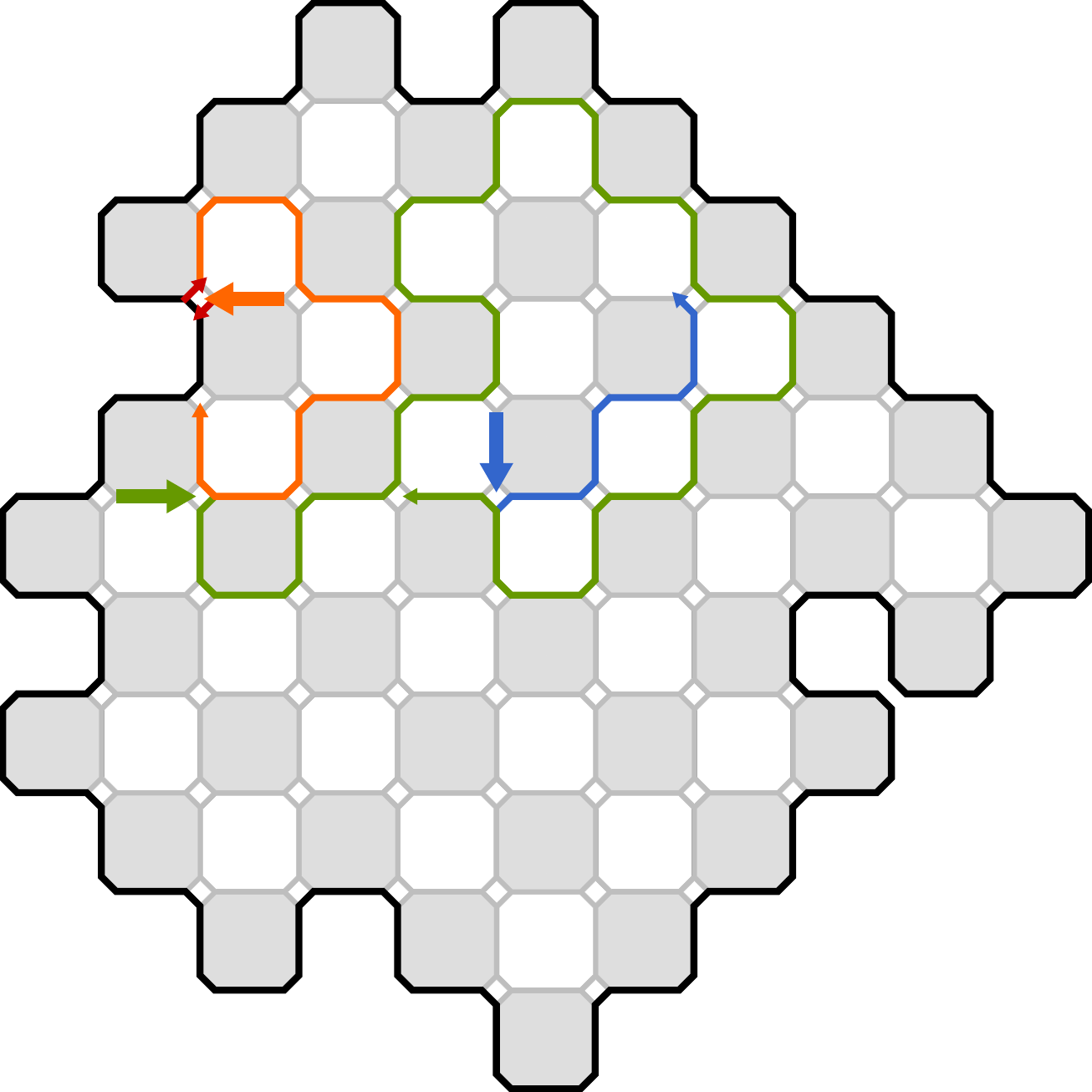}}
\caption{The target edges, when loops are recovered from the tree, are the thick colored arrows
in the picture.}
\label{fig: loops from tree}
\end{figure}

Select a small square $S_1$ next to the boundary. We can assume that it shares exactly one edge with
the boundary (if it shares two, it is a ``bottle neck'' --- a case we exclude
and which doesn't play any role in the continuum limit).
Let the edge in $S_1$ incoming to the domain be $a$ and the outgoing edge be $b$, see
also Figure~\ref{fig: domain and a branch} or Figure~\ref{fig: loops from tree}.
Let $e \in E(\odmdlattice)$ (that is, as an edge in $E(\oddmdlattice)$ it lies between
white and black octagon). Then define in the following way the
branch $\bran_e$ from the root $a$ to the target $e$.
\begin{itemize}
\item Cut open the loop $\lp_{j_1}$ that goes through the edge passing from the tail of $b$
to the head of $a$ by removing that edge. Follow from $a$ the $\lp_{j_1}$ until
the disconnection of $e$ and $b$ on the lattice. Suppose that it happens
on the small square $S_2$.
\item Let $n\geq 1$ and suppose that we have constructed the branch following the
loops $\lp_{j_k}$, $k=1,2,\ldots,n$ until we are at the square $S_{n+1}$ 
and on the loop $\lp_{j_n}$ and that the next step on the loop $\lp_{j_n}$ would
disconnect $e$ and $b$ on the lattice.
Instead of using the next edge on the loop $\lp_{j_n}$,
we use the other possible edge on $S_{n+1}$ (which is not
on any loop) and we arrive to a new (unexplored) loop $\lp_{j_{n+1}}$. Then we
follow that loop until disconnection of $b$ and $e$. Suppose that it happens at the
small square $S_{n+2}$. We continue this construction recursively.
\item The process ends when we reach $e$. Suppose that it happens on a loop $\lp_{j_{N'}}$.
Rename the loops in the sequence as $\lp_{j'_{e,k}}$, $k=1,2,\ldots, N'=N'(e)$,
and the small square sequences as $S'_{e,k}$.
\end{itemize}
This defines the simple lattice path $\bran_e$ from the root $a$ to the target $e$
which we call the \emph{branch of the exploration tree}. The collection
$\tree=(\bran_e)_e$ where $e$ runs over all edges $e \in E(\odmdlattice)$,
is called \emph{the exploration tree} of the loop collection $\lpe=(\lp_j)_{j =1 ,\ldots  N}$.
This construction is illustrated in Figure~\ref{fig: domain and a branch}.

\begin{remark}
It is equally natural to consider a setting 
the marked ``boundary points'' $a'$ and $b'$ are points in $V_\textnormal{mid}$.
Namely, for the incoming edge $a$ and outgoing edge $b$ of $S_1$, take $a'$
to be next and previously point in $V_\textnormal{mid}$ from $a$ and $b$, respectively,
by following the directed graph (these points are unique). Similarly $e$ could be replaced by 
its midpoint $w \in V_\textnormal{mid}$.
\end{remark}

When we consider $\tree=(\bran_e)_e$ as a collection of edges of $\oddmdlattice$
it forms a rooted spanning tree of the graph with
vertices $V(\oddmdlattice) \cap (\{a\} \cup \Omega)$ and all edges connecting
pairs of them.

We say that the branches $\bran_e$ (or rather their coupling)
are \emph{target independent} or local, in the sense that
the initial segments of $\bran_e$ and $\bran_{e'}$ are equal until they
disconnect $e$ and $e'$ on the graph.
Even the sequences $\lp_{j'_{e,k}}$ and $\lp_{j'_{e',k}}$
and on the other hand the sequences $S'_{e,k}$ and $S'_{e',k}$
agree until the disconnection.

The ``tree-to-loops''
construction is illustrated in Figure~\ref{fig: loops from tree}
and it is the inverse of the above ``loops-to-tree'' construction.
Each loop corresponds to exactly one small square where branching of the tree occurs.
Suppose that $e_1 \in E(\odmdlattice)$ is the incoming edge 
used by the branch to arrive
to the small square for the first time and $e_2 \in E(\odmdlattice)$ is the other
incoming edge (opposite to $e_1$ in the square). Then the loop is reconstructed
when we follow the branch to $e_2$ and keep the part after the first exit
from the small square and then closing the loop by adding the side of the
small square that goes from the head of $e_2$ to that exit point.

Finally let us emphasize the geometric characteristic of the branching point.
As it is illustrated in Figure~\ref{fig: loop tree correspondence},
any typical branching point $S_n$ in the scaling limit 
is uniquely characterized as been a ``$5$-arm point''
of a branch in the tree. That is, in the figure, the branch goes through or close
to the square $S_n$ so that the branch forms a ``$5$-arm figure'' ---
two blue, one gray and two green arms.

\begin{figure}[tbh]
\centering
\subfigure[A schematic illustration of the loops--tree correspondence.
The arcs with solid lines form a exploration path towards a branching point
(a corner of $S_n$ in the figure (b)). Thus
the last green part is one of the loops in the loop ensemble which can be identified
as the second generation loop around the point marked with cross.] 
{
   \adjustbox{trim={0\width} {0\height} {0\width} {0\height},clip}%
	{\includegraphics[scale=.5]
{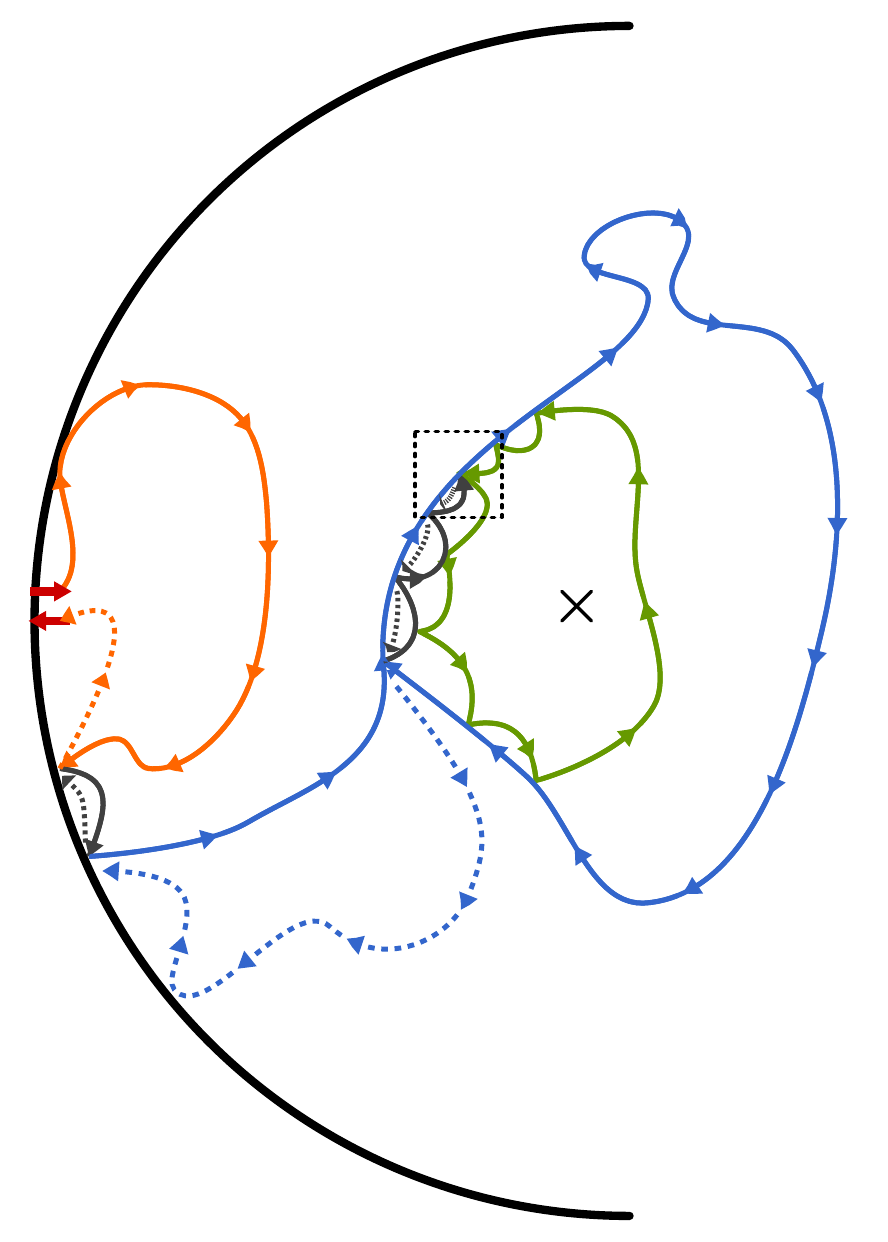}}
} 
\hspace{0.2cm}
\subfigure[Enlargement of the neighborhood of $S_n$ (the dashed box in the figure (a)).
The $5$-arms formed by $2$ blue, $1$ gray and $2$ green
arms emanating from the corners of the square $O_n$.]
{
   \adjustbox{trim={0\width} {0\height} {0\width} {0\height},clip}%
	{\includegraphics[scale=.3,angle=0]
{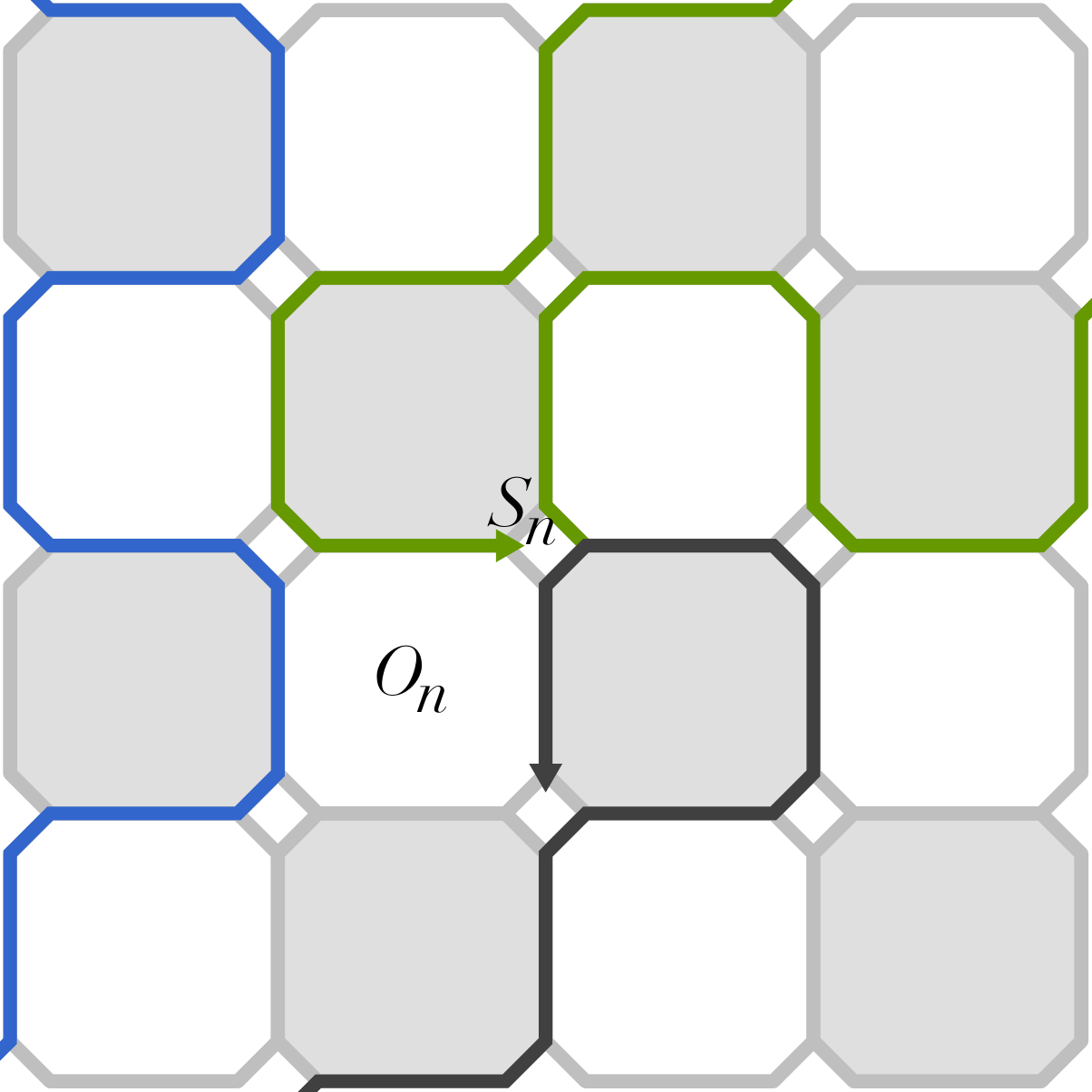}}
}
\caption{The correspondence between a loop ensemble and a tree and
the geometric ``$5$-arm property'' of a branching point.}
\label{fig: loop tree correspondence}
\end{figure}

\subsubsection{Notation for the scaling limit}

Let $\delta >0$.
Suppose that $\Omega_\delta$ is a wired $\delta \oddmdlattice$-domain
and $0 \in \Omega_\delta$. Take a small square that share exactly one edge with the boundary.
One of the edges of the square start from the boundary and one ends at the boundary.
Call them $a_\delta$ and $b_\delta$, respectively. 

We shall consider the random loop collection (loop ensemble) $\rndlpe_\delta$
on each $\Omega_\delta$, $\delta >0$, which are distributed
as the loop representation of FK Ising model (on the corresponding graph).
Define also $\rndttr_\delta$ to be the exploration tree of $\rndlpe_\delta$
with the root $a_\delta$.

When $\Omega_\delta$ is a sequence converging in the Carath\'eodory
sense with respect to $0$ as $\delta \to 0$,
the \emph{scaling limit} is the limit $\lim_{\delta \to 0} (\rndlpe_\delta,\rndttr_\delta)$
with respect to a suitable topology. See Section~\ref{sssec: metrices} below on the discussion
on the topology.

\subsection{SLEs, CLEs and conformal invariant scaling limits}

SLEs are a family of random curves with the conformal Markov property, which
states that conditionally on the initial segment of the random curve,
its continuation is distributed as a conformal
image of an independent sample of the same random curve.
Due to this property they are the natural candidates for the continuum scaling limits
of random curves in critical statistical physics models, which are expected to satisfy
conformal invariance.
Where SLEs are continuum models for
one single curve in statistical physics, their generalizations, CLEs, describe the full continuum limit
of the collection of all random curves in a statistical physics model.

From the perspective of conformal invariance and 
the Riemann mapping theorem, it is natural to
uniformize any simply connected domain by mapping it to a fixed domain. In this article,
we mostly work with the unit disc $\disc=\{ z\in\C \,:\, |z|<1\}$.

\subsubsection{Schramm--Loewner evolution: radial \slek{} and radial \sle{\kappa, \kappa-6}}
\label{sssec: sle}

The radial \slek{} and radial  \sle{\kappa, \kappa-6} are random curves in a simply connected domain
connecting a boundary point to an interior point. We give their definitions
in their natural reference domain $\disc$. The definitions
are extended to other domains as conformal images.

Let ${\gamma:[0,\infty)} \to {\overline{\disc} \setminus \{0\}}$ 
be a curve such that $\gamma(0) \in \partial \disc$.
Denote the connected component containing $0$ in $\half \setminus \gamma( (0,t])$
by $D_t$. Then $D_t$ is simply connected and there exists a unique
conformal and onto map $g_t : D_t \to \half$ such that 
$g_t(0)=0$ and $g_t'(0) > 0$, by the Riemann mapping theorem.
By moving to so-called capacity parametrization, we may assume
that $\gamma$ is parametrized such that $g_t(z) = e^t z + \OO(|z|^2)$
as $z \to 0$.\footnote{%
In fact, we make an assumption here that the capacity increases on any time interval
and that the capacity tends to $\infty$. The latter statement is equivalent to
$\liminf_{t \to \infty} |\gamma(t)| =0$.}
This map satisfies the Loewner equation in $\disc$
\begin{equation}\label{eq: le in disc}
\partial_t g_t(z) = - g_t(z) \frac{g_t(z) + U_t}{g_t(z) - U_t} , \qquad g_0(z)=z 
\end{equation}
for each $t \in [0,\infty)$ and $z \in \disc$. Such process is called the Loewner chain
generated by a curve $\gamma$.

More generally we can also start from \eqref{eq: le in disc} and solve it for a continuous driving term $t \mapsto U_t$.
Denote the solution by $g_t$. For each fixed $z$, the solution ceases to exists at time $t$ which is
the smallest $t$ such that $\liminf_{s \nearrow t} |g_s(z)-U_s|=0$. Then $D_t$ is defined naturally
as those $z \in \disc$ such that the solution $g_s(z)$ exists on a semi-open interval $[0,u)$ with $u>t$
($u$ can depend on $z$). It turns out that $D_t$ is an open set containing $0$ for all $t$ and
$g_t: D_t \to \disc$ is a conformal map such that $g_t(z) = e^t z + \OO(|z|^2)$ as $z \to 0$.
Such a solution is called the Loewner chain (of the unit disc Loewner equation~\eqref{eq: le in disc}) 
driven by the driving term $U_t$.
The trace of the Loewner chain is the limit $\gamma(t) = \lim_{\eps \searrow 0} g_t^{-1} ( U_t( 1 -\eps) )$
if it exists. If the trace exists for all $t$ and defines a continuous function $t  \mapsto \gamma(t)$,
then the Loewner chain is generated by the curve $\gamma$ also in the above sense.
It is standard to set $K_t = \overline{\disc \setminus D_t}$. The family of ``hulls'' $(K_t)_t$
increases continuously as a function of $t$ and is the closest generalization for the curve $\gamma$.

\begin{definition}
A random Loewner chain $(g_t,K_t)$ is a \emph{radial \slek}, 
if $U_t = \exp(\ii\sqrt{\kappa} B_t)$
for some Brownian motion $(B_t)_{t \in [0,\infty)}$.
\end{definition}

\begin{definition}
A random Loewner chain $(g_t,K_t)$ is a \emph{radial \sle{\kappa, \kappa-6}}
(we assume that $\kappa \in (4,8)$), 
if $U_t$ is the first coordinate of an adapted, continuous semimartingale $(U_t,V_t)$ 
such that
$U_0=V_0$,
\begin{equation}
V_t = V_0  - \int_0^t V_s \frac{V_s + U_s}{V_s - U_s} \de s
\end{equation}
for all $t$,
\begin{align}
\de U_t &= \ii \sqrt{\kappa} U_t \;\de B_t + \left( -\frac{\kappa}{2} U_t
  - \frac{\kappa-6}{2} U_t \frac{U_t + V_t}{U_t - V_t}  \right) \de t \end{align}
for all $t$ such that $U_t \neq V_t$
(for some Brownian motion $(B_t)_{t \in [0,\infty)}$)
and $\arg (V_t/U_t) \in [0,2\pi]$ is instantaneously 
reflecting at $0$ and $2 \pi$, meaning in particular that
$$\P\left[ \int_0^\infty \ind_{U_t=V_t} \de t = 0 \right]  = 1.$$ 
\end{definition}

It turns out that 
radial \slek{} and radial \sle{\kappa, \kappa-6}  have continuous traces and are thus 
generated by curves. However, such statement is not needed below, since a side product
of our proofs is that the convergence is strong enough to show that
the limiting Loewner chains are generated by curves. More specifically, if
the curve $\gamma_{\delta}$ in $\disc$ is defined as the conformal image of
the discrete curve in $\domain_\delta$ under a transformation to $\disc$
and parametrized by the d-capacity (i.e. $\log g_t'(0) = t$),
then the sequence of curves $\gamma_{\delta}$ in distribution in the topology of
continuous functions to a random curve $\gamma$. It turns out
that this convergence implies that the convergence takes place simultaneously
as Loewner chains, see \cite{Lawler:2008wf,Kemppainen:tb}.

The chordal and radial \sle{\kappa, \kappa-6} processes only differ by 
the fact that the target point for the former process is on the boundary while
the one for the latter process is in the bulk. Their laws until the disconnection
of the two alternative target points are the same, and after the disconnection
the processes turn towards their own target points.
This follows since the sum of ``$\rho$'s'' is equal to
 $\kappa-6$ and hence there is no force applied by the marked points $\infty$ or $0$
(in $\half$ and $\disc$, respectively). This is the target independence 
or locality property of
SLE$(\kappa, \kappa-6)$.
This property is not valid for the chordal and radial SLE$(\kappa)$ 
whose laws are different (though absolutely continuous with respect 
to each other on appropriately chosen time intervals);
see \cite{Schramm:2005ty} for the transformation rule between the upper half-plane and
the unit disc.

\subsubsection{Conformal loop ensembles}

CLEs are random collections of loops in simply connected domains. We give below
first their axiomatic definition. In the main bulk of the article, we rely on their construction
via branching SLE processes rather than the axiomatic definition.

Suppose that we are given a family of probability measures $(\mu^\domain)_\domain$ where
$\Omega$ runs over simply connected domains and
$\mu^\domain$ is the law of a random loop collection on $\domain$. If $\rndlpe = ( \rndlp_j )_j$
is distributed according to $\mu^\domain$, we suppose that almost surely
each loop $\rndlp_j$ is simple, $\rndlp_i \cap \rndlp_j = \emptyset$ when $i \neq j$
and they satisfy the following properties:
\begin{itemize}
\item{} {(\it Conformal invariance (CI))} 
  If $\varconfmap:\Omega \to \C$ is conformal and $\varconfmap^*$ is its pushforward map, then
  $\varconfmap^* \mu^\Omega = \mu^{\varconfmap(\Omega)}$.
\item{} {(\it Domain Markov property (DMP))} If $\Omega' \subset \Omega$ is
a simply connected domain, $J'$ is the set indices $j$ 
such that $\rndlp_j \cap (\Omega \setminus \Omega') \neq \emptyset$
and $\tilde \Omega$ is equal to $\Omega' \setminus \bigcup_{j \in J'} 
  \overline{\operatorname{int}(\gamma_j)}$,
then the law of $(\gamma_j)_{j \notin J'}$ is equal to $\mu^{\tilde \Omega}$.   
\end{itemize}
If the collection $\rndlpe = ( \rndlp_j )_j$ satisfy these properties, we
call it \emph{conformal loop ensemble} (CLE).

It turns out that loops in CLE's are SLE-type curves;  
see Section~1 of \cite{Sheffield:2012kw} for several formulations of this kind of a result. 
A given CLE corresponds to \slek{} with a unique $\kappa \in (8/3,4]$.
We use the notation \clek{} for the CLE that corresponds to \slek{}.
See also \cite{Sheffield:2012kw} for uniqueness statement on CLE's.

A third view that we adopt to CLE is the branching \sle{\kappa,\kappa-6}
construction of \clek, $\kappa \in (8/3,8)$, which allows the extension of the definition
to values $\kappa \in (4,8)$ which is highly relevant for this article.
This process is a collection of curves $\gamma_z$ from the root $a \in \partial \disc$
to the target $z$, where $z$ runs over all points in $\disc$.

\begin{definition}
The random collection of curves $(\gamma_z)_z$ is a \emph{branching
\sle{\kappa,\kappa-6}}, if
the law of $\gamma_z$ is the radial \sle{\kappa,\kappa-6} from $a$ to $z$ and
moreover the curves are coupled so that for each $z \neq z'$ it holds
that $\gamma_z$ and $\gamma_{z'}$ are equal until the disconnection
of $z$ and $z'$ by $\gamma_z$ (or $\gamma_{z'}$). 
\end{definition}

A tree in graph theory is a connected graph without any cycles, or equivalently
a graph such that any pair of points is connected by a unique simple path.
In the same spirit, it is natural to say that the branching
\sle{\kappa,\kappa-6} forms a tree: from the root $a$ to any (generic) point $z$
there is a unique path $\gamma_z$ and between any (generic) points $z \neq z'$
the unique path follows the reversal of $\gamma_z$ to the \emph{branching point}
of $\gamma_z$ and $\gamma_{z'}$ and then $\gamma_{z'}$ from that
point to $z'$.

\subsubsection{Convergence of curves and curve collections}\label{sssec: metrices}

In this subsection,
we present first the topology for the convergence for 
branches and trees and then for loops and loop ensembles.

{\it (Metrics for branches and trees)}
Consider a triplet $(\domain,\confmap; \bran)$
where
\begin{itemize}
\item $\Omega$ is a simply connected domain
\item $\confmap:\Omega \to \disc$ is a conformal and onto map
\item $\bran : [0,1] \to \overline{\Omega}$ is a curve 
such that there exists a curve $\todisc \bran$ in $\overline{\disc}$
parametrized by the d-capacity such that $\confmap \circ \bran$
and $\todisc \bran$ are equal up to a non-decreasing 
reparametrization.
\end{itemize}

Define using the supremum norm a metric for the d-capacity parametrized curves 
\begin{equation}
\decc ( \bran_1,\bran_2) =
\decc \big( (\domain_1,\confmap_1; \bran_1), (\domain_2,\confmap_2; \bran_2) \big)
\dd = \Vert \todisc {(\bran_1)} - \todisc {(\bran_2)} 
        \Vert_{\infty,[0,\infty)} .
\end{equation}

\begin{definition}
A \emph{rooted tree} $\ttr = (x_0;(\bran_x)_{x \in S})$ is pair such that $x_0$ is a point
called \emph{root} and $(\bran_x)_{x \in S}$ a set of curves starting at $x_0$ indexed by 
a set of points $S$ so that $x \in S$ is the other endpoint of $\bran_x$.
\end{definition}

Define a metric for trees as 
\begin{align}
\dettr (\ttr_1,\ttr_2) &=
\dettr \big( (\domain_1,\confmap_1; \ttr_1), 
    (\domain_2,\confmap_2; \ttr_2) \big) \nonumber \\
   &\dd= \max\left\{ \sup_{\bran_1} \inf_{\bran_2} \decc ( \bran_1,\bran_2)  ,
      \sup_{\bran_2} \inf_{\bran_1} \decc ( \bran_1,\bran_2) \right\}.
\end{align}
where $\bran_k$ runs over all the branches of $\ttr_k$, for $k=1,2$.
This is the familiar Hausdorff metric for bounded closed sets.

{\it (Metrics for loops and loop ensembles)}
Similarly we define metrics for loops and loop ensembles.
The difference is that there are no marked points for loops and
thus there is no natural starting or ending point and we cannot describe
it in a canonical way with Loewner evolutions. Thus it makes sense to
define in the following way. Let
\begin{equation}
\delp( \lp_1, \lp_2 ) 
  = \delp \big( (\domain_1,\confmap_1; \lp_1), 
    (\domain_2,\confmap_2; \lp_2) \big)
  = \inf_{f_1,f_2} \Vert f_1-f_2 \Vert_{\infty}
\end{equation}
where $f_k$ runs over all parametrizations of $\confmap_k \circ \lp_k$.
The metric $\delpe$ for loop ensembles is 
defined to be the Hausdorff metric for bounded, closed sets of loops.

\subsection{On the structure of the proof of Theorem~\ref{thm: main thm}}

The general approach which we take in the proof of Theorem~\ref{thm: main thm},
is to use of compactness and uniqueness. We show that the sequence of probability
measures on the metric space of trees is precompact with respect to
the weak convergence of probability measures. This implies that
there are weakly convergent subsequences of the probability measures.
Then we will show that the limit of the subsequence is unique. Thus the entire sequence
converges.

The compactness part of the proof is based on estimates implying H\"older regularity
of the path collections. The estimates are based on probability bounds on annulus crossings
of the same type as in \cite{Kemppainen:2012vma}.
Due to the fact that we need to work with mixed boundary conditions (of the FK Ising model)
we need the strong version of the probability estimates for crossing events,
which was established in \cite{Chelkak:2016jw}.

The uniqueness part is based on the holomorphic observables of FK Ising model developed in \cite{Smirnov:2010ie}
and extended here to be adapted to the exploration tree setting.
We don't use here any a priori knowledge on the processes at hand, such as
the fact that the FK Ising interfaces in the chordal setting converge to the chordal \sle{\frac{16}{3}}
(see \cite{Chelkak:2014gs})
or that the radial \sle{\kappa,\kappa-6} satisfies the target independence property \cite{Schramm:2005ty,Sheffield:2009}.
The fact that a single branch of the exploration tree converges to the radial
\sle{\frac{16}{3},-\frac{2}{3}} is merely a product of a calculation based on the observables.

% !TEX root = Conformal_invariance_in_random_cluster_models_II-full_scaling_limit.tex

\section{The discrete holomorphic observable and its scaling limit}

\subsection{The discrete observable}\label{ssec: def int observable}

Let us consider FK Ising model on a square lattice with a lattice mesh parameter $\delta>0$.
In that setup,
suppose that we are given a Dobrushin domain 
$\domain = \domain_\delta$ with an incoming edge $\medge_i$ and an outgoing edge $\medge_o$,
see Figure~\ref{fig: dobrushin domain} for the definition.
Denote the set of directed edges of the medial lattice by $E(\oddmdlattice)$. 
As usual, a directed edge 
$\medge \in E(\oddmdlattice)$ is given by an ordered pair $(\medge_-,\medge_+) \in (V(\oddmdlattice))^2$.

\begin{figure}[tbh]
\centering
	{\includegraphics[scale=.3]
{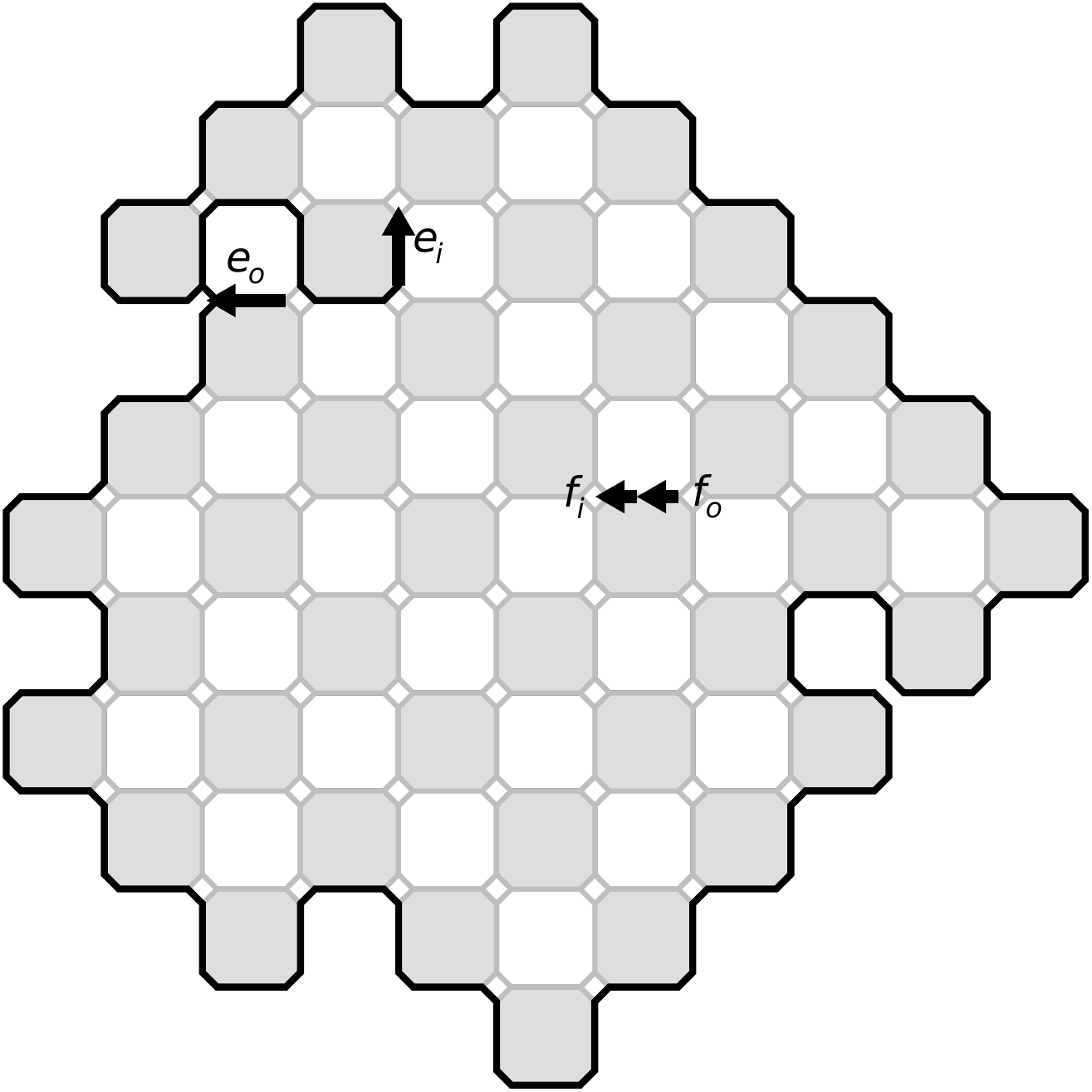}}
\caption{A Dobrushin domain has two boundary arcs, one with wired (black) boundary and the other
with free (white) boundary. We call the edges where the boundary conditions change
$\medge_i$ and $\medge_o$; later $a$ and $b$, respectively. We call a fixed target edge $\varmedge$
and its halves $\varmedge_o$ and $\varmedge_i$; later we will also notation $w$ for the target.}
\label{fig: dobrushin domain}
\end{figure}

We fix an interior edge $\varmedge \in E(\oddmdlattice)$ which we split into 
two halves $\varmedge_o$ and $\varmedge_i$ which are outgoing and incoming edges, respectively,
in the new graph $G$. At first, $(\varmedge_o)_+$ and $(\varmedge_i)_-$ are not connected by an edge.

\begin{definition}
We define two enhanced graphs 
$\Gouter$ and $\Ginner$
by adding an edge between 
$(\medge_o)_+$ and $(\medge_i)_-$ or between $(\varmedge_o)_+$ and $(\varmedge_i)_-$, respectively.
\end{definition}

Here $\extarc{\medge_1}{\medge_2}$ denotes an ``external arc'', that is, an edge outside
of the graph $G$ that is added, which affects the boundary conditions of the model
in the sense that if that external edge closes a loop in the loop configuration
than an extra weight $\sqrt{2}$ is given to the configuration.
In contrast,  
${\intarcp{\medge_1}{\medge_2}{\medge_3}{\medge_4}}$ would be an internal arc configuration, which is 
a connection pattern in the loop configuration (of the FK loop representation) and which can be interpreted
as an event.

Next we define the central tool of this article, namely, the discrete spin holomorphic observables; 
in fact, two of them,
the one which was introduced in the first paper \cite{Smirnov:2010ie} and its
variant for the branch of the radial exploration tree.
For presenting the definition, 
let $\hat e = (e_+ - e_-)/|e_+ - e_-|$ and
$\vartheta(e) = (\hat e)^{-1/2}$
for any oriented edge $e = (e_-,e_+)$.
Set
\begin{equation}
\lambda = e^{-\ii \frac{\pi}{4} } .
\end{equation} 
We choose the branch of the square root so that 
$\vartheta(e) \in \{1,\lambda,\bar{\lambda},\ii\}$
for all $e \in \odmdlattice$.
For $e \in \odmdlattice$, $\vartheta(e)$ are the complex factors
associated to edges which is used for instance in 
\cite{Smirnov:2010ie}.

On the graph $\Gouter$, the loop representation configuration consists of loops and a single open curve $\gamma$
which runs from $\varmedge_i$ to $\varmedge_o$. Notice that either a loop
or the curve uses the external arc $\outeredge$.
Define a function $F = F_{\outeredge}: E(\oddmdlattice) \to \C$ by
\begin{equation}\label{eq: def int observable}
F (\medge) = 
   (- \vartheta(\varmedge)) \; \E_{\outeredge} 
   \left( \ind_{\medge \in \gamma} \; e^{- \frac{\ii}{2} W(\varmedge_o,\medge) }  \right)
\end{equation}
where $W(\varmedge_0,\medge)$ is the winding along $\gamma^{\leftarrow}$ 
from $\varmedge_0$ to $\medge$. Here 
the expected value is taken with respect to the critical FK Ising
loop measure on the planar graph $\Gouter$.
The measure is supported
on loop configurations with an path $\gamma$ from $\varmedge_i$ to $\varmedge_0$ 
and in the formula~\eqref{eq: def int observable},
$\gamma^\leftarrow$ denotes the reversal of $\gamma$. 
There are two natural ways (namely, by following one of the boundary arcs) 
to define the winding along the arc from $(\medge_o)_+$ to $(\medge_i)_-$ 
but both choices lead to the same value for $F$: namely,
the difference in $W(\varmedge_0,\medge)$ is $\pm 4 \pi$ hence $e^{- \frac{\ii}{2} W(\varmedge_o,\medge) }$ is well-defined.
The constant in front of the expectation value is to ensure that $F(\varmedge_i)=-F(\varmedge_o)=+\vartheta(\varmedge)$.

The observable~\eqref{eq: def int observable} is given by calculating the number of left and right turns from $\varmedge_o$ to $\medge$
along $\gamma^\leftarrow$ and weighting the partition function by 
$-\vartheta(\varmedge)\, \lambda^\text{signed number of turns}$.

On the graph $\Ginner$, the loop representation configuration consists of loops and a single open curve $\tilde \gamma$
which runs from $\medge_i$ to $\medge_o$.
The observable introduced in the first paper \cite{Smirnov:2010ie}
is denoted here by $\tilde F = \tilde F_{\inneredge}:  E(\oddmdlattice) \to \C$  and defined as
\begin{equation}\label{eq: def usual observable}
\tilde F(\medge)  = 
   \vartheta(\medge_o) \, \E_{\inneredge} 
   \left( \ind_{\medge \in \tilde \gamma} \; e^{- \frac{\ii}{2} W(\medge_o,\medge) }  \right)
\end{equation}
where $W(\medge_0,\medge)$ is the winding along $\tilde{\gamma}^{\leftarrow}$ from $\medge_0$ to $\medge$,
$\tilde\gamma$ is the path from $\medge_i$ to $\medge_0$ on  $\Ginner=G$
and $\tilde{\gamma}^\leftarrow$ is the reversal of $\tilde{\gamma}$.
Notice that the difference of \eqref{eq: def int observable} 
and \eqref{eq: def usual observable} is only in the graph being used and how that affects
which part of the loop configuration is counted as the curve.
With the constant in front of the expected value, $\tilde F$ satisfies the ``correct'' complex phase 
on the lattice and in particular, $\tilde F(\medge_o) = \vartheta(\medge_o)$.

Define for each $v \in \domain_\delta \cap V(\delta\dmdlattice)$, $F(v) = F(e_E) +F(e_W)$
and $\tilde F(v) = \tilde F(e_E) +\tilde F(e_W)$ where $e_E$ and $e_W$
are the two horizontal edges whose endpoint is $v$.

\subsubsection{Discrete holomorphicity of the observables}

The following is a central definition when passing to the continuum scaling limit
\cite{Smirnov:2010ie,Chelkak:2011by,ChelkakSmirnov:2012,Kemppainen:2015vu}.
It defines a strong version of discrete holomorphicity.

\begin{definition}
For an edge $e = \{v,w\} \in E(\delta\dmdlattice)$
and a function $F$ defined on a subset of $V(\delta\dmdlattice)$,
$F$ is \emph{spin holomorphic at $e$}
if $F$ is defined on on $v$ and $w$ and 
\begin{equation}\label{eq: spin holomorphic}
\Proj[ F(v) ; \vartheta(e)\R ] = \Proj[ F(w) ; \vartheta(e)\R ] .
\end{equation}
Here $\Proj[ z ; \alpha \R ] = \alpha \real(z \bar \alpha) = (z +  \bar z \alpha^2)/2$
for any $\alpha \in \C$ with $|\alpha|=1$.
\end{definition}

In \cite{Smirnov:2010ie}, it was shown that $\tilde F$ is spin holomorphic
and it was shown that it satisfies a boundary value problem with a unique solution.
In that article, it was also demonstrated that for passing to the limit with the boundary value problem 
of an observable $F$ (which can be either one of above $\tilde F$ or $F$) arising
from the FK Ising model
is
easier for a function 
$H$ which is the discrete version of  $\frac{1}{2}\imag \int F^2 \de z$ defined as
\begin{equation}\label{eq: H def}
H(B)- H(W)= |F(e)|^2
\end{equation}
where $B$ and $W$ are neighboring black and white squares and $e$ is the common edge of the squares.\footnote{%
One can verify that $H(B_2) - H(B_1) = \frac{1}{2} \imag [ \sum_k F( \frac{1}{2}(z_k+z_{k+1}) )^2 (z_{k+1}-z_k)]$
where $z_k$ is a path of black vertices connecting $B_1$ to $B_2$.
Notice also that in the continuum, the formula $H = \frac{1}{2}\imag \int F^2 \de z$ for a holomorphic $F$
can be inverted by $F=2 \sqrt{ \ii \cd H }$.}
Then $H$ is approximately discrete harmonic, see \cite{Smirnov:2010ie} Section~3,
and satisfies Dirichlet boundary values.

\subsubsection{The martingale property of the observables}

Let $\domain$ be a Dobrushin domain  with marked edges $\medge_i$ and $\medge_o$.
Let $x \in \Vtarget \cap \domain$ and
$\therndbran(t)$, $t \in \{0,\frac{1}{2},1,\frac{3}{2},\ldots\}$ 
be the branch of the FK Ising exploration tree in $\domain$
from the root $\medge_i$ to $x$. Here the parametrization of $\therndbran$ is by lattice steps so that
at $T[k , k +\frac{1}{2}]$ is an edge of a small square and $T[k+\frac{1}{2},k+1]$
is an edge shared by two ocatagons for all non-negative integers $k$.

Denote $\domain \setminus T(0,t]$ by $\domain(t)$. Notice that $\domain(t)$
is a Dobrushin domain for integer $t$. Denote the law of the FK Ising model in $\domain(t)$
by $\P^{\domain(t)}$. Similarly as above, denote by $\gamma$ the loop going through $x$ and
by $\tilde \gamma$ the loop going through $\medge_i$.

Let the process $(\hat N(t))_\tinz$ be defined by the formulas
\begin{equation}
\hat N(t) = \P^{\domain(t)}_{\outeredge} [ (\outeredge) \in \gamma ]= \P^{\domain(t)}_{\outeredge} [ \tilde \gamma= \gamma ] .
\end{equation}
Notice that we can equivalently write $\hat N(t)= |\tilde{F}^{\domain(t)}(\varmedge)|$.
Here $\Ztime = \Z \cap [0,\infty)$.
Denote the loop being explored at time $t$ by $\tilde \gamma_t$.
It is standard to notice that $\hat N(t)$ can be written as a conditional expected value with respect to
the $\sigma$-algebra $\F^T_t$ generated by $\rndbran(u)$, $u \leq t$, as
\begin{equation}
\hat N(t) = \P^{\domain}_{\outeredge} [ \tilde \gamma_t= \gamma \,|\, \F^T_t ]
\end{equation}
When moving from $t$ to $t+1$, if there are two possible ways that $\rndbran$ can continue,
then necessarily $\tilde \gamma_{t+1} =\tilde \gamma_t$. Thus at those times it holds
that $\E^{\domain}_{\outeredge}[ \hat N(t+1) \,|\, \F^T_t ] = \hat N(t)$
by properties of the conditional expected value. We call this identity the martingale property
of $(\hat N(t))_\tinz$.

To extend the martingale property to those times that $\rndbran$ has only one choice to continue,
we introduce a set of i.i.d. $\pm 1$-valued random variables $\xi_v$, $v \in \dmdlattice \cap \Omega$, 
such that $\E[\xi_v] = 2 \P[\xi_v] -1 = \sqrt{2} - 1$.
\begin{equation}\label{eq: def discrete N}
N(t) = \left( \prod_{k} \xi_{\therndbran(k)} \right) \, \hat N(t)
\end{equation}
where  the product is over
$k=0,1,2,\ldots,t-1$ such that $\therndbran(k)$ is a branching point.

Let the process $(M(t))_\tinz$ be defined by
\begin{equation}
M(t) = F^{\domain(t)} (e) .
\end{equation}
The following result is almost immediate. For the further details of the proof, see \cite{Kemppainen:2015vu}.

\begin{proposition}
The processes $(M(t))_{t \in \{0,1,2,\ldots\}}$ and $(N(t))_{t \in \{0,1,2,\ldots\}}$
are discrete time martingales.
\end{proposition}

\subsubsection{Notation for the scaling limit of the observables}

In addition to the notation $\domain_\delta$, let us introduce $a_\delta$, $b_\delta$ and $w_\delta$
for the heads of the edges $\medge_i,\medge_o$ and $\varmedge_o$, respectively.
Then $(\domain_\delta,a_\delta,b_\delta,w_\delta)$ is a domain in the complex plane with two marked boundary points 
and a marked interior point, in that order.

For each pair $(\domain,w)$ where $\domain$ is a simply connected domain ($\neq \C$) and $w \in \domain$, 
let $\confmap_{(\domain,w)}: \domain \to \disc$ be the unique conformal and onto map satisfying 
$\confmap_{(\domain,w)}(w)=0$ and $\confmap_{(\domain,w)}'(w) > 0$.

\begin{definition}
We say that $(\domain_\delta,a_\delta,b_\delta,w_\delta)$ \emph{converges to} $(\domain,a,b,w)$, 
where $a,b$ can be prime
ends (generalized boundary points),
\emph{in the Carath\'eodory sense} if the sequence of conformal maps $\confmap_{(\domain_\delta,w_\delta)}^{-1}$ 
converges uniformly on compact subsets of $\disc$ to the conformal map
$\confmap_{(\domain,w)}^{-1}$ as $\delta$ tends to zero and in addition
$\lim_{\delta \to 0} \confmap_{(\domain_\delta,w_\delta)}(a_\delta) = \confmap_{(\domain,w)}(a)$
and $\lim_{\delta \to 0} \confmap_{(\domain_\delta,w_\delta)}(b_\delta) = \confmap_{(\domain,w)}(b)$.
In the last two equations, the values of the right-hand sides exist as boundary points of $\disc$.
\end{definition}

For a fixed sequence of domains $(\domain_\delta,a_\delta,b_\delta,w_\delta)$,
denote the observables in $(\domain_\delta,a_\delta,b_\delta,w_\delta)$
as $F_\delta$ and $\tilde F_\delta$.

\subsection{The scaling limit of the observable}

\begin{figure}[tbh]
\centering
\includegraphics[scale=.4]%
    {ciircmii-p9}
\caption{Boundary values on arcs $uv$ and $vu$ and the pole of $H_\disc$ at $0$. The type of boundary
is illustrated by the grey shading which indicates the wired boundary.}
\label{fig: H boundary conditions}
\end{figure}

For $\coeffa,\coeffb \in \R$, $z \in \disc$ and $u,v \in \partial \disc$, define
\begin{equation} \label{eq: disc int observable}
F_\disc (z ; u,v) = \frac{2}{\pi} \,
  \sqrt{ \frac{1}{z^2} - 1 + i\,\coeffa\,\frac{1}{z} + \coeffb\, \left( \frac{1}{z-u} - \frac{1}{z-v} \right)  }
\end{equation}
where $\coeffa \in \R$ and $\coeffb>0$. Define also
\begin{equation} \label{eq: usual disc int observable}
\tilde F_\disc (z ; u,v) = \sqrt{\frac{2}{\pi}} \,  
  \sqrt{  \frac{1}{z-u} - \frac{1}{z-v}   }
\end{equation}

\begin{remark}\label{rem: scaling limit observables}
Consider a function
\begin{equation}
H_\disc=\frac{1}{2}\imag \int F_\disc^2 \,\de z .
\end{equation}
Then $H_\disc$ is constant on both arcs $uv$ and $vu$. Let those constant be equal to $\zeta$ and $\xi$,
respectively. Here $uv$ is the counterclockwise arc on $\partial \disc$ from $u$ to $v$
and $vu$ from $v$ to $u$. Then $\xi - \zeta = 2\beta /\pi > 0$.
Moreover, from \eqref{eq: disc int observable}
it follows that
\begin{equation}
H_\disc (z) = 
\frac{2}{\pi^2} \imag \left( -\frac{1}{z}\right)  + \OO\left( \log\frac{1}{|z|}\right)
\end{equation}
as $z \to 0$ . See also Figure~\ref{fig: H boundary conditions}.
\end{remark}

\begin{theorem}\label{thm: conv obs}
Suppose that $(\domain_\delta,a_\delta,b_\delta,w_\delta)$ converges to $(\domain,a,b,w)$
and let $\confmap: \domain \to \disc$ be the conformal and onto map such that $\confmap(w)=0$ and $\confmap'(w)>0$.

Suppose first that $\varmedge$ is a horizontal edge pointing to the east.
As $\delta$ tends to zero, $\delta^{-1} F_\delta$ and $\delta^{-1/2} \tilde F_\delta$ converge 
to the
scaling limits $F$ and $\tilde F$, respectively, which are uniquely determined by
\begin{align}
F(z; a,b,w) &= \sqrt{ \confmap'(w) \, \confmap'(z) } \, F_\disc( \confmap(z); \confmap(a), \confmap(b) ) \\
\tilde F(z; a,b,w) &= \sqrt{ \confmap'(z) } \, \tilde F_\disc( \confmap(z); \confmap(a), \confmap(b) )
\end{align}
with $\coeffa$ and $\coeffb$ given in terms of $\arg u <\arg v<\arg u + 2\pi$ where $u=\confmap(a)$ and $v=\confmap(b)$
as
\begin{align}
\coeffa &=  2 \cos \left( \frac{\arg v-\arg u}{2} \right) 
   \label{eq: def coeffa}  \\
\coeffb &= 2 \sin \left( \frac{\arg v-\arg u}{2} \right) \, \cos^2 \left( \frac{\arg u+\arg v + \pi}{4} \right) .
   \label{eq: def coeffb}
\end{align}
The degenerate cases $a=b$ are obtained as limit of the formulas~\eqref{eq: def coeffa} and
\eqref{eq: def coeffb} as  $\arg v - \arg u$ tends to $0$ or $2\pi$.

For general direction of $\varmedge$, the same claim holds except that
\begin{align}
F(z; a,b,w) &= \vartheta(\varmedge) 
  \sqrt{ \confmap'(w) \, \confmap'(z) } 
  \, F_\disc\left( (\hat \varmedge)^{-1}\confmap(z) ; 
         (\hat \varmedge)^{-1}\confmap(a) , (\hat \varmedge)^{-1}\confmap(b)  \right) 
\end{align}
and $u$ and $v$ are replaced by $u=(\hat \varmedge)^{-1}\confmap(a)$ and $v=(\hat \varmedge)^{-1}\confmap(b)$
in \eqref{eq: def coeffa} and \eqref{eq: def coeffb}.
\end{theorem}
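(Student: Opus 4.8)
The plan is to derive the convergence from the known convergence of the fermionic observable of the first paper \cite{Smirnov:2010ie} (as upgraded to full conformal covariance in \cite{Chelkak:2014gs}) together with an identification of the correct combination of such observables that produces the extra rational-function terms with coefficients $\coeffa,\coeffb$. The key observation is that $F_\delta$ is an observable for the \emph{enhanced} graph $\Gouter$, in which the marked interior edge $\varmedge$ has been turned into a pair of boundary half-edges and an external arc has been added between $(\medge_o)_+$ and $(\medge_i)_-$. Thus $F_\delta$ and $\tilde F_\delta$ are \emph{the same type} of discrete object — a fermionic observable in a Dobrushin-type domain — but living on two different graphs, one having the "cut" at $\varmedge$ and one having the "cut" at $\medge$. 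First I would record the discrete contour-integral/primitive structure: the function $H_\delta=\operatorname{Im}\int F_\delta^2$ is (discrete) harmonic in the appropriate sense, is locally constant on the two boundary arcs determined by $\medge_i,\medge_o$, has a prescribed jump across them, and — crucially — has a logarithmic singularity at $w_\delta$ coming from the fact that near $\varmedge$ the domain $\Gouter$ looks like a slit. This is exactly the discrete analogue of the function $H_\disc$ in the Remark.

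Next I would pass to the limit. By the main convergence theorem of \cite{Smirnov:2010ie,Chelkak:2014gs}, after the rescaling $\delta^{-1/2}\tilde F_\delta$ one obtains a holomorphic limit $\tilde F$ in $(\domain,a,b)$ characterized by: $\operatorname{Im}\int\tilde F^2$ is harmonic, constant on each of the two boundary arcs, with the correct jump. Conformal covariance of degree $1/2$ together with the explicit solution on the disc gives the stated formula for $\tilde F$ with the simple-pole square-root kernel $\tilde F_\disc$. For $\delta^{-1}F_\delta$ the situation is the same except that we are in the slit domain $\Gouter$: the limit $F$ must be holomorphic on $\domain\setminus\{w\}$, with $H=\operatorname{Im}\int F^2$ constant on the two arcs $ab$ and $ba$ with jump $\xi-\zeta=\coeffb\pi>0$, and with the singularity $H(z)=\operatorname{Im}(-1/(z-w))+\OO(\log|z-w|^{-1})$ at $w$ (in disc coordinates, $H_\disc(z)=\operatorname{Im}(-1/z)+\OO(\log|z|^{-1})$). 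The point is that the space of such functions is finite-dimensional: $F_\disc^2$ must be a rational function of $z$ whose only poles are a double pole at $0$ (forced to have the normalized coefficient by the $-1/z$ term), simple poles at $u$ and $v$ (forced by the arc structure, with opposite residues so $H_\disc$ has no global monodromy), plus possibly a simple pole at $0$ with purely imaginary coefficient $\ii\coeffa$ (allowed because it contributes only $\OO(\log)$ to $H_\disc$, matching the error term). The reality/positivity constraints on $H_\disc$ on the two arcs and the boundary values pin down that $F_\disc^2 = 1/z^2-1+\ii\coeffa/z+\coeffb(1/(z-u)-1/(z-v))$ with $\coeffa\in\R$, $\coeffb>0$; this is precisely \eqref{eq: disc int observable}.

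It then remains to compute $\coeffa,\coeffb$ explicitly in terms of $\arg u,\arg v$. This I would do by imposing the boundary conditions on $|z|=1$: writing $z=e^{\ii\vartheta}$ one checks that $\operatorname{Im}\int F_\disc^2\,\de z$ is locally constant on each of the arcs $(\arg u,\arg v)$ and $(\arg v,\arg u+2\pi)$, which is a pair of linear conditions on $(\coeffa,\coeffb)$; solving them yields \eqref{eq: def coeffa}--\eqref{eq: def coeffb}. (The half-angle form $\cos((\arg v-\arg u)/2)$ and $\sin((\arg v-\arg u)/2)\cos^2((\arg u+\arg v+\pi)/4)$ is exactly what comes out of evaluating the residues of $F_\disc^2$ at $u$ and $v$ and matching the arc constants, using $\arg u,\arg v$ to locate the branch points of the square root on the circle.) The degenerate case $a=b$ is obtained by letting $\arg v-\arg u\to 0$ or $2\pi$, in which the simple poles at $u$ and $v$ either cancel or merge into a second-order pole, consistent with $\coeffb\to 0$. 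Uniqueness of $F,\tilde F$ follows since any two candidate limits differ by a holomorphic function whose squared primitive has no singularities and vanishing boundary jumps, hence is identically zero; compactness/precompactness of the discrete observables (already established in \cite{Smirnov:2010ie,Chelkak:2014gs}) then upgrades subsequential convergence to full convergence.

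I expect the main obstacle to be the bookkeeping around the enhanced graph $\Gouter$: one must verify carefully that the discrete observable $F_\delta$ defined on $\Gouter$ satisfies the discrete Cauchy–Riemann / s-holomorphicity relations away from $w_\delta$ and the correct discrete boundary conditions on both the "true" boundary of $\domain_\delta$ and on the two half-edges $\varmedge_o,\varmedge_i$ created by the slit, so that the passage to the limit from \cite{Chelkak:2014gs} applies and produces exactly the singularity $\operatorname{Im}(-1/z)$ with unit coefficient rather than some other normalization — this is where the "$-$" sign and the $e^{-\frac{\ii}{2}W}$ weighting in \eqref{eq: def int observable} must be tracked exactly.
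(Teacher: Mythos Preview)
Your overall architecture is close to the paper's, but there are two genuine gaps.

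\textbf{Determination of $\coeffa,\coeffb$.} You propose to pin down $\coeffa,\coeffb$ by requiring that $H_\disc=\imag\int F_\disc^2\,\de z$ be constant on each of the two arcs of $\partial\disc$. That does not work: for \emph{every} real $\coeffa$ and every real $\coeffb$ the function $H_\disc$ coming from \eqref{eq: disc int observable} is already constant on both arcs (this is exactly the content of the Remark after \eqref{eq: usual disc int observable}, and you can check it directly by computing $\real(zF_\disc^2)$ on $|z|=1$). So arc-constancy is a property of the whole two-parameter family, not a pair of equations. The paper's constraint is different and comes from the \emph{single-valuedness of $F_\disc$ itself}, not of $F_\disc^2$: since $F_\delta$ is a genuine function (not a square), the limit $F$ must admit a single-valued square root of $Q(z)/\big(z^2(z-u)(z-v)\big)$. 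The paper shows, by comparing $F_\delta$ to $\pm\eps_\delta\tilde F_\delta$ near $a_\delta,b_\delta$ (so that the combinations $F_\delta\pm\eps_\delta\tilde F_\delta$ have no jump at $a$ resp.\ $b$), that $F_\disc$ has the same $\tilde F_\disc$-type singularity at $u$ and at $v$; forcing the square root to be single-valued on each arc then gives condition~\assumparef: $Q(z)=-(z-m)^2(z-n)^2$ with $m,n$ on opposite arcs. This is four equations (matching coefficients of $z^3,z^2,z,1$), and Section~\ref{ssec: coeff a and b} solves them to get \eqref{eq: def coeffa}--\eqref{eq: def coeffb}. Your parenthetical about ``branch points of the square root'' is a hint in the right direction, but it is the \emph{source} of the constraints, not a verification step.

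\textbf{Precompactness.} You write that ``compactness/precompactness of the discrete observables (already established in \cite{Smirnov:2010ie,Chelkak:2014gs})'' upgrades subsequential convergence to full convergence. Those references establish precompactness for $\tilde F_\delta$, not for $\delta^{-1}F_\delta$, which has an interior discrete singularity at $w_\delta$ (concretely $\cdbar_1 F_\delta(w_\delta)=\sqrt{2}$). The paper handles this in two steps: first it subtracts a suitably scaled discrete Cauchy kernel $\mathcal{C}_\delta$ so that $\hat F_\delta=F_\delta-\mathcal{C}_\delta$ is spin-preholomorphic \emph{everywhere}, including at $w_\delta$ (this is what produces the $1/(z-w)$ pole in \eqref{eq: general form of f}); second, it proves the a priori bound \assumpcref{} (uniform boundedness of $\delta^{-1}F_\delta$ on compacts of $\domain\setminus\{w\}$) by a separate contradiction argument in which a normalized limit $H$ would have to be of the form $-\coeffb\,\imag\log\frac{z-u}{z-v}+const.$ yet also possess a saddle point, which it cannot. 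Neither step is covered by the cited references, and your slit-domain heuristic does not substitute for the Cauchy-kernel subtraction.
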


The proof is given in Section~\ref{ssec: conv obs}, except the ``algebraic part'' which is 
presented next.

\subsection{Determination of the coefficients 
  \texorpdfstring{$\coeffa$}{alpha} and \texorpdfstring{$\coeffb$}{beta}}\label{ssec: coeff a and b}

In this section, we assume that the scaling limit of the observable is 
of the form \eqref{eq: disc int observable} and we focus on determining
the coefficients $\coeffa$ and $\coeffb$ under a hypothesis (called $(\ast)$ or $(\ast')$ below)
which we verify in the proof of Theorem~\ref{thm: conv obs}.

\subsubsection{Zeros of $Q$ and the coefficients $\coeffa$ and $\coeffb$ in the case $u\neq v$}

Write 
\begin{equation*}
F_\disc(z) = \frac{2}{\pi}\sqrt{ \frac{Q(z)}{ z^2\,(z-u)\,(z-v)} }
\end{equation*}
where $F_\disc$ is as in \eqref{eq: disc int observable}.

In this section we expand $Q$ as
\begin{align}\label{eq: q in terms of coeffa and coeffb}
Q(z) =& (-z^2 + i\,\coeffa\,z+1)(z-u)(z-v) + \coeffb \, (u-v) z^2 \nonumber\\
     =& -z^4+(i \, \coeffa+u+v) z^3+(1-i\, \coeffa \,(u+v)-u \, v + \coeffb \, (u-v)) z^2 \nonumber\\
     &  +(-u-v + i\,\coeffa\,u\,v) z + u \, v
\end{align}
which we compare to another expression later.

Now we claim that 
\begin{quote}
\assumpa{}
$Q$ has to have two zeros $n$ and $m$, both of multiplicity two, such that one of them lies on the arc $uv$ and the other one on $vu$. 
\end{quote}
We will verify the claim \assumparef{} 
in the proof of Theorem~\ref{thm: conv obs} in Section~\ref{ssec: conv obs}.
Basically it results from the fact that the singularities of $F$ at $a$ and $b$ are of
the same type as the singularities of $\tilde F$ at $a$ and $b$. Thus we define the coefficient
in front of that singularity, say, at $a$ by comparing $F$ to $\tilde F$ and analyze the relative
signs of those coefficients.

The observation \assumparef{}
makes it possible to determine $\coeffa$ and $\coeffb$ in terms of $u$ and $v$.
Let's write
\begin{equation*}
u=e^{i\argu}, \quad v=e^{i\argv}, \quad \argu < \argv < \argu+2\pi .
\end{equation*}
Expand $Q$ as
\begin{align}\label{eq: q in terms of m and n}
Q(z) =& -(z-m)^2 (z-n)^2 \nonumber\\
     =& -z^4 + 2(m+n) z^3 -(m^2 + n^2 + 4 m n)z^2 + 2 m n (m+n) z - m^2 n^2 
\end{align}
and compare \eqref{eq: q in terms of coeffa and coeffb} and \eqref{eq: q in terms of m and n}.
If we ignore for the time being the coefficient of $z^2$, we have to solve the equation system
\begin{align}
i \, \coeffa+u+v &= 2(m+n) \label{eq: Q coeff z3}\\
i\,\coeffa\,u\,v-u-v  &= 2 m n (m+n) \label{eq: Q coeff z1}\\
u \, v &= - m^2 n^2 \label{eq: Q coeff z0}
\end{align}
for $\coeffa, m$ and $n$. Let $\auxvara \in \C$ be such that $\auxvara^2 = -u\,v$ and let's suppose that 
\begin{equation} \label{eq: def auxvar 1}
m\,n=\auxvara . 
\end{equation}
Then \eqref{eq: Q coeff z0} is satisfied. We resolve the choice of $\auxvara$ later.

Now by \eqref{eq: Q coeff z3} and \eqref{eq: Q coeff z1}
\begin{equation*}
i\,\coeffa\,u\,v-u-v =  \auxvara (i \, \coeffa+u+v)
\end{equation*}
When $w \neq -1$, this gives
\begin{equation*}
\coeffa = i \, \auxvara^{-1} (u+v) .
\end{equation*}
Plugging this back in \eqref{eq: Q coeff z3} gives
\begin{equation} \label{eq: solution of m+n}
m+n = \frac{1}{2} (1-\auxvara^{-1})(u+v) =\dd \auxvarb
\end{equation}
Then 
\begin{equation*}
m= \frac{\auxvarb + \sqrt{\auxvarb^2 - 4\auxvara}}{2}, \qquad n= \frac{\auxvarb - \sqrt{\auxvarb^2 - 4\auxvara}}{2}.
\end{equation*}
It's not necessary to solve these equations explicitly. 
It suffices to verify later that $\auxvarb \in \R \sqrt{\auxvara}$ and $\auxvarb^2 - 4\auxvara \in \R_- \auxvara$.
Then $|m|=|n|=1$ and both arcs $uv$ and $vu$ contain one of the points $m,n$. 

Solve next $\coeffb$ from the coefficient of $Q$ and using \eqref{eq: def auxvar 1} and \eqref{eq: solution of m+n}
\begin{align*}
-(u-v)\coeffb &= m^2 + n^2 + 4 m n + 1-i\, \coeffa \,(u+v)-u v \nonumber \\
        &= \frac{1}{4} (1-\auxvara^{-1})^2 (u+v)^2 + 2 \auxvara + 1 + \auxvara^{-1}(u+v)^2 + \auxvara^2 \nonumber \\
        &= \frac{1}{4} (1 + \auxvara^{-1})^2 (u+v)^2 + (1+\auxvara)^2 \nonumber \\
        &= \frac{1}{4} \left(4 + \auxvara^{-2} (u+v)^2 \right) (1+\auxvara)^2 = -\frac{1}{4} \frac{(u-v)^2}{uv} (1+\auxvara)^2
\end{align*}
Use the explicit formula $\auxvara = \exp \left( i \frac{\argu+\argv \pm \pi}{2} \right)$
to write that
\begin{align*}
\coeffb &= \frac{1}{4} \frac{u-v}{uv} (1+\auxvara)^2 \nonumber \\
  &= \frac{1}{4} e^{-i \frac{\argu+\argv}{2}} \cdot \left[-2i \sin \left( \frac{\argv-\argu}{2} \right) \right] \cdot  
      e^{i \frac{\argu+\argv \pm \pi}{2}}  \cdot \left[2 \cos \left( \frac{\argu+\argv \pm \pi}{4} \right) \right]^2
\end{align*}
Hence $\coeffb \geq 0$ only when $\auxvara = \exp \left( i \frac{\argu+\argv + \pi}{2} \right)$.
Thus we have show the following result.

\begin{proposition}\label{prop: coefficients a and b}
A function of the form \eqref{eq: disc int observable} satisfies 
$(\ast)$
if and only if 
\begin{align*}
\coeffa &=  2 \cos \left( \frac{\argv-\argu}{2} \right) \\
\coeffb &= 2 \sin \left( \frac{\argv-\argu}{2} \right) \, \cos^2 \left( \frac{\argu+\argv + \pi}{4} \right)
\end{align*}
where $\argu = \arg u$, $\argv = \arg v$ and they satisfy $\argu < \argv < \argu + 2 \pi$.
\end{proposition}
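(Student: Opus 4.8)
The plan is to reverse-engineer the statement: rather than re-deriving the formulas, I would verify that the displayed $\coeffa,\coeffb$ are exactly the ones forced by the factorization hypothesis \assumparef{}, and conversely that they produce a $Q$ of the required form. The entire computation has essentially been laid out above; what remains is to tie the loose ends and argue the two implications cleanly. First I would fix $u=e^{i\argu}$, $v=e^{i\argv}$ with $\argu<\argv<\argu+2\pi$, and recall that \assumparef{} says $Q$ is $-(z-m)^2(z-n)^2$ with $m$ on the arc $uv$ and $n$ on $vu$ (or vice versa), in particular $|m|=|n|=1$. Matching the coefficients of $z^3$, $z^1$, $z^0$ in \eqref{eq: q in terms of coeffa and coeffb} and \eqref{eq: q in terms of m and n} gives the system \eqref{eq: Q coeff z3}--\eqref{eq: Q coeff z0}; the $z^2$ coefficient is then used to pin down $\coeffb$. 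Introducing $\auxvara$ with $\auxvara^2=-uv$ and setting $mn=\auxvara$ solves \eqref{eq: Q coeff z0} automatically, and the algebra already displayed yields
\begin{equation*}
\coeffa = i\,\auxvara^{-1}(u+v), \qquad m+n = \tfrac12(1-\auxvara^{-1})(u+v) =\dd \auxvarb,
\end{equation*}
and then $-(u-v)\coeffb = -\tfrac14\frac{(u-v)^2}{uv}(1+\auxvara)^2$, i.e. $\coeffb = \tfrac14\frac{u-v}{uv}(1+\auxvara)^2$.

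The step that actually needs care — and which I expect to be the main obstacle — is the \emph{selection of the branch} $\auxvara = \exp\!\big(i\frac{\argu+\argv+\pi}{2}\big)$ among the two square roots of $-uv$, together with checking the consistency conditions ``$\auxvarb\in\R\sqrt{\auxvara}$'' and ``$\auxvarb^2-4\auxvara\in\R_-\auxvara$'' that guarantee $|m|=|n|=1$ with one point on each arc. For this I would compute, for each choice of sign, $\auxvara^{-1}(u+v) = e^{-i(\argu+\argv\pm\pi)/2}\cdot 2\cos\frac{\argv-\argu}{2}\cdot e^{i(\argu+\argv)/2} = \mp 2\cos\frac{\argv-\argu}{2}$, which is real; hence $\coeffa = i\auxvara^{-1}(u+v) = \mp 2i\cos\frac{\argv-\argu}{2}$ — but $\coeffa$ is required to be \emph{real}, so neither sign immediately gives a real $\coeffa$ unless I am careless. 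Here I must be more careful: the correct reading is that $\auxvara^{-1}(u+v)$ picks up the factor $e^{i(\argu+\argv)/2}$ from $u+v = 2\cos\frac{\argv-\argu}{2}\,e^{i(\argu+\argv)/2}$, so $\coeffa = i e^{-i(\argu+\argv\pm\pi)/2}e^{i(\argu+\argv)/2}2\cos\frac{\argv-\argu}{2} = i e^{\mp i\pi/2}2\cos\frac{\argv-\argu}{2} = \pm 2\cos\frac{\argv-\argu}{2}$, which is real, and positivity/sign bookkeeping selects $\auxvara = \exp(i\frac{\argu+\argv+\pi}{2})$. With that choice $1+\auxvara = 1 + e^{i(\argu+\argv+\pi)/2} = 2\cos\frac{\argu+\argv+\pi}{4}\,e^{i(\argu+\argv+\pi)/4}$, so
\begin{equation*}
\coeffb = \tfrac14\cdot\frac{u-v}{uv}\cdot 4\cos^2\!\tfrac{\argu+\argv+\pi}{4}\,e^{i(\argu+\argv+\pi)/2}
        = \tfrac14 e^{-i(\argu+\argv)/2}(-2i)\sin\tfrac{\argv-\argu}{2}\cdot e^{i(\argu+\argv+\pi)/2}\cdot 4\cos^2\!\tfrac{\argu+\argv+\pi}{4},
\end{equation*}
and the factor $e^{-i(\argu+\argv)/2}(-2i)e^{i(\argu+\argv+\pi)/2} = -2i e^{i\pi/2} = 2$ collapses it to $\coeffb = 2\sin\frac{\argv-\argu}{2}\cos^2\frac{\argu+\argv+\pi}{4}\ge 0$, matching the claim and confirming $\coeffb>0$ generically.

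Finally, for the converse direction I would observe that all the implications above are reversible: given $\coeffa,\coeffb$ as in the statement, one recovers $\auxvara = \exp(i\frac{\argu+\argv+\pi}{2})$, defines $m,n$ as the two roots of $t^2 - \auxvarb t + \auxvara = 0$, and checks $|m|=|n|=1$ via $\auxvarb/\sqrt{\auxvara}\in\R$ and $\auxvarb^2-4\auxvara\in\R_{-}\auxvara$ — concretely $\auxvarb = \tfrac12(1-e^{-i(\argu+\argv+\pi)/2})(u+v)$, and factoring out $e^{i(\argu+\argv)/2}$ and $\sqrt{\auxvara}=e^{i(\argu+\argv+\pi)/4}$ leaves a manifestly real (resp. real-negative-times-$\auxvara$) quantity; this places $m$ on one arc and $n$ on the other, so $Q(z) = -(z-m)^2(z-n)^2$ and \assumparef{} holds. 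Combining the two directions gives the ``if and only if.'' The only genuinely delicate points are the branch choice for $\auxvara$ and the handling of the case $u+v=0$ (i.e. $\auxvarb$ or the division ``$w\neq-1$'' step degenerating), which I would dispatch by a short continuity argument letting $\argv-\argu\to\pi$, consistent with the remark in Theorem~\ref{thm: conv obs} that degenerate configurations are obtained as limits.
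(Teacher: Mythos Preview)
Your proposal is correct and follows the same algebraic route as the paper's Section~2.3.1: compare the two expansions of $Q$, solve the coefficient system via $\auxvara^2=-uv$, and then select the branch of $\auxvara$. The only imprecision is the phrase ``positivity/sign bookkeeping selects $\auxvara$'': both branches of $\auxvara$ give a real $\coeffa$, so the actual selection criterion (as in the paper) is $\coeffb\ge 0$---you verify this for the $+$ branch but should also note that the $-$ branch yields $\coeffb\le 0$.
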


\subsubsection{The special case \texorpdfstring{$u=v$}{u=v}}

If $u=v$, then $\coeffb=0$. 
Write 
\begin{equation*}
F_\disc(z) = \frac{2}{\pi} \sqrt{ \frac{\hat Q(z)}{ z^2} }
\end{equation*}
where 
\begin{equation}\label{eq: hat q in terms coeffa}
\tilde Q(z) = - z^2 + i \coeffa z + 1
\end{equation}

A similar claim as \assumparef{} states that
\begin{quote}
\assumpb{} $\tilde Q$ has a zero $n$ of multiplicity two. When $\arg u = \arg v$, then $\partial_\nu H_\disc > 0$ 
piecewise everywhere and when $\arg v = \arg u +2\pi$, then $\partial_\nu H_\disc < 0$ 
piecewise everywhere.
\end{quote}
We will verify the claim \assumpbref{} also
in the proof of Theorem~\ref{thm: conv obs} in Section~\ref{ssec: conv obs}.
Here $\partial_\nu$ is the derivative to the direction of the outer normal at the boundary of $\disc$.

Write
\begin{equation}\label{eq: hat q in terms n}
\tilde Q(z) = -(z-n)^2 = -z^2 + 2 n z - n^2 .
\end{equation}
Then $n=\pm i$ and $\coeffa = \pm 2$ by comparing \eqref{eq: hat q in terms coeffa}
and \eqref{eq: hat q in terms n}.

Write
\begin{equation*}
H_\disc(z) = \imag \left( -z - \frac{1}{z} + i \coeffa \log z + const.\right) .
\end{equation*}
Since $\partial_\nu = r \partial_r = x \partial_x + y \partial_y = z \partial +\overline{z} \overline{\partial}$,
\begin{equation*}
\partial_\nu H_\disc(z) = \imag \left( -z + \frac{1}{z} + i \coeffa \right) = \coeffa - 2\sin \arg z
\end{equation*}
for any $z \in \partial \disc$.
When $\coeffa=2$, then $n= i$ and $\partial_\nu H_\disc>0$ except at $z=n$, where as
when $\coeffa=-2$, then $n= -i$ and $\partial_\nu H_\disc<0$ except at $z=n$.
Notice that these are consistent with  taking
the corresponding limits of the formulas in Proposition~\ref{prop: coefficients a and b}.

\section{The scaling limit of the observable}\label{sec: observable limit}

\subsection{The operators 
  \texorpdfstring{$\Delta$}{Delta} and \texorpdfstring{$\partial$}{partial}, Green's function etc.}

On a square lattice $\varggr$ define the linear operator $\Delta_1^\varggr: \C^{V(\varggr)} \to \C^{V(\varggr)}$ 
called unnormalized discrete Laplace operator as
\begin{equation*}
\Delta_1^\varggr \, H(z) = \sum_{w \sim z} H(w)-H(z).
\end{equation*}
Similarly for define the unnormalized discrete versions of the complex derivatives
$\cd$ and $\cdbar$ as operators
$\cd_1^\varggr: \C^{V(\varggr)} \to \C^{V(\varggr^*)}$ 
and $\cdbar_1^\varggr: \C^{V(\varggr)} \to \C^{V(\varggr^*)}$
defined by
\begin{equation*}
\cd_1^\varggr \, H(z) = \frac{1}{2} \sum_{w \stackrel{\phantom{^*}\varggr\varggr^*}{\sim} z} 
   \frac{\bar{w}-\bar{z}}{|w-z|}  H(w)
\end{equation*}
and
\begin{equation*}
\cdbar_1^\varggr \, H(z) = \frac{1}{2} \sum_{w \stackrel{\phantom{^*}\varggr\varggr^*}{\sim} z} 
   \frac{w-z}{|w-z|} H(w) 
\end{equation*}
where $w\stackrel{\phantom{^*}\varggr\varggr^*}{\sim}z$ means that
$z$ and $w$ are neighbors on the square lattice with vertices $V(\Gamma) \cup V(\Gamma^*)$.
Notice that $\cd_1^{\varggr^*}$ and $\cdbar_1^{\varggr^*}$ map $\C^{V(\varggr^*)}$ to $\C^{V(\varggr)}$. Usually we drop the upper index $\varggr$
and furthermore for instance, denote both $\cd_1^{\varggr}$ and $\cd_1^{\varggr^*}$ by $\cd_1$. There is no possibility
for confusion since they operate on different spaces.

Let's apply the above operators in a slightly more concrete situation.
Suppose now that $\varggr_\bullet$ 
is a square lattice which is $\sqrt{2} \delta \Z^2$ rotated by the angle $\pi/4$
and
$\varggr_\circ$ its dual lattice.
Set $\varggr$ to be the square lattice with vertices 
$V(\varggr_\bullet) \cup V(\varggr_\circ)$.
Let $\Delta_{1,\bullet}$ be the Laplace operator of $\varggr_\bullet$.
It standard to verify the following result.

\begin{lemma}
$\cdbar_1 \cd_1 |_{V(\varggr_\bullet)}=\cd_1 \cdbar_1 |_{V(\varggr_\bullet)} = \frac{1}{4} \Delta_{1,\bullet}$
\end{lemma}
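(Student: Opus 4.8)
The plan is to verify the operator identity $\cdbar_1\cd_1 = \cd_1\cdbar_1 = \tfrac14\Delta_{1,\bullet}$ on $V(\varggr_\bullet)$ by a direct local computation, using the fact that all three operators are translation-invariant and the lattice $\varggr$ (with vertex set $V(\varggr_\bullet)\cup V(\varggr_\circ)$) is just $\delta\Z^2$ up to rotation and scaling. First I would fix a black vertex $z \in V(\varggr_\bullet)$ and enumerate its neighbors on $\varggr$: there are four of them, all in $V(\varggr_\circ)$, at positions $z + \delta e^{\ii(\pi/4 + k\pi/2)}$ for $k = 0,1,2,3$ (after the rotation by $\pi/4$, the edges of $\varggr$ run in the axis directions; I'll orient notation so this is clean). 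Each such white neighbor $w$ in turn has four black neighbors, one of which is $z$; the other three, together with the iterated sum, will produce the stencil.

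The key computational step is to expand $\cd_1\cdbar_1 H(z) = \tfrac14 \sum_{w \sim z} \sum_{w' \sim w} \frac{\bar w - \bar z}{|w-z|}\cdot\frac{w' - w}{|w'-w|} H(w')$, group the terms by the black vertex $w'$ that is being hit, and show that the total coefficient of $H(z)$ is $-1$, the coefficient of each of the four nearest black neighbors $z \pm \sqrt2\delta$ (along the black-lattice edge directions), call them $z'$, is $\tfrac14$, and the coefficients of the four diagonal black vertices (which can be reached via two different intermediate white vertices) cancel. This cancellation is the crux: for a diagonal black vertex $z''$ reached through white vertices $w_1$ and $w_2$, the two contributing products of unit complex phases are negatives of each other, because going $z \to w_1 \to z''$ versus $z \to w_2 \to z''$ traverses the two legs in opposite cyclic orders around the unit square, so $(\bar w_1 - \bar z)(w_1 - z'') + (\bar w_2 - \bar z)(w_2 - z'') / (\text{normalizations}) = 0$. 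The self-term and the nearest-neighbor terms then reassemble into exactly $\tfrac14\big(\sum_{z' \sim z} H(z') - 4H(z)\big) = \tfrac14\Delta_{1,\bullet}H(z)$. The identity $\cdbar_1\cd_1 = \cd_1\cdbar_1$ on $V(\varggr_\bullet)$ follows by the same computation with the roles of the phase factors $\tfrac{\bar w - \bar z}{|w-z|}$ and $\tfrac{w-z}{|w-z|}$ swapped, which is just complex conjugation of each factor; since the final answer $\tfrac14\Delta_{1,\bullet}$ is real and conjugation-symmetric, both orders give the same thing.

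I expect the only real obstacle to be bookkeeping: keeping the eight unit-modulus phase factors straight and making sure the diagonal terms are correctly paired. There is no analytic subtlety — everything is a finite sum of explicit roots of unity — so once the geometry of which white vertex lies between which pair of black vertices is fixed (a picture of one square of $\varggr$ suffices), the verification is mechanical. I would present it by computing the coefficient attached to each of the nine relevant black vertices ($z$ itself, its four $\varggr_\bullet$-neighbors, its four $\varggr_\bullet$-diagonal-neighbors) and observing the pattern $(-1; \tfrac14,\tfrac14,\tfrac14,\tfrac14; 0,0,0,0)$, which is precisely $\tfrac14\Delta_{1,\bullet}$.
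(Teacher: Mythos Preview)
The paper gives no proof of this lemma --- it merely says the result is ``standard to verify'' --- and your plan to expand the nine-point stencil directly is exactly what is intended.

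One piece of the bookkeeping is inverted, though, and since you flagged bookkeeping as the only risk it is worth correcting now. With the paper's conventions the combined lattice $\varggr$ (vertex set $V(\varggr_\bullet)\cup V(\varggr_\circ)$) is $\delta\Z^2$ with axis-aligned edges, so the four white $\varggr$-neighbours of a black vertex $z$ lie at $z\pm\delta$, $z\pm i\delta$. Consequently the $\varggr_\bullet$-\emph{nearest} neighbours $z+(\pm1\pm i)\delta$ are precisely the black vertices reachable from $z$ through \emph{two} different white intermediates (for instance $z+(1+i)\delta$ via either $z+\delta$ or $z+i\delta$), whereas the vertices $z\pm2\delta$, $z\pm2i\delta$ are each reached through a \emph{single} white. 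Your two-path phase cancellation therefore takes place at the $\varggr_\bullet$-nearest neighbours (coefficient $0$), and the surviving $\tfrac14$ coefficients sit on $z\pm2\delta$, $z\pm2i\delta$ --- the opposite of what you wrote. The mechanism is exactly as you described, just with the two groups of four vertices swapped; once you relabel, the computation goes through and you should record the resulting five-point stencil explicitly.
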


To get operators that correspond to the continuum operators, 
we define
\begin{align*}
\Delta_{\bullet} =  \frac{1}{2 \delta^2} \Delta_{1,\bullet} ,\qquad
\cd = \frac{1}{\sqrt{2} \delta} \cd_1 ,\qquad
\cdbar = \frac{1}{\sqrt{2} \delta} \cdbar_1 .
\end{align*}

The next result gives the existence of discrete Green's function. The proof can be found in
\cite{Kenyon:2002ji}.

\begin{proposition}
For each $z_0 \in V(\varggr_\bullet)$, there exists function 
a unique function $G_{z_0}: \C^{ V(\varggr_\bullet)} \to \C$ such that
$G_{z_0}(z_0)=0$, $\Delta_{1,\bullet} G_{z_0}(\,\cdot\,) = \delta_{\cdot,z_0}$ 
and $G_{z_0}$ grows sublinearly at infinity.
It satisfies the asymptotic equality
\begin{equation}\label{eq: green asymptotics}
G_{z_0}(z)= \frac{1}{2 \pi} \log \left( \frac{|z-z_0|}{\sqrt{2}\delta} \right) + C + \OO\left( \frac{\delta^2}{|z-z_0|^2} \right)
 \qquad \text{as } z\to\infty .
\end{equation}
\end{proposition}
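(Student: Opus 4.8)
The statement is classical; the route I would follow is essentially that of \cite{Kenyon:2002ji} (and of the textbook construction of the \emph{potential kernel} of planar simple random walk), organised in four steps.

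\textbf{Reduction to the standard lattice.} The similarity $z \mapsto \sqrt{2}\,\delta\, e^{\ii\pi/4} z$ carries $\Z^2$ onto $V(\varggr_\bullet)$, conjugates the unnormalized nearest-neighbour Laplacian of $\Z^2$ to $\Delta_{1,\bullet}$, preserves harmonicity, and multiplies Euclidean distances by $\sqrt{2}\,\delta$. Hence it suffices to construct on $\Z^2$ a function $a\colon\Z^2\to\R$ with $a(0)=0$, $\Delta_1 a = \delta_{\cdot,0}$, $a$ growing sublinearly, and satisfying $a(x) = \tfrac{1}{2\pi}\log|x| + C + \OO(|x|^{-2})$; pulling back along the similarity then produces $G_{z_0}$ together with \eqref{eq: green asymptotics}, the factor $\sqrt{2}\,\delta$ appearing inside the logarithm and the remainder becoming $\OO(\delta^2/|z-z_0|^2)$ with the \emph{same} universal constant $C$.

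\textbf{Construction.} I would define $a$ by the Fourier formula
\[
a(x) \;=\; \frac{1}{(2\pi)^2}\int_{[-\pi,\pi]^2} \frac{1-\cos(x\cdot\theta)}{2-\cos\theta_1-\cos\theta_2}\,\de\theta ,
\]
which converges absolutely since near the origin the numerator is $\OO(|\theta|^2)$ while the denominator is comparable to $|\theta|^2$; equivalently one may take $a(x)=\sum_{n\ge 0}\bigl(p_n(0,0)-p_n(0,x)\bigr)$ with $p_n$ the $n$-step transition kernel of simple random walk. Differentiating under the integral sign and using $\frac{1}{(2\pi)^2}\int_{[-\pi,\pi]^2} e^{\ii x\cdot\theta}\,\de\theta = \delta_{x,0}$ gives $\Delta_1 a = \delta_{\cdot,0}$, and $a(0)=0$ is immediate; sublinear growth follows from the asymptotics established next.

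\textbf{Asymptotics (the main obstacle).} The crux is the expansion $a(x) = \tfrac{1}{2\pi}\log|x| + C + \OO(|x|^{-2})$. I would split the $\theta$-integral at scale $|\theta| \simeq |x|^{-1}$: on the inner region replace the symbol $2-\cos\theta_1-\cos\theta_2$ by $\tfrac12|\theta|^2$ up to a multiplicative factor $1+\OO(|\theta|^2)$, recognise the resulting integral of $(1-\cos(x\cdot\theta))/|\theta|^2$ as contributing exactly $\tfrac{1}{2\pi}\log|x|$ plus a constant, and absorb the symbol correction into the $\OO(|x|^{-2})$ remainder; on the outer region the oscillatory factor $\cos(x\cdot\theta)$ is integrated against a function that is smooth away from the single diagonal zero $\theta=0$ of the denominator (which is excluded), and two integrations by parts yield $\OO(|x|^{-2})$, the non-oscillatory part feeding only into the constant $C$. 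Controlling the oscillatory piece uniformly near that zero and bookkeeping all error terms is the delicate point, and I expect it to be where the real work lies; an alternative that trades the oscillatory analysis for careful $n$-accounting is to insert the local central limit theorem for $p_n(0,x)$ into the series representation and evaluate the resulting Gaussian sum.

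\textbf{Uniqueness.} If $G^{(1)}_{z_0}$ and $G^{(2)}_{z_0}$ both satisfy the listed properties, then $H := G^{(1)}_{z_0}-G^{(2)}_{z_0}$ is $\Delta_{1,\bullet}$-harmonic on all of $V(\varggr_\bullet)$ and grows sublinearly. By the discrete Liouville theorem — a harmonic function of sublinear growth on $\Z^2$ is constant, via the interior derivative estimate for discrete harmonic functions or a coupling argument for the walk — $H$ is constant, and $H(z_0)=0$ forces $H\equiv 0$.
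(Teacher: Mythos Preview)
Your proposal is correct and follows the classical Fourier-analytic route. The paper itself does not give a proof of this proposition at all: it simply states that ``the proof can be found in \cite{Kenyon:2002ji}'', which is precisely the reference you invoke in your opening sentence, so there is nothing further to compare.
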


Extend $G_{z_0}$ holomorphically to $\varggr_\circ$, that is,
suppose that it satisfies $\cdbar_1 G_{z_0} = 0$. 
The extension is defined up to an additive constant and
well-defined locally, but globally it might be multivalued.
However $\mathcal{C} \dd= \cd_1 G_{z_0}$ is well-defined and single-valued globally. 
It satisfies for $z \in \Gamma^*$,
\begin{equation}\label{eq: cauchy kernel difference}
\mathcal{C}(z) =  \sigma(z) D_\bullet G_{z_0}(z)
\end{equation}
where $D_\bullet$ is the (unnormalized) 
difference operator along the edge of $\varggr_\bullet$ going through the site $z$ (which is the midpoint of the edge)
and $\sigma(z)$ takes values $e^{i(\frac{\pi}{4} + k \frac{\pi}{2})}$, $k=0,1,2,3$, depending only on the direction and orientation
of the edge.
Remember that $\lambda = e^{-i\frac{\pi}{4}}$. 
Thus the values of $\mathcal{C}(z)$ on $\Gamma^*$ 
are restricted on the lines $\lambda \R$ and $\bar{\lambda} \R$.

Next take a square lattice $\varggr_\diamond$ so that the vertices $V(\Gamma)$ are
the midpoints of the horizontal edges of $\varggr_\diamond$.
Then the vertices $V(\varggr^*)$ are the midpoints of the vertical edges of $\varggr_\diamond$
and we can define $\mathcal{C}$ 
on the vertical edges to be the value of $\mathcal{C}$ on $\varggr^*$ constructed above.
Define $\mathcal{C}(z)$ for any $z \in V(\varggr_\diamond)$ to be the sum $\mathcal{C}$ on the two vertical edges ending to $z$.

Remember the notation $\hat e = (e_+ - e_-)/|e_+ - e_-|$ and 
define subsets 
\begin{equation}
E_\alpha = \{e \in E(\delta\dmdlattice) \,:\, \hat e =  \alpha \}
\end{equation}
for $\alpha =  \pm 1, \pm\ii$. Here $e_\to$ is $e$ with is canonical orientation
(counterclockwise around each black square).
Let $\varmedge \in E(\delta \odmdlattice)$. Suppose that $V(\Gamma_\bullet)$ are 
on the midpoints of $E_{\hat \varmedge}$. Then $V(\Gamma_\circ)$
are on the midpoints of $E_{-\hat \varmedge}$
and similarly $V(\Gamma^*)$ are midpoints of 
$E_{ \ii \hat \varmedge} \cup E_{ -\ii\hat \varmedge}$
where we use the notation that $\alpha \varmedge$ is $\varmedge$
rotated by the complex factor $\alpha$.
Define $\mathcal{C}_{\varmedge,\delta}(e)$ for any 
$e = \{v,w\} \in E_{ \ii \hat \varmedge} \cup E_{ -\ii\hat \varmedge}$
to be $\mathcal{C}( \frac{1}{2}(v+w) )$ constructed above for $\Gamma^*$.

\begin{definition}
For any $z \in V(\dmdlattice)$, define
$\mathcal{C}_{\varmedge,\delta}(z)$ to be
the sum of the values $\mathcal{C}_{\varmedge,\delta}(e)$ on the two edges in
$e \in E_{ \ii \hat \varmedge} \cup E_{ -\ii\hat \varmedge}$
that end to $z$.
\end{definition}

\begin{proposition}
The discrete Cauchy kernel $\mathcal{C}_{\varmedge,\delta}: V(\delta\dmdlattice) \to \C$ 
satisfies the following properties.
\begin{enumerate}
\item $\overline{\vartheta(\varmedge)} \mathcal{C}_{\varmedge,\delta}$ is spin preholomorphic everywhere except 
at $\varmedge$.
\item At $\varmedge$, 
$\mathcal{C}_{\varmedge,\delta}(\varmedge_+)=-\mathcal{C}_{\varmedge,\delta}(\varmedge_-)= \frac{1}{2\sqrt{2}}
(\hat \varmedge)^{-1}$. 
\item The asymptotic equality
\begin{equation}
\mathcal{C}_{\varmedge,\delta}(z)
  = \frac{\sqrt{2}}{2 \pi}  \frac{\delta}{z-z_0} +  \OO\left( \frac{\delta^2}{|z-z_0|^2} \right)
\end{equation}
holds for $z \in V(\delta \dmdlattice)$ as $z \to \infty$.
\end{enumerate}
\end{proposition}

\begin{proof}
(1) The prefactor $\overline{\vartheta(\varmedge)}$ is needed for \eqref{eq: spin holomorphic}.
With this prefactor it is easy to verify that on $E_{ \ii \hat \varmedge} \cup E_{ -\ii\hat \varmedge}$
the complex phase of $\mathcal{C}_{\varmedge,\delta}(e)$ is $\vartheta(e)$. Thus
\eqref{eq: spin holomorphic} is satisfied on $E_{ \ii \hat \varmedge} \cup E_{ -\ii\hat \varmedge}$.
It is easy to verify that
the correct complex phase on $E_{ \ii \hat \varmedge} \cup E_{ -\ii\hat \varmedge}$
and the fact, that values of $\mathcal{C}_{\varmedge,\delta}(e)$ 
on $E_{ \ii \hat \varmedge} \cup E_{ -\ii\hat \varmedge}$
satisfy the discrete Cauchy--Riemann equations, imply that
\eqref{eq: spin holomorphic} holds on $E_{  \hat \varmedge} \cup E_{ -\hat \varmedge}$.

(2) The values of Green's function on the four neighboring sites of the source are by symmetry
equal to $\frac{1}{4}$. Thus at the four edges 
$e =e_{SW},e_{SE},e_{NE},e_{NW}$ in $E_{ \ii \hat \varmedge} \cup E_{ -\ii\hat \varmedge}$ 
which are neighbors of $\varmedge$, the value of $\mathcal{C}_{\varmedge,\delta}(e)$
is equal to $-\lambda \frac{1}{4},\bar\lambda \frac{1}{4},\lambda \frac{1}{4},-\bar\lambda \frac{1}{4}$,
respectively. The claim follows easily.

(3) The claim follows from the definition of $\mathcal{C}_{\varmedge,\delta}(z)$ at a vertex as the sum
of the values at two neighboring edges. At the edges,
use \eqref{eq: green asymptotics} and \eqref{eq: cauchy kernel difference}.
\end{proof}

\subsection{Convergence of the observable}\label{ssec: conv obs}
  
\begin{proof}[Proof of Theorem~\ref{thm: conv obs}]
The convergence of $\delta^{-1/2} \tilde F_\delta$ to $\tilde F$ was shown in \cite{Smirnov:2010ie}.
Thus we need to only show the convergence of $\delta^{-1} F_\delta$ to $ F$.

The key element of the proof 
the convergence of $\delta^{-1/2} \tilde F_\delta$
in \cite{Smirnov:2010ie} was that $\tilde F_\delta$ is spin holomorhic.
The same argument goes through for $F_\delta$ showing that it is spin holomorphic
everywhere except at the edge $\varmedge$. 
At $\varmedge$, it fails to be spin holomorphic 
since $F_\delta(\varmedge_+) \neq F_\delta(\varmedge_-)$; in fact, $F_\delta(\varmedge_+)=-F_\delta(\varmedge_-)=1$.

Next we make the following assumption 
\begin{quote}
\assumpc{} On any compact subset of $\domain \setminus \{w\}$ the sequence of functions $(\frac{1}{\delta} F_\delta)_{\delta>0}$ is uniformly bounded. 
\end{quote}
We will later show that the assumption \assumpcref{} indeed holds.

Take any sequence of compact sets $K_n$ increasing to $\domain \setminus \{w\}$.
Using standard arguments of \cite{ChelkakSmirnov:2012} for spin preholomorphic functions, 
we can show that $(\frac{1}{\delta} F_\delta)_{\delta>0}$ are equicontinuous on any $K_n$ and hence
by a diagonal argument any subsequence of  $(\frac{1}{\delta} F_\delta)_{\delta>0}$ contains a subsequence which converges uniformly on
any compact subset of $\domain \setminus \{w\}$.
Let $F = \lim_{n \to \infty } \frac{1}{\delta_n} F_{\delta_n}$.

Let $\tilde{\mathcal{C}}_{\varmedge,\delta}$ be the discrete Cauchy kernel 
introduced above
with the ``singularity'' at $w$
scaled by a factor $2\sqrt{2} \; \overline{\vartheta(\varmedge)}$. 
Write $\hat F_\delta = F_\delta - \tilde{\mathcal{C}}_{\varmedge,\delta}$.
Then we easily see that 
$\hat F_\delta$ is spin holomorphic in $\domain_\delta$ including $\varmedge$.
We can assume that $\partial B(w,r) \subset K_n$ for some $n$. 
If $\frac{1}{\delta} |F_\delta| \leq M$ in $K_n$ and $\frac{1}{\delta} |\mathcal{C}_{\delta}| \leq C$ near $\partial B(w, r)$,
then $\frac{1}{\delta} |\hat F_\delta| \leq M + C $ near $\partial B(w, r)$.
By summing 
the function $\frac{1}{\delta} \hat F_\delta$ against the Cauchy kernel
(discrete Cauchy formula, see Proposition~2.22 of \cite{Chelkak:2011by}), 
we can extend this estimate
to the interior of $B(w, r)$. 
Therefore $\frac{1}{\delta} \hat F_\delta$ remains bounded on any compact subset of $\domain$
and we can extract a subsequence $\frac{1}{\delta_n} \hat F_{\delta_n}$
that converges on any compact subset of $\domain$.
Thus the subsequence $\frac{1}{\delta_n} F_{\delta_n}$ converges to a function
of the form
\begin{equation}\label{eq: general form of f}
F(z) = \overline{\vartheta(\varmedge)} \left[ \frac{2}{\pi} \frac{1}{z - w} + \hat F(z) \right]
\end{equation}
where $\hat F$ is a holomorphic function on $\domain$.
To simplify the notation, absorb the factor $\overline{\vartheta(\varmedge)}$
to the variables $z,a,b,w$ by setting $z'=z\overline{\vartheta(\varmedge)}$ etc., but let's
continue to denote the points $z,a,b,w$. Then
\begin{equation}\label{eq: general form of f simplified}
F(z) =  \frac{2}{\pi} \frac{1}{z - w} + \hat F(z)  .
\end{equation}
Notice that the change of variables 
corresponds to rotating the picture so that the edge $\varmedge$ points to the east,
which we can therefore assume without loss of generality.

By the assumption \assumpc{},
$\frac{1}{\delta} H_\delta$ is uniformly bounded on compact subsets of $\domain \setminus \{w\}$ where
$H_\delta$ is defined as in \eqref{eq: H def}.
Hence $\frac{1}{\delta} H_\delta$ is equicontinuous on compact subset by arguments
of \cite{ChelkakSmirnov:2012}
and we can extract a subsequence that converges uniformly on any of the set $K_n$. We can assume that this sequence is $\delta_n$
chosen above. 
By \eqref{eq: general form of f} and by the fact that there is no monodromy around $w$,
the limit of $\frac{1}{\delta_n}H_{\delta_n}$ has to be of the form 
\begin{equation*}
H = \lim_{n \to \infty} \frac{1}{\delta_n}H_{\delta_n}
 = %\frac{1}{\pi^2} 
   \frac{2}{\pi^2}\left[ \imag \left( -  \frac{1}{z - w} + i \coeffa \log(z - w)\right) + \hat{H}(z) \right]
\end{equation*}
where $\coeffa \in \R$ is a constant and $\hat{H}$ is a harmonic function on $\domain$.
Since the boundary conditions of $H_\delta$ where Dirichlet on both boundary arcs, 
say, equal to $\zeta_\delta$ and $\xi_\delta$
with $\coeffb_\delta \dd = \xi_\delta - \zeta_\delta>0$,
the quantities $\frac{1}{\delta_n} \zeta_{\delta_n}$ and $\frac{1}{\delta_n}\xi_{\delta_n}$
must remain bounded. Otherwise $\frac{1}{\delta_n}H_{\delta_n}$ wouldn't converge.
(Essentially,
near that part of the boundary, where the values go to $\pm \infty$ with the quickest rate, 
the uniform boundedness in compact subsets and the large boundary values are in contradiction.)
Hence $H$ has piecewise constant boundary values, i.e. it satisfies Dirichlet boundary conditions
which can be described in the following way: if $\confmap : \domain \to \disc$ is conformal and onto and such that $\confmap(w)=0$ and
$\confmap'(w)>0$, then $H=H_\disc \circ \confmap$ 
where
\begin{equation}\label{eq: disc int H}
H_\disc =  
  \frac{2}{\pi^2} \imag \left( -  \frac{1}{z} - z + i \coeffa \log z + \coeffb \log \frac{z-u}{z-v}\right) + const.
\end{equation}
We have to show that $\coeffa \in \R$ and $\coeffb \geq 0$ are uniquely determined. We do this by
showing that the assumptions ($\ast$) and ($\ast'$) of Section~\ref{ssec: coeff a and b} hold.

First we will observe that $F$ is single valued.
This follows directly from the properties $F_\delta$.

Next we write $\coeffb_\delta= |\tilde F_\delta(\varmedge)|^2$.
Counting the changes in the number of loops when $\inneredge$ is removed from the loop configuration
of $\Ginner$ and $\outeredge$ is added, shows that 
$\coeffb_\delta= |F_\delta(\medge_o)|^2 = |F_\delta(\medge_i)|^2$.
The convergence of $\tilde F_\delta$ shows that
$\delta^{-1} \coeffb_\delta \to \coeffb_0$ for some $\coeffb_0>0$
(notice that $\coeffb_0 = |\real \tilde F_\disc(0)| = \frac{2}{\pi} \coeffb$
in the continuum, see also Remark~\ref{rem: scaling limit observables}).

Let $\eps_\delta=F_\delta(\medge_o)/\tilde F_\delta(\medge_o)$. 
Notice that $\eps_\delta$ is real-valued. 
Then $|\eps_\delta |^2 = \beta_\delta$ by the above argument
and $\eps_\delta=- F_\delta(\medge_i)/\tilde F_\delta(\medge_i)$ 
by the fact that the two open paths in the loop configuration concatenated with
$\inneredge$ and $\outeredge$ form together a closed simple loop, 
which thus makes one full $\pm 2 \pi$ turn.

Next we will notice that by using $\tilde F_\delta$, we can define
\begin{equation*}
F^{(a)}_\delta = F_\delta + \eps_\delta \tilde F_\delta , \qquad
F^{(b)}_\delta = F_\delta - \eps_\delta \tilde F_\delta
\end{equation*}
which are discrete holomorphic in $\domain_\delta \setminus \{\varmedge\}$.
The corresponding functions 
$H^{(a)}_\delta$ and $H^{(b)}_\delta$ are approximately discrete harmonic
in the same sense as $H$. The function $H^{(a)}_\delta$
doesn't have jump at $a$ and similarly,
$H^{(b)}_\delta$ doesn't have jump at $b$.
Therefore after transforming conformally to the unit disc, it follows that
the scaling limits $F_\disc^{(a)}$ and $F_\disc^{(b)}$  can be continued holomorphically to 
neighborhoods of $a$ and $b$, respectively.

Consequently, for $\eps = \lim_{\delta \to 0} \delta^{-1/2} \eps_\delta$
\begin{align}
F_\disc(z) &= - \eps \tilde F_\disc (z) + \text{holomorphic}  &, \text{ as } z \to u
   \label{eq: f expansion at a} \\
F_\disc(z) &= + \eps \tilde F_\disc (z) + \text{holomorphic}  &, \text{ as } z \to v
   \label{eq: f expansion at b}
\end{align}
where $u = \confmap(a)$ and $v = \confmap(b)$.

Now \eqref{eq: disc int observable} follows from \eqref{eq: disc int H}.
Write $F_\disc= \sqrt{Q(z)/P(z)}$ as we did in Section~\ref{ssec: coeff a and b}.
If $F_\disc(e^{i \theta})= \phi(\theta) \tau(\theta)^{-1/2}$, then
the only way that the properties \eqref{eq: f expansion at a} and \eqref{eq: f expansion at b} 
can be satisfied is that
$\phi(\theta)$ is zero for some $\theta=\theta_1 \in (\argu,\argv)$ as well as
for some  $\theta=\theta_2 \in (\argv,\argu+2\pi)$. Here $\argu = \arg u$ and $\argv = \arg v$.

Consequently, it follows that $Q(z)$ has root of order $2 + 4 k_1$ or  $2 + 4 k_2$
for some integer $k_j$, $j=1,2$,
at $m= e^{i \theta_1}$ and at $n= e^{i \theta_2}$, respectively. Since $Q$ is of order $4$,
$k_1=k_2=0$. Thus $Q(z)= - (z-m)^2 (z-n)^2$. Thus \assumparef{} follows.
A similar argument gives \assumpbref{}.

By the calculation of Section~\ref{ssec: coeff a and b} the values of $\coeffa$ and $\coeffb$ are uniquely determined and given by 
Proposition~\ref{prop: coefficients a and b}. This shows that the limit of $\frac{1}{\delta_n} H_{\delta_n}$ is unique
along any subsequence such that $\frac{1}{\delta_n} F_{\delta_n}$ and $\frac{1}{\delta_n} H_{\delta_n}$ converge
to $F$ and $H$. Since $F=\sqrt{4\ii \cd H}$, also $F$ is uniquely determined.
Since it holds that a subsequence of any subsequence of $(\frac{1}{\delta} F_\delta)$ converges 
to this unique $F_\delta$, the whole sequence converges to $F$. 
We have arrived to the claim of the theorem.

It remains to be shown that the assumption \assumpcref{} holds. Assume on contrary that 
in a compact subset $K$ of $\domain \setminus \{w\}$
the sequence
$M_n = \sup_{K} |\frac{1}{\delta_n} F_{\delta_n}|$ goes to infinity. Since increasing $K$ only increases $M_n$ we can assume that
$A(w,r/2,2r) \subset K$ for some $r>0$.
The sequence $\frac{1}{\delta_n M_n} F_{\delta_n}$ is uniformly bounded on $K$ and hence equicontinuous.

Define $\hat F_\delta = F_\delta - \mathcal{C}_{\delta}$ as we did above.
The functions $\frac{1}{\delta_n M_n} \hat F_{\delta_n}$ extend holomorphically to $\hat{K} \dd= K \cup B(w,r)$
and 
is uniformly bounded on $\hat{K}$ and hence equicontinuous. Take a subsequence, still denoted by $\delta_n$, such that 
$\frac{1}{\delta_n M_n} F_{\delta_n}$ and $\frac{1}{\delta_n M_n} \hat F_{\delta_n}$ converge to some $F$ and $\hat F$.
The functions $F$ and $\hat F$ are holomorphic and $F$ is not identically zero.

Define functions $H_\delta $ and $\hat{H}_\delta$ 
similarly as in \eqref{eq: H def}
using 
the spin holomorphic functions $F_\delta$ and $\hat F_\delta$,
respectively.
Then $\frac{1}{\delta_n M_n^2} H_{\delta_n}$ and $\frac{1}{\delta_n M_n^2} \hat{H}_{\delta_n}$
are uniformly bounded in $K$ and $\hat{K}$ respectively. It follows from the fact 
that the boundary values of $H$ are piecewise constant and
that $\frac{1}{\delta_n M_n^2} H_{\delta_n}$ uniformly bounded on
$\domain \setminus B(w,r)$. Hence we can take a subsequence (still denoted by $\delta_n$)
such that  $\frac{1}{\delta_n M_n^2} H_{\delta_n}$ and $\frac{1}{\delta_n M_n^2} \hat{H}_{\delta_n}$
converge to some function $H$ and $\hat{H}$, respectively. 
We can assume that the former converges uniformly on any compact subset of $\overline{\domain} \setminus \{w,a,b\}$
and the latter on $\hat{K}$, which includes a neighborhood $w$.
Now 
\begin{equation*}
H = \frac{1}{2} \imag 
\int \lim_{n \to \infty}  \frac{1}{\delta_n M_n^2} \left(\mathcal{C}_{\delta} + \hat F\right)^2 \de z
= \frac{1}{2}  \imag \lim_{n \to \infty}  \int \frac{1}{\delta_n M_n^2} \hat F^2 \de z
= H_0
\end{equation*}
on $\hat{K} \setminus \{w\}$. It follows that $H$ extends harmonically to $w$ and hence
$H=H_\disc \circ \confmap$ where $\confmap:\domain \to \disc$ is conformal and
\begin{equation*}
H_\disc (z) = \coeffb \imag \log \frac{z-u}{z-v} +const.
\end{equation*}
for some $\beta \geq 0$.
To reach a contradiction, we will show that $H_\disc$ has a critical point somewhere in $\overline{\disc} \setminus \{u,v\}$.

Notice that the boundary values of $H_\delta$ are $\zeta,\xi,\eta$ and $\eta+1$
on the arcs $a_\delta b_\delta$ and $b_\delta a_\delta$ and
at the points
$w_\delta-i\delta/2$ and $w_\delta + i\delta/2$, respectively.
We know that $0< \xi-\zeta < 1$. Therefore
either $\xi<\eta+1$ or $\eta<\zeta$. Suppose that the former happens. The other case can be dealt with in a similar manner.

By applying maximum principle to $H_\delta^\bullet$ , it follows that
there exists from any point $z$ 
a path to the boundary of the domain or to $w_\delta$ such that $H_\delta^\bullet$
is strictly increasing along the path.
By the values of the normal derivative on the boundary the path can only hit $b_\delta a_\delta$ or $w_\delta$. Furthermore
the points that can be connected to $b_\delta a_\delta$ form a connected set and likewise the points that can be connected to $w_\delta$
and those sets exhaust the whole set of vertices. Define the boundary $I$ between those sets as being the set of edges
whose one end is in one of the sets and the other is in the other set.
Let $x^*$ be the vertex in $I$ that has the maximal value of $H_\delta^\bullet$ in that set.
Since $\eta+1>\xi$, the point $x^*$ can't be close to $w_\delta$ for small $\delta$. By compactness of $\overline{\domain}$ we can suppose that 
$x^*_{\delta_n}$ converges as $n \to \infty$ to some point $x^* \in \overline{\domain} \setminus \{w\}$.

Suppose that $x^* \in \domain \setminus \{w\}$. Then for any $r>0$ such that $B(x^*,r) \subset \domain \setminus \{w\}$
it holds that $H_\delta^\bullet - H_\delta^\bullet(x^*_\delta)$ changes sign at least $4$ times along
$\partial B(x^*,r)$. Furthermore the angles at which the peaks and valleys appear are bounded away from each other.
By convergence of $\frac{1}{\delta_n M_n^2} H_{\delta_n}$ to $H$ to this continuous to hold for $x^*$.
Since $H= \imag \Psi$ for some holomorphic $\Psi$ in the neighborhood, we find that
$\Psi'(x^*)=0$ and $x^*$ is a saddle point for $H$. This is a condradiction.

A similar conclusion on a contradiction can be made for $x^* \in \partial \domain$.
Thus \assumpcref{} holds.
\end{proof}

Finally, we need still the following result on the convergence of the observables
which can be extracted from the above proof. 
The proof of the conjecture is similar to the beginning of the proof of Theorem~4.3 in
\cite{ChelkakSmirnov:2012}.

\begin{corollary}[Uniform convergence of observables over a class of domains]
\label{cor: uniform conv obs}
The convergence in Theorem~\ref{thm: conv obs} 
is uniform with respect to $(\domain;a,b,w)$,
whenever $B(w,r) \subset \domain \subset B(w,R)$. More specifically,
for each $\delta>0$, there is $\eps(\delta)>0$ such that $\eps(\delta)\to 0$ as $\delta \to 0$ and
\begin{equation}
\sup_{z \in (\domain_\delta)_{w_{\delta},r}}
  |\delta^{-1} F_\delta(z; \domain_\delta,a_\delta,b_\delta,w_\delta) - F(z; \domain_\delta,a_\delta,b_\delta,w_\delta)| 
     <     \eps(\delta)
\end{equation}
for all discrete domains $(\domain_\delta,a_\delta,b_\delta,w_\delta)$ 
such that $B(w_\delta,r) \subset \domain \subset B(w_\delta,R)$.
Here $(\domain_\delta)_{w_{\delta},r}$ is the $r$-connected neighborhood of $w_{\delta}$ in $\domain_\delta$.
The same claim holds when $F_\delta$ is replaced by $\tilde F_{\delta}$.
\end{corollary}

\begin{proof}
Assume the opposite: there is $\eps>0$ and sequence $(\domain_{\delta_n},a_{\delta_n},b_{\delta_n},w_{\delta_n})$
and $z_n \in  (\domain_{\delta_n})_{w_{\delta_n},r}$
such that 
$$|\delta_n^{-1} F_{\delta_n}(z_n; \domain_{\delta_n},a_{\delta_n},b_{\delta_n},w_{\delta_n}) 
  - F(z_n; \domain_{\delta_n},a_{\delta_n},b_{\delta_n},w_{\delta_n})| \geq \eps .$$ 
The claim for $\tilde F_{\delta}$ can be proven using a similar assumption.

The sequence $(\domain_{\delta_n},a_{\delta_n},b_{\delta_n},w_{\delta_n})$ is compact in the Carath\'eodory convergence
as well as the sequence $z_n$. Thus we can choose a subsequence of $\delta_n$ such that they converge,
say, to $(\domain,a,b,w)$ and $z$, respectively.
The proof of Theorem~\ref{thm: conv obs} implies that 
$\delta_n^{-1} F_{\delta_n}(z_n; \domain_{\delta_n},a_{\delta_n},b_{\delta_n},w_{\delta_n})$
will converge to $F(z; \domain,a,b,w)$. Moreover, from the explicit formula for $F$, we see that
$F(z_n; \domain_{\delta_n},a_{\delta_n},b_{\delta_n},w_{\delta_n})$ converges to
$F(z; \domain,a,b,w)$. This leads to a contradiction.
\end{proof}

\section{A priori bounds for exploration trees and loop ensembles}\label{sec: a priori}

In this section, we present some results which we classify as
a priori results. They describe properties that ensure the regularity
of branches, trees and loops in a manner that is needed for their convergence.
We use often the framework of estimates established in \cite{Kemppainen:2012vma}
and thus, in this particular case, ultimately
rely on the estimates of FK Ising crossing probabilities of \cite{Chelkak:2016jw}
for general topological quadrilaterals.

\subsection{One-to-one correspondence of the tree and the loop-ensemble in the limit}
 \label{ssec: main result a priori}

In the discrete setting we are given a tree--loop ensemble pair. The tree and the loop ensemble
are in one-to-one correspondence as explained earlier. Recall that, 
\begin{itemize}
\item
given the loop ensemble,
the tree is constructed by the exploration process which follows the loops
and jumps to the next loop at points where the followed loop  turns away from
the target point. 
\item
the loops are recovered from the tree by noticing
that each loop corresponds to a small square where the branching to
that loop occurs (this correspondence is $1$-to-$1$, when
we also count the root as one of the branching points). 
We take the incoming edge, which is opposite to 
the other incoming edge that we used to arrive to the
small square for the first time, and select the branch corresponding to that
target edge. The loop is constructed from the branch by taking the part between the
first exit and last arrival and then 
adding to the path the 
edge of the small square that closes it to a loop.
\end{itemize}

We will show that the probability laws that we are considering form
a precompact set in the topology of weak convergence of probability measures.
Take a subsequence of the sequence of the tree -- loop ensemble pairs that converges weakly.
We can choose a probability space so that they converge almost surely.
Next theorem summarizes the convergence of the tree -- loop ensemble pair.
The second assertion basically means that
there is a way to \emph{reconstruct the loops from the tree} also in the limit.

\begin{theorem}\label{thm:  tree to loops in the limit}
Let $(\rndlpe^\disc,\rndttr^\disc)$ be the almost sure limit of $(\rndlpe_{\delta_n}^\disc,\rndttr_{\delta_n}^\disc)$
as $n \to \infty$.
Write $\rndlpe^\disc =(\rndlp_j)_{j \in J}$ and $\rndlpe_{\delta_n}^\disc = (\rndlp_{n,j})_{j \in J}$
(with possible repetitions) such that almost surely for all $j \in J$, $\rndlp_{n,j}$ converges
to $\rndlp_j$ as $n \to \infty$, and then set $x_{n,j}$ to be the target point 
of the branch of 
$\rndttr_{\delta_n}^\disc$ that corresponds to $\rndlp_{n,j}$ in the above bijection 
(described in the beginning of Section~\ref{ssec: main result a priori}).
Then 
\begin{itemize}
\item Almost surely,
the point $x_{n,j}$ converge to some point $x_{j}$ as $n\to \infty$ and
the branch $\rndbran_{x_{n,j}^+}$ converge to some branch denoted by $\rndbran_{x_{j}^+}$ 
as $n\to \infty$ for all $j$.
Furthermore, the points $x_{j}$ that correspond to non-trivial loops ($\rndlp_j$ is not a point) 
are distinct and 
they form a dense subset of $\overline{ \disc}$ 
and $\rndttr$ is the closure of $(\rndbran_{x_{j}^+})_{j \in J}$.
\item 
On the other hand, $(\rndbran_{x_{j}^+})_{j \in J}$ is characterized as being the subset of $\rndttr$
that contains all the branches of $\rndttr$ that have a triplepoint in the bulk or a doublepoint 
on the boundary.
Furthermore, that double or triple point is unique 
and it is the target point (that is, endpoint) of that branch.
Any loop $\rndlp_j$ can be reconstructed from $(\rndbran_{x_{j}^+})_{j \in J}$ 
so that the loop $\rndlp_j$ 
is the part between the second last and last visit to $x^+_j$ by $\rndbran_{x_{j}^+}$.
\end{itemize}
\end{theorem}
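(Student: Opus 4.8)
The plan is to establish the two bullet points as the discrete $1$-to-$1$ correspondence between loops and branching points survives passing to the limit, together with an intrinsic geometric characterization of the relevant branches. The main inputs are the precompactness of the laws (so that we may work on a probability space where $(\rndlpe_{\delta_n}^\disc,\rndttr_{\delta_n}^\disc)$ converge almost surely in the metrics $\delpe$, $\dettr$ of Section~\ref{sssec: metrices}), together with the $5$-arm characterization of a branching point illustrated in Figure~\ref{fig: loop tree correspondence}, and the a priori regularity estimates for branches, trees and loops that the rest of Section~\ref{sec: a priori} provides (crossing / one-arm and multi-arm bounds, which control how close the discrete picture can be to various degenerate configurations).

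First I would handle the first bullet. Fix $j$. The discrete branch $\rndbran_{x_{n,j}}$ passes through the small square $O_{n,j}$ encoding the loop $\rndlp_{n,j}$; by the $1$-to-$1$ correspondence $x_{n,j}$ is the target edge opposite the first-entry edge of that square. Using that $\rndlp_{n,j}\to\rndlp_j$ and that $\rndlp_j$ is a nontrivial loop of positive diameter (on the event of interest), the a priori bounds give that the entry/exit points of $O_{n,j}$ stay a definite distance from $\partial\domain$ and from each other, so up to subsequences $x_{n,j}\to x_j\in\overline\disc$ and (by equicontinuity of branches in the capacity parametrization, i.e.\ tightness in the $\decc$ metric) the branches $\rndbran_{x_{n,j}^+}$ converge to a branch $\rndbran_{x_j^+}$; a diagonal argument over $j$ upgrades this to almost sure simultaneous convergence for all $j\in J$. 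Distinctness of the $x_j$ for distinct nontrivial loops follows because two distinct loops $\rndlp_j\neq\rndlp_k$ have disjoint (hence well-separated at macroscopic scale) neighborhoods, so the corresponding squares $O_{n,j}$, $O_{n,k}$ and thus the target points cannot collapse. Density of $\{x_j\}$ in $\overline\disc$ and the identification of $\rndttr$ with the closure of $(\rndbran_{x_j^+})_j$ follow because the discrete branching points are already dense in $\domain_\delta$ (every small square away from the boundary is visited, cf.\ the DCNIL property and the tree construction) together with Carath\'eodory convergence $\domain_{\delta_n}\to\disc$, and because $\rndttr$ is defined as a Hausdorff limit.

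For the second bullet, the point is an intrinsic description of the family $(\rndbran_{x_j^+})_j$ purely in terms of $\rndttr$. I would argue that, in the limit, a branch has a self-touching — a triple point in the bulk or a double point on the boundary at its endpoint — precisely when it realizes one of the branching squares $O_{n,j}$: the $5$-arm figure of Figure~\ref{fig: loop tree correspondence}(b) degenerates in the scaling limit to the branch approaching, touching, and leaving a single point, so the two orange arms become one strand touching itself and the gray/green arms record the continuation; conversely a branch with no self-touching cannot contain a branching square in its interior, so it is an initial segment of some $\rndbran_{x_j^+}$ but not equal to it, and hence is not in the family. Uniqueness of the self-touching point and its coincidence with the endpoint is where the a priori bounds are essential: one must rule out a branch that touches itself twice, or touches itself before its endpoint, which amounts to excluding certain six-arm (or boundary multi-arm) events at a macroscopic scale — these have probability tending to zero by the polynomial arm-exponent bounds. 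The reconstruction formula, that $\rndlp_j$ is the portion of $\rndbran_{x_j^+}$ between its second-last and last visit to $x_j^+$, is then the scaling limit of the discrete ``tree-to-loops'' rule (take the piece of the branch between the first exit from and last arrival at the small square, then close it up): the closing edge of the square has length $O(\delta_n)\to 0$, so in the limit the two endpoints coincide at $x_j$ and the closed-up piece is exactly the loop, where continuity of the loop map under $\delpe$-convergence is used to identify the limit with $\rndlp_j$.

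The main obstacle I expect is the second bullet's uniqueness claim — showing that in the limit each relevant branch touches itself exactly once and exactly at its endpoint, and that no \emph{other} branch develops a bulk triple point or boundary double point. This is a statement that certain degenerate local configurations (extra arms, pinch points) do not occur in the scaling limit, and it has to be proved quantitatively at the discrete level via the a priori (multi-)arm estimates of this section and then transferred to the limit using the almost sure convergence; getting the bookkeeping of which arm event corresponds to which degeneracy right, uniformly over all the (countably many, but with sizes accumulating at $0$) loops $\rndlp_j$, is the delicate part. Everything else — the subsequential compactness, the diagonal extraction, the Carath\'eodory/density argument, and the passage to the limit in the explicit reconstruction formula — is routine given the metrics and a priori bounds already set up.
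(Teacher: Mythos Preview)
Your outline for the first bullet is broadly in the right spirit, though the paper argues density of $\{x_j\}$ differently (via the observation that the branch to any point makes a nontrivial loop around it in any ball, and the starting points of the resulting excursions are exactly the $x_j$). One small issue: your distinctness argument assumes distinct nontrivial loops are ``well-separated at macroscopic scale'', but in $\cle{16/3}$ loops can touch, so this step needs more care.

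The substantive difference is in the second bullet. You propose to rule out extra triple points (in the bulk) and double points (on the boundary) directly via discrete six-arm estimates and a union bound over locations. The paper does \emph{not} do this. Instead it takes an explicit ``shortcut'': it first uses that any finite subtree (as in Section~\ref{ssec: a priori trees}) converges to a tree of \thekr{} curves --- this is the content of Section~\ref{sec: simple martingales} --- and then argues that any extra triple/double point would, with high probability in the parameter $\eta$, lie on one of the branches of the finite subtree; but that branch is an \thekr{} curve, and such curves are known not to have triple points in the bulk or double points on the boundary away from their endpoints. So the paper leverages the SLE identification rather than arm exponents.

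Your route has a genuine gap as written: to exclude an extra triple point somewhere in the domain by a union bound, you need the six-arm exponent for the interface to exceed $2$. The a priori input available in this section (Theorem~\ref{thm: aizenman--burchard bounds}, Conditions~\ref{cond: annulus}--\ref{cond: annulus exp}) only says that for any $\Delta>0$ there exists \emph{some} $n$ with $n$-arm probability $\le K(r/R)^\Delta$; it does not give the specific bound for $n=6$ with $\Delta>2$. Establishing that would require either the SLE identification (which is what the paper uses) or a separate model-specific arm computation that is not provided here. If you want to keep your approach self-contained, you would need to supply that six-arm input; otherwise, the paper's shortcut via \thekr{} properties is the cleaner way to close the argument.
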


The proof is postponed to Section~\ref{sec: main proof}.

\subsection{Bounds related to multiple crossings of an annulus by the tree}\label{ssec: n arms tree}

\subsubsection{The bound on multiple crossings and consequences}

Multiple crossings of annuli by random curves was considered in 
\cite{Aizenman:1999dt, Kemppainen:2012vma}.
Below we state the consequences for that type of result for the FK Ising exploration tree.
Before that we make some crucial definitions.

We call a random variable $X$ \emph{tight} over a collection of probability measures $\P$
on the probability space,
if for each $\eps>0$
there exists a constant $M>0$ such that $\P(|X|<M)>1-\eps$
for all $\P$.
Remember that a crossing of an annulus $A(z_0,r,R)=\{z \in \C \,:\, r<|z-z_0|<R\}$
is a closed segment of a curve that intersects both connected components of $\C \setminus A(z_0,r,R)$
and a minimal crossing is doesn't contain any genuine subcrossings.

Recall the general setup: we are given a collection $(\varconfmap,\P) \in \Sigma$ 
where the conformal map $\varconfmap$ 
contains also the information about its domain of definition 
$(\vardomain,\vroot,z_0)=(\vardomain(\varconfmap),\vroot(\varconfmap),z_0(\varconfmap))$ 
through the requirements
\begin{equation*}
\varconfmap^{-1}(\disc)=\vardomain, \qquad \varconfmap(\vroot)=-1 \qquad \text{and} \qquad \varconfmap(z_0)=0
\end{equation*}
and $\P$ the probability law of
FK Ising model on the discrete domain and in particular gives the distribution of
the FK Ising exploration tree.
Given the collection $\Sigma$ of pairs $(\varconfmap,\P)$ 
we define the collection $\Sigma_\disc = \{\varconfmap\P \,:\, (\varconfmap,\P)\in\Sigma \}$
where $\varconfmap\P$ is the pushforward measure defined by $(\varconfmap\P) (E) = \P(\varconfmap^{-1}(E))$. 

\begin{theorem}\label{thm: aizenman--burchard bounds}
The following claim holds for the collection of the probability laws of FK Ising exploration trees
\begin{itemize}
\item 
for any $\Delta>0$, there exists $n \in \N$ and $K>0$ such that the following holds
\begin{equation*}
\P( \text{at least $n$ disjoint segments of $\rndttr$ cross } A(z_0,r,R))
  \leq K \left( \frac{r}{R} \right)^\Delta
\end{equation*}
for all $\P \in \Sigma_\disc$.
\end{itemize}
and there exists a positive number $\alpha, \alpha'>0$ such that the following claims hold
\begin{itemize}
\item 
if for each $r>0$, $M_r$ is the minimum of all $m$ such that each 
$\rndbran \in \rndttr$ can be split into $m$
segments of diameter less or equal to $r$, then
there exists a random variable $K(\rndttr)$ such that
$K$ is a tight random variable for the family $\Sigma_\disc$
and
\begin{equation*}
M_t \leq K(\rndttr) \, r^{-\frac{1}{\alpha}}
\end{equation*}
for all $r>0$.
\item All branches of $\rndttr$ can be jointly parametrized so that they are
all $\alpha'$-H\"older continuous and the H\"older norm can be bounded
by a random variable $K'(\rndttr)$ such that $K'$  
is a tight random variable for the family $\Sigma_\disc$.
\end{itemize}
\end{theorem}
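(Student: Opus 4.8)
The plan is to derive all four assertions from the single-curve Aizenman--Burchard theory, after matching its hypotheses with Condition~\ref{cond: annulus} or \ref{cond: annulus exp}. Observe first that the second and third bullets are, by their very formulation, statements about the branches of $\rndttr$ taken jointly: $M_r(\rndttr)$ is defined as the supremum over branches $\rndbran\in\rndttr$ of the number of $r$-pieces of $\rndbran$, and the H\"older estimate concerns a joint parametrization of all the branches. They will therefore follow from uniform bounds for individual branches, provided the constants are uniform over the branches and over $\Sigma_\disc$; and this uniformity is exactly what target-independence and the domain Markov property provide, since conditioning a branch on an initial segment leaves its continuation an exploration branch in the unexplored region, whose law again lies in a collection of the type handled by Condition~\ref{cond: annulus}/\ref{cond: annulus exp}. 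So the only assertion that genuinely involves the tree as a whole is the $n$-arm bound of the first bullet.

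For that bound I would iterate the single-crossing estimate built into the hypotheses. Condition~\ref{cond: annulus} (resp.\ \ref{cond: annulus exp}) says that, uniformly over $\Sigma$ and over the conditional laws obtained after exploring a part of the tree, an \emph{unforced} crossing of an annulus $A(z_0,r,R)$ has probability bounded by a power (resp.\ an exponential) of $r/R$. Running this with the spatial Markov property --- once $k$ disjoint crossings of $A(z_0,r,R)$ by $\rndttr$ and the part of $\rndttr$ realizing them have been revealed, a $(k{+}1)$-st disjoint crossing is an unforced crossing in the still unexplored region, hence once more improbable --- one obtains, exactly as in \cite{Kemppainen:2012vma, Kemppainen:2015vu}, that for each $\Delta>0$ there exist $n\in\N$ and $K>0$ with
\[
  \P(\text{at least } n \text{ disjoint segments of } \rndttr \text{ cross } A(z_0,r,R)) \le K\left(\frac{r}{R}\right)^{\Delta}
\]
for all $\P\in\Sigma_\disc$.

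Feeding this polynomial multi-crossing estimate into \cite{Aizenman:1999dt} then gives the last two bullets: the chaining argument there (a union bound over a grid of annuli together with Borel--Cantelli in the scale) produces an exponent $\alpha>0$ and a random variable $K(\rndttr)$ with $M_r\le K(\rndttr)\,r^{-1/\alpha}$ for all $r$, and it simultaneously produces a joint parametrization of the branches of $\rndttr$ --- e.g.\ the one in which crossing a dyadic annulus costs unit time, equivalently the capacity parametrization --- in which they are all $\alpha'$-H\"older with norm $K'(\rndttr)$; since the constants $n,K$ above are uniform over $\Sigma_\disc$, the random variables $K$ and $K'$ are tight over $\Sigma_\disc$. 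For the final sentence one checks that the FK Ising exploration tree, on the class of domains with the assumed Carath\'eodory control, belongs to a collection satisfying Condition~\ref{cond: annulus}: this is precisely the a priori input established in \cite{Kemppainen:2015vu}, where it is deduced from the convergence and non-degeneracy of the observable (Theorem~\ref{thm: conv obs}) via RSW-type crossing bounds, a branch being, up to its first disconnection, the chordal exploration path treated there.

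The step I expect to be delicate is the iteration behind the $n$-arm bound: one has to make sure that after revealing a part of $\rndttr$ that already contains $k$ disjoint crossings of the annulus, a further disjoint crossing is genuinely \emph{unforced}, that the conditional law of the unexplored part still lies in a collection to which Condition~\ref{cond: annulus}/\ref{cond: annulus exp} applies (stability of $\Sigma$ under the conditionings used), and that ``disjoint segments of the tree'' is the correct bookkeeping, a tree segment being possibly spread over two branches meeting at a branching point inside the annulus. The remaining ingredients --- the reduction to branches, and the passage from the multi-crossing bound to H\"older regularity and the $M_r$ estimate --- are citations to \cite{Kemppainen:2012vma, Aizenman:1999dt} or routine union-bound/Borel--Cantelli arguments.
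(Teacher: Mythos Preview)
The paper does not actually prove this theorem: it states it as a result ``which follows from the results of \cite{Aizenman:1999dt, Kemppainen:2012vma, Kemppainen:2015vu}'' and gives no further argument. Your proposal is an expanded sketch of exactly that citation --- iterate the conditional crossing bound across scales to get the $n$-arm estimate, then feed it into the Aizenman--Burchard machinery --- so you are on the same route as the paper (just more explicit).

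One small correction: you mischaracterize the two conditions. Condition~\ref{cond: annulus} gives a \emph{constant} bound $1/2$ on the conditional crossing probability for annuli of fixed modulus (not a power), while Condition~\ref{cond: annulus exp} gives the power law $K(r/R)^\Delta$; neither gives an exponential bound. The iteration you describe is the standard one: under Condition~\ref{cond: annulus}, split $A(z_0,r,R)$ into $\asymp\log(R/r)$ concentric sub-annuli of modulus $C$ and multiply the $1/2$'s to get a power of $r/R$, which then matches the input needed for \cite{Aizenman:1999dt}. Your caveat about the bookkeeping of ``disjoint segments of the tree'' and the stability of the conditioned laws is well placed --- that is precisely what is handled in \cite{Kemppainen:2015vu} and why the paper defers to it.
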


The theorem highlights the importance of probability bounds on
multiple crossings of annuli. Each of the claims have their own applications
below although they are closely related, see \cite{Aizenman:1999dt}.

Call segments of branches of the exploration tree crossing an annulus $A=A(z_0,r,R)$
as simply as \emph{(tree) crossings} and open or dual-open paths crossing $A$ 
in the FK configuration as \emph{arms}.

\begin{lemma}\label{lem: crossings and arms}
It holds that the number of disjoint tree crossings is at most one 
greater than the number of disjoint arms. 
\end{lemma}

\begin{proof}
Use the notation $\underline{\rndbran}(t)$ to denote the order of exploration
of branches in the sense that
$\underline{\rndbran}(t) = \rndbran_k(t-k)$ when $t \in [k,k+1)$ where $\rndbran_k$
is the branch to a target point $x_k$.
For a fixed annulus $A=A(z_0,r,R)$, we suppose that we first explore all $x_k$'s
that are not contained in $A$ and only after that $x_k$'s that are contained in $A$.
The possible crossings are found among the former set of target points and thus we will assume below
that $x_k \notin A$.

Denote by $n(t)$ the number of open and dual-open crossings of $A$ observed by time $t$
excluding the edges which are part of the boundary conditions
and by $k(t)$ the minimal number of crossings made in the tree given the observation by time $t$.
Then $n(0)=0$ and $k(0)=1$.

Let $\tau_1$ be the time when the first crossing occurs.
Then $k(\tau_1) = 1 $ or $2$. If $k(\tau_1)=1$ then, in fact, there are no further crossings
of $A$. On the other hand, if $k(\tau_1)=2$ then at least one of the sides of the crossing
is monochromatic (open or dual-open).

We claim that
on each crossing by the tree that follows, 
the numbers $n$ and $k$ change  so that the change in $n$ is at least as great
as the change in $k$.
Thus $k=m$ crossings by the disjoint segments of the tree implies at least $m-1$ by open
or dual-open paths.

For the claim observe that, if  $\underline{\rndbran}[s,t]$ is the first minimal crossing 
(in the sense that it doesn't contain subcrossings)
after $u \geq \tau_1$ (by the time $u$ there is at least one crossing),
then $\underline{\rndbran}[s,t]$ is contained in a connected component $V$
of $\overline{A} \cap (\Omega \setminus \underline{\rndbran} [0,s))$
and $\underline{\rndbran}[s,t]$
can be disjoint from the left-hand boundary or the right-hand boundary of $V$
or it can intersect both of them.
If it is disjoint from both of them, $k$ increases by $2$,
since both of components of $V \setminus  \underline{\rndbran}[s,t]$ need to be crossed at least
once (and can be explored by crossing only once),
and $n$ increases at least by $2$, since
then both sides of $\underline{\rndbran}[s,t]$ are monochromatic (open or dual-open)
paths. 
Notice also that in this case $\underline{\rndbran}[s,t]$ is a segment of a single loop.
Similarly if one of the sides are disjoint from the boundary, say, from the left-hand boundary,
then $k$ increases by $1$ --- similarly as before --- and 
$n$ increases by at least $1$, since the left-hand side of $\underline{\rndbran}[s,t]$
is a monochromatic (open or dual-open) path. Namely, in that case, any branching point has to be on the right
and if there are any, then all the loops are clockwise loops. Finally if the path touches the boundary on both sides,
then $k$ remains unchanged. Thus the claim and the lemma follows.
\end{proof}

\begin{figure}[tb]
\newcommand*{\myscale}{0.27}
\newcommand*{\varmyscale}{0.06}
\centering
\subfigure[An annular portion of the lattice. The region inside the square is enlarged in (b) and (c).] 
{
	%\label{sfig: f crossing a}
	\includegraphics[scale=\varmyscale]
{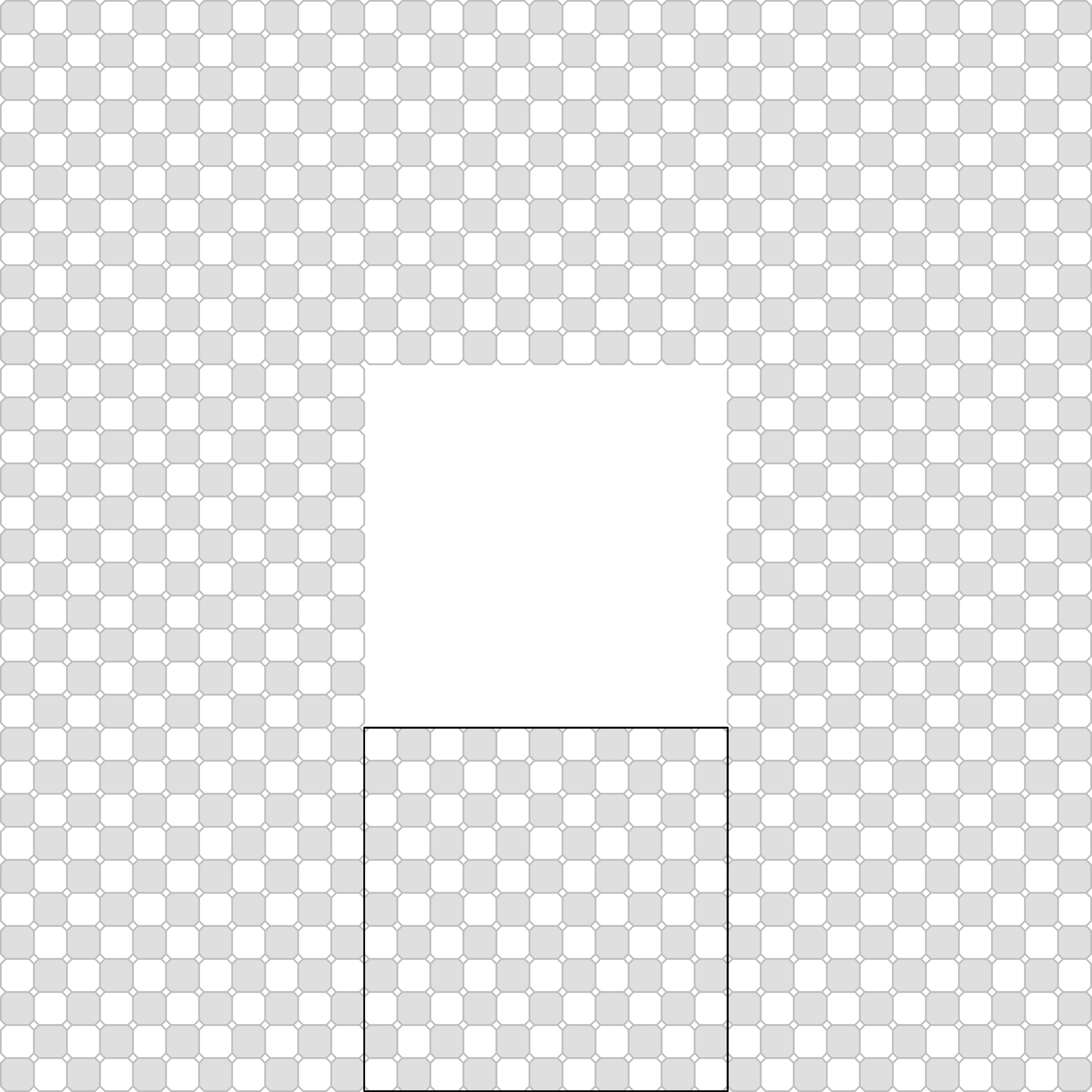}
} 
%\hspace{0.5cm}
\\
\subfigure[If the crossing hits the wired arc 
with its right-hand side
(that is, the crossing visits a small square next to the wired boundary), 
then possible branching,
which occurs when there is an edge connecting the left-hand side of the branch to the wired boundary
(such as the dashed blue line in the figure), implies that there is no further crossings on the right
of this crossing.] 
{
	%\label{sfig: f crossing b}
	\includegraphics[scale=\myscale]
%{Fkising-tree-crossing-p11}
{ciircmii-p11}
} 
\hspace{0.5cm}
\subfigure[Similarly if the right-hand side hits the dual wired boundary, then
the tree will branch and there won't be a further crossing to the right of this crossing. ]
{
	%\label{sfig: f crossing c}
	\includegraphics[scale=\myscale]
%{Fkising-tree-crossing-p12}
{ciircmii-p10}
}
\caption{Illustration of the crossing by the tree and its implied open or dual-open arms and how
touching the boundary prevents a new arm. The crossing occurs here when the black/dark gray reaches
the top of the picture}
\label{fig: f crossing}
\end{figure}

\begin{proof}[Proof of Theorem~\ref{thm: aizenman--burchard bounds}]
We need to verify the first claim
and the two other claims follow from it, by results of \cite{Aizenman:1999dt}.

Let $A=A(z_0,r,R)$ be annulus, $z_0 \in \disc$. Since
either $B(z_0,\sqrt{rR}) \subset \disc$ or $B(z_0,\sqrt{rR}) \cap \partial \disc \neq \emptyset$,
it holds that we can choose $r_1= \sqrt{rR}$, $R_1=R$ or $r_1=r$, $R_1= \sqrt{rR}$ 
such that for $A_1 = A(z_0,r_1,R_1)$, either $A_1 \subset \disc$ or 
$B(z_0,r_1) \cap \partial \disc \neq \emptyset$.
For that $A_1$, apply estimates of conformal distortion
--- such as those in Section~6.3.4 of \cite{Kemppainen:tb} ---  
to show that for some universal constant $\alpha>0$ it holds that
there exists $z_1 \in \C$ and $\tilde r>0$ such that the annulus
$\tilde A = A(z_1,\tilde r, \tilde R)$, where $\tilde R = \tilde r \, (R/r)^\alpha$,
such that the conformal image
of any crossing of $A_1$ under $\varconfmap^{-1}$ is a crossing of $A_2$.

By Lemma~\ref{lem: crossings and arms} and the
results of \cite{Chelkak:2016jw} (in particular, Lemma~5.7)
applied to crossings of $A_2$,
it follows that 
for some constant $K_n$ and $\Delta_n$ it holds 
for all annuli $A=A(z_0,r,R)$ and for all $\P \in \Sigma_\disc$ that
\begin{equation*}
\P(\text{at least $n$ disjoint segments of $\therndttr$ cross $A$} ) \leq K_n \ffrac{r}{R}^{\Delta_n} .
\end{equation*}
Here the constant $\Delta_n$ can be chosen so that they tend to $\infty$ as $n\to \infty$.
\end{proof}

\subsection{An annulus crossing property of a single branch}

A similar bound as used in \cite{Kemppainen:2012vma} will be verified next for a single branch.
It will imply a framework of estimates for the random curve which is sufficient
for the \emph{tightness in the topology of convergence of curves in the capacity parametrization}.

\subsubsection{A bound for unforced crossings}

We will make the next definition
for any annulus $A=A(z_0,r,R)$, fixed $w$ and any random time $\tau$.
We assume that $\tau$ belongs to a family of random times which at least contains
all geometric stopping times of the type ``$\tau$ is equal to the smallest $t$
such that $\therndbran_w(t)$ is in the closed set $B$.''

\begin{definition}
Define $\unf{A}{\tau}$ to be the set of points $z \in A \cap (U \setminus \therndbran_w (0,\tau])$
such that the component of $z$ in
$A \cap (U \setminus \therndbran_w (0,\tau])$ doesn't disconnect $\gamma(\tau)$
from $w$ in $A \cap (U \setminus \therndbran_w (0,\tau])$.
We say that $\unf{A}{\tau}$ is \emph{avoidable} and that any crossing of $A$
which is contained in $\unf{A}{\tau}$ is \emph{unforced}. 
\end{definition}

The next definition  is a version of Condition~G2 of \cite{Kemppainen:2012vma}.

\begin{condition*} 
The random curve $\therndbran_w$
is said to satisfy a \emph{geometric power-law bound on an unforced crossing}
with constants $K>0$ and $\Delta>0$, if
for any random time $0 \leq \tau \leq 1$ and for any annulus $A=A(z_0,r,R)$ where $0 < r \leq R$,
\begin{equation}\label{ie: cond annulus exp}
\P \big( \big.
      \therndbran_w[\tau,1] \textrm{ makes a crossing of } A
      \text{ which is contained in } \overline{A^u_\tau}
   \,\big|\, \therndbran_w[0,\tau] \big) \leq K \left( \frac{r}{R} \right)^\Delta  . 
\end{equation}
\end{condition*}

\begin{remark}
The difference in the current paper and \cite{Kemppainen:2012vma} is that here the target $w$
is an interior point whereas in \cite{Kemppainen:2012vma} the target $b$ was a boundary point.
However the conditions of \cite{Kemppainen:2012vma} are not affected by this (in fact
interior points were used in \cite{Kemppainen:2012vma} to formulate the conditions
in the case of a target on the boundary).
\end{remark}

We will then verify the condition.

\begin{theorem}
Any branch $\therndbran_w$ satisfies \refcond{}
and the constants $\Delta$ and $K$ can be chosen to be uniform for all $U,a,w$.
\end{theorem}

\begin{proof}
We will use the tools from \cite{Kemppainen:2012vma}
together with the RSW estimate for general quadrilaterals shown in \cite{Chelkak:2016jw}.

For a topological quadrilateral (quad) $Q$ denote the event, that 
the ``long sides'' are connected by both open and dual-open crossing of $Q$, by $E(Q)$.
Then for any $\eps$, there is a number $M$ such that if the extremal length of $Q$ (measured
for the curve family connecting the ``short sides'')
is greater than $M$, then the probability of $E(Q)$ is at least $1-\eps$.
Consider $A^u_\tau$ and notice that by the tools of Section~2.2 of \cite{Kemppainen:2012vma},
in particular the inequality~(27), we can find a system of separating arcs (see \cite{Kemppainen:2012vma})
and the related quads $Q_i$ such that
the sum of $1-\P(E(Q_i))$ over $i$ is at most $1/2$. Thus the probability of the complement
of $\bigcap_i E(Q_i)$ is at most $1/2$. Thus for large enough $R/r$, it follows that
the left-hand side of \eqref{ie: cond annulus exp} is at most $1/2$. As explained in \cite{Kemppainen:2012vma},
using this in disjoint concentric annuli implies the claim. 
\end{proof}

\subsubsection{Conformal properties of annulus crossing bounds} 

It turns out to be useful to consider the random curves in $\disc$ or any other
reference domain whose boundary is a smooth simple curve. The reason for
this is mainly to be able to extend any regularity considerations all the way to the endpoints
of the random curves.
Denote by $\varconfmap_w$ the conformal map from $\disc$ to $U$ such that $0$ is mapped
to $w$ and $-1$ to $a$.

What enables us to make the analysis in the domain $\disc$, is the following theorem
on the invariance (or covariance) of \refcond{} type bounds under conformal mappings.
The proof is in Section~2.2 of \cite{Kemppainen:2012vma}.

\begin{theorem}
For any $w \in U$, $\varconfmap_w^{-1}(\therndbran_w)$ satisfies \refcond{}
and the constants $\Delta$ and $K$ can be chosen to be uniform for all $U,a,w$.
\end{theorem}

\subsection{A priori bounds for a branch}

In this subsection, we study precompactness of family of laws of a single branch
The topology is given by the uniform convergence of capacity-parametrized curves
and the main tool is \refcond{}.

\subsubsection{The main lemma for a radial curve}

For similar results as presented in this subsection,
see Proposition~6.4 in \cite{Kemppainen:tb} or 
Appendix~A.2 in \cite{Kemppainen:2012vma}.

Let $\gamma: [0,\infty) \to \C$ be a simple curve such that $|\gamma(0)|=1$,
$0< |\gamma(t)|<1$ for all $t \in (0,\infty)$ and $\lim_{t \to \infty} \gamma(t)=0$. 
Assume also that $\gamma$ is parametrized by the d-capacity, which can always be done
in this case.
Denote the driving process of $\gamma$ by $U$. \emph{To emphasize the driving process},
let's use the notation $\gamma_{U}=\gamma$ as well as $f_U (t,z) = g_t^{-1}(z)$.

Define for $\eps \in (0,1)$
\begin{equation}
F_U (t,\eps) = f_U( t, (1-\eps)U_t) .
\end{equation}
When the Loewner chain is generated by a curve $\gamma_U$, 
$F_U$ extends to a continuous function on $[0,\infty) \times [0,1)$. 
Consequently, $\lim_{\eps \to 0}  \sup_{ t \in [0,T]} |F_U(t,\eps) - \gamma_U(t)| = 0$
for all $T>0$.
To get an uniform property of this type, we look at curves $\gamma_U$ that satisfy
\begin{equation}\label{ie: speed to tip}
\sup_{ t \in [0,T]} |F_U(t,\eps) - \gamma_U(t)| \leq \lambda(\eps) 
\end{equation}
where $\lambda:(0,1) \to (0,\infty)$ is a function such that 
$\lim_{\eps \to 0} \lambda (\eps) = 0$. It is natural to define for any such function $\lambda$
 and any $T>0$,
\begin{equation*}
\mathcal{E}_{\lambda,T} = \left\{ U \,:\, 
  \exists \gamma_U \text{ as above and \eqref{ie: speed to tip} 
  holds } \forall \eps \in (0,1) \right\} .
\end{equation*}
Notice that $\eps \mapsto F_U(t,\eps)$ is the so called \emph{hyperbolic geodesic} between 
$0$ and $\gamma_U(t)$ in the domain $\disc \setminus \gamma_U(0,t]$.

In next subsections, we apply the following lemma to a branch in the FK Ising exploration tree.

\begin{lemma}\label{lm: the main lemma}
Let $\lambda:(0,1) \to (0,\infty)$ be a function such that 
$\lim_{\eps \to 0} \lambda (\eps) = 0$. The map from $U \in \mathcal{E}_{\lambda,T}$
to $\gamma_U$ is uniformly continuous.
More specifically, 
for each $T>0$ and $\eps \in (0,1)$, there exists a constant $C=C(T,\eps)$
such that
if $\lambda$ is as above and
$U_k \in \mathcal{E}_{\lambda,T}$ for $k=1,2$, then 
\begin{equation}\label{ie: w to gamma continuity}
\Vert \gamma_{U_1} - \gamma_{U_2} \Vert_{\infty,[0,T]}
 \leq C(T,\eps) \Vert U_1 - U_2 \Vert_{\infty,[0,T]} + 2 \lambda (\eps) .
\end{equation}
\end{lemma}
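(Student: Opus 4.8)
The plan is to compare the two curves through the hyperbolic-geodesic approximants $F_{U_k}(t,\eps)$, using the triangle inequality
\[
\Vert \gamma_{U_1} - \gamma_{U_2}\Vert_{\infty,[0,T]}
\leq \sup_{t\in[0,T]} |\gamma_{U_1}(t) - F_{U_1}(t,\eps)|
  + \sup_{t\in[0,T]} |F_{U_1}(t,\eps) - F_{U_2}(t,\eps)|
  + \sup_{t\in[0,T]} |F_{U_2}(t,\eps) - \gamma_{U_2}(t)|.
\]
The first and third terms are each at most $\lambda(\eps)$ by the defining membership $U_k\in\mathcal{E}_{\lambda,T}$ and inequality~\eqref{ie: speed to tip}; together they contribute the $2\lambda(\eps)$ in~\eqref{ie: w to gamma continuity}. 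So everything reduces to bounding the middle term, i.e.\ the difference of the two inverse Loewner maps evaluated at the respective interior points $(1-\eps)U_k(t)$, by $C(T,\eps)\,\Vert U_1-U_2\Vert_{\infty,[0,T]}$.

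The core estimate is a stability statement for the radial Loewner flow in $\disc$ with respect to the driving function: if $U_1,U_2$ are close in the sup-norm on $[0,T]$, then the backward flows $f_{U_1}(t,\cdot)$ and $f_{U_2}(t,\cdot)$ are close, uniformly on the compact subset $\{(1-\eps)\zeta : |\zeta|=1\}$ of $\disc$ which stays at definite distance from the unit circle. First I would write the radial Loewner ODE for $f_{U}(t,z) = g_t^{-1}(z)$ (equivalently, differentiate the relation $g_t(f_U(t,z))=z$), obtaining an evolution equation whose vector field is $w\mapsto w\frac{w+U_t}{w-U_t}$, Lipschitz in $w$ and in $U_t$ as long as $|w|$ is bounded away from $1$. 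The point $(1-\eps)U_t$ sits at distance $\eps$ from the boundary, and a standard distortion/Koebe argument shows that its image under $f_{U}(t,\cdot)$ — which equals $F_U(t,\eps)$, a point on the hyperbolic geodesic to $0$ — stays in an annulus $\{c_1(\eps)\le |w|\le 1-c_2(\eps)\}$ for constants depending only on $\eps$ (and on $T$ for the lower bound, since capacity-time $T$ controls how much the domain can shrink toward $0$). On this region the vector field is uniformly Lipschitz, so a Grönwall argument applied to $|F_{U_1}(t,\eps)-F_{U_2}(t,\eps)|$ — tracking both the difference in initial points $(1-\eps)(U_1(t)-U_2(t))$ and the difference in the driving terms along the way — yields the bound $C(T,\eps)\Vert U_1-U_2\Vert_{\infty,[0,T]}$ with $C(T,\eps)$ of the form (polynomial in $\eps^{-1}$)$\cdot e^{C'T/\eps}$ or similar; the precise form is irrelevant.

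The main obstacle, and the place requiring care rather than routine computation, is justifying the \emph{uniform} two-sided bound keeping $F_{U}(t,\eps)$ away from both $0$ and $\partial\disc$ for all $t\in[0,T]$ and all admissible $U$: the upper bound $|F_U(t,\eps)|\le 1-c_2(\eps)$ is the Schwarz–Koebe estimate for the conformal map $g_t^{-1}$ fixing $0$ with derivative $e^{-t}$ there, applied to the point $(1-\eps)U_t$; the lower bound $|F_U(t,\eps)|\ge c_1(\eps,T)>0$ uses that the hyperbolic distance in $\disc\setminus\gamma_U(0,t]$ from $0$ to $\gamma_U(t)$ is controlled once the d-capacity (hence $t$) is bounded by $T$, so the geodesic point at ``parameter $\eps$'' cannot have already reached a neighborhood of $0$. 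This is exactly where the hypothesis $U\in\mathcal{E}_{\lambda,T}$ and the capacity parametrization are used to get uniformity over the family. One should also check that the curves $\gamma_{U_k}$ are a priori well-defined and continuous (guaranteed by $U_k\in\mathcal{E}_{\lambda,T}$), so that all quantities in the triangle inequality make sense, and handle the small-$t$ regime where $F_U(t,\eps)$ starts near $U_0$ on the boundary — there one takes $\eps$ fixed so the starting point $(1-\eps)U_0$ is already interior, and the flow is smooth from $t=0$. Compare also \PropositionLoewnerLip{} and Appendix~A.2 of \cite{Kemppainen:2012vma}, where analogous chordal estimates are carried out; the radial case is parallel with $\half$ replaced by $\disc$ and the Loewner vector field adjusted accordingly.
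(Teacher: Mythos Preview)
Your overall strategy --- the triangle inequality through the geodesic approximants $F_{U_k}(t,\eps)$, reducing everything to a Lipschitz-type stability bound for the inverse radial Loewner map --- is exactly the paper's. The paper first proves, for any $|z_k|\le 1-\eps$,
\[
\sup_{t\in[0,T]}|f_{U_1}(t,z_1)-f_{U_2}(t,z_2)|\le \tfrac{C(T,\eps)}{2}\bigl(\|U_1-U_2\|_{\infty,[0,T]}+|z_1-z_2|\bigr),
\]
then specializes to $z_k=(1-\eps)U_k(t)$ so that $|z_1-z_2|\le\|U_1-U_2\|_{\infty,[0,T]}$, and finishes by your triangle inequality.

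Where your write-up drifts is in the mechanism for that Lipschitz bound. Differentiating $g_t(f_U(t,z))=z$ in $t$ does \emph{not} yield an ODE with vector field $w\mapsto w\frac{w+U_t}{w-U_t}$ acting on $f_U$; that is the forward field for $g_t$, and the resulting equation for $f_U$ carries an uncontrolled factor $g_t'(f_U)$. The paper instead uses the time-reversal device (its \LemmaReverseLoewner{} and \LemmaReverseLoewnerLip{}): for each fixed $t$, $f_U(t,z)$ coincides with the forward radial flow driven by $s\mapsto U_{t-s}$ and started at $z$; this reverse-flow trajectory stays in $\{|w|\le 1-\eps\}$ by the Schwarz lemma (since each intermediate map fixes $0$ and sends $\disc$ into $\disc$), and on that compact set the vector field is uniformly Lipschitz in both $w$ and the driving term, so Gr\"onwall applies. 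In particular, the ``main obstacle'' you single out --- a two-sided annulus bound on the \emph{image} point $F_U(t,\eps)$ --- plays no role: the only compactness needed is on the \emph{input} side $|z_k|\le 1-\eps$, which is trivially true. The lower bound $|F_U(t,\eps)|\ge c_1(\eps,T)$ you derive is correct (Koebe distortion) but irrelevant to the argument.
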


\begin{proof}
Fix $T>0$. It is fairly straightforward to show that 
there is a constant $C(T,\eps)$ such that
for any $z_k \in \C$ such that $|z_k| \leq 1 -\eps$
and for any continuous $U_k : [0,\infty) \to \partial \disc$ it holds that
\begin{equation}\label{ie: loewner lip}
\sup_{t \in [0,T]} |f_{U_1}(t,z_1) - f_{U_2}(t,z_2)|
  \leq \frac{C(T,\eps)}{2} ( \Vert U_1 - U_2 \Vert_{\infty,[0,T]} + |z_1 - z_2|) .
\end{equation}
For instance, this can be derived using the same route as
\PropositionLoewnerLip{}. Namely, first use a version of
\LemmaReverseLoewner{} to relate $f_k(t,z)$ to a time-reversal of the (direct)
Loewner flow $g(t,z)$ for a specific driving term. Then use an argument
similar to \LemmaReverseLoewnerLip{} to estimate the difference of
the solutions of the
time-reversed Loewner equation for the two driving terms and for two initial values.
We leave the details to the reader.

Let $z_k = (1-\eps)U_1(t)$, $k=1,2$.
Then $|z_1 - z_2| \leq \Vert U_1 - U_2 \Vert_{\infty,[0,T]}$ and it follows from
\eqref{ie: loewner lip} that
\begin{equation}\label{ie: loewner tip approach lip}
|F_{U_1}(t,\eps) - F_{U_2}(t,\eps) |
  \leq C(T,\eps) \Vert U_1 - U_2 \Vert_{\infty,[0,T]} .
\end{equation}
for all $t \in [0,T]$.

Next use the assumption that $U_1,U_2 \in \mathcal{E}_{\lambda,T}$. Then
by the triangle inequality, the inequality~\eqref{ie: w to gamma continuity}
follows.

Let's finalize the proof by showing
that the uniform continuity of the map $U \mapsto \gamma_U$.
For any $\tilde \eps>0$, choose $\eps>0$ such that $\lambda(\eps) \leq \tilde \eps/3$.
Then choose $\delta>0$ such that $C(T,\eps) \delta  \leq \tilde \eps/3$.
It follows from \eqref{ie: w to gamma continuity} 
that for any $U_1,U_2 \in \mathcal{E}_{\lambda,T}$
such that $\Vert U_1 - U_2 \Vert_{\infty,[0,T]} < \delta$,
it holds that $\Vert \gamma_{U_1} - \gamma_{U_2} \Vert_{\infty,[0,T]} 
  < C(T,\eps)\delta + 2 \lambda(\eps)< \tilde \eps$.
\end{proof}

\subsubsection{Uniform probability bound on the modulus of continuity of the driving process}

As demonstrated in \cite{Kemppainen:2012vma}, a probability bound on the modulus of continuity
of the Loewner driving term
follows from the probability bound on annulus crossing
(\refcond) by considering
crossings of thin conformal rectangles along the boundary of the domain.
The argument in \cite{Kemppainen:2012vma} is written for the Loewner equation in
the upper half-plane, but the argument adapts directly to the Loewner equation
in $\disc$. For instance, from the Loewner equation in $\disc$,
we can deduce that there is a constant $c>0$ such that $|\gamma(t)| \geq 1 - c t$
and consequently, $\max_{s \in [0,t]}  |\arg U_s- \arg U_0| \geq L>0$ where $L$ and $t$ are small and $L$
is much greater than $t$,
only if $\gamma$ exits $\{z \,:\, |z|>1-ct, \arg z - \arg U_0 \geq L/2\}$
from the ``sides''. Consequently the following theorem holds.
For the original result, see \cite{Kemppainen:2012vma}, Section~3.3.

\begin{theorem}\label{thm: tight driving process}
Let $\beta \in (0,\frac{1}{2})$.
For any branch of $\rndttr$ with the target point in a compact subset of $\disc$ 
and with the d-capacity parametrization, the driving process is 
$\beta$-H\"older continuous and the H\"older norm 
is a tight random variable for the family $\Sigma_\disc$.
\end{theorem}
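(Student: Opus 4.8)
The plan is to adapt to the radial Loewner equation the argument of \cite{Kemppainen:2012vma}, Section~3.3, which deduces regularity of a driving term from crossing estimates. Fix $\beta\in(0,\tfrac12)$ and $T>0$. It suffices to prove a scale-by-scale tail bound, uniform over $\Sigma_\disc$, that is summable over dyadic scales and can be made as small as we wish by inserting a large prefactor: writing $U$ for the d-capacity driving term of the branch, we want, for every $M\ge1$ and every dyadic $\delta>0$,
\begin{equation*}
\sup_{\P\in\Sigma_\disc}\ \P\Bigl(\,\exists\, 0\le s<s'\le T,\ s'-s\le\delta,\ |U_{s'}-U_s|\ge M\,(s'-s)^{\beta}\,\Bigr)\ \le\ \varphi_M(\delta),
\end{equation*}
with $\sum_{n}\varphi_M(2^{-n})<\infty$ and $\sum_n\varphi_M(2^{-n})\to 0$ as $M\to\infty$. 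Once this is established, a standard chaining argument turns it into a.s.\ $\beta$-H\"older continuity of $U$ on $[0,T]$ with a H\"older norm that is tight over $\Sigma_\disc$, which is the assertion; the hypothesis that the target lies in a fixed compact subset of $\disc$ is used to guarantee that the d-capacity parametrization is non-degenerate and that the branch's capacity lifetime exceeds $T$ with overwhelming probability, so that the estimate is meaningful on $[0,T]$.

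The core step is to convert a large increment of $U$ into a crossing event controlled by Condition~\ref{cond: annulus} (or \ref{cond: annulus exp}). Set $t=s'-s$ and $L=M t^{\beta}$; since $\beta<1$ we are in the regime $t\ll L$ described just before the statement. The Loewner equation in $\disc$ shows, as recalled there, that over a capacity interval of length $t$ the branch remains in the boundary strip of radial width $O(\sqrt t)$ around the point $U_s$. Hence, after mapping out $\gamma[0,s]$ by $g_s$, an increment $|\arg U_{s'}-\arg U_s|\ge L$ forces the continuation of the branch to cross ``the long way'' the conformal rectangle $\{\,1-c\sqrt t<|z|<1,\ \arg U_s<\arg z<\arg U_s+L/2\,\}$, whose modulus is $\gtrsim L/\sqrt t$; equivalently, it must make of order $L/\sqrt t$ successive disjoint crossings of annuli of bounded modulus anchored near $U_s$. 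This is where the square root of the capacity enters, and with it the threshold $\beta=\tfrac12$. Conditioning on the branch up to capacity time $s$ and applying Condition~\ref{cond: annulus} at the successive stopping times at which the continuation enters these sub-annuli --- each of which holds uniformly over $\Sigma_\disc$ by Theorem~\ref{thm: FK Ising and condition conf} (cf.\ also Theorem~\ref{thm: aizenman--burchard bounds}) --- bounds the conditional probability of such a traversal by $q^{\,c\,L/\sqrt t}$ for some $q\in(0,1)$ and $c>0$ independent of $\P\in\Sigma_\disc$. Taking a union over the $O(T/\delta)$ capacity times $s=j\delta$ (an oscillation of size $M\delta^\beta$ on some interval of length $\delta$ produces, at the nearest mesh time, a traversal of essentially the same rectangle) gives $\varphi_M(\delta)\lesssim T\delta^{-1}q^{\,cM\delta^{\beta-1/2}}$, which sums over dyadic $\delta$ and tends to $0$ as $M\to\infty$ because $\beta<\tfrac12$.

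Two points need care. First, the crossing must be extracted at the random capacity time $s$: this is exactly where one uses that Condition~\ref{cond: annulus} applies at stopping times, and where the forced/unforced dichotomy of \eqref{eq: cond annulus} must be respected --- a sub-rectangle that the continuation is funneled through in order to reach its target is treated via the ``$A^f$'' alternative (and for a single branch the excluded ``branching points on both sides'' case does not occur), an optional detour via the ``$A^u$'' alternative --- just as in \cite{Kemppainen:2012vma}. Second, one must check that the mapping-out by $g_s$ and the reduction to a fresh Loewner evolution from $U_s$ do not distort the near-boundary strip by more than bounded factors, so that the modulus estimate $\gtrsim L/\sqrt t$ survives. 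I expect the first point to be the main obstacle: making precise, in the disc and at a stopping time, the implication ``a large driving-term increment over a short capacity interval forces a long crossing of a rectangle of modulus $\gtrsim L/\sqrt t$'', with the correct $\sqrt t$ scale and the forced/unforced bookkeeping; the subsequent passage from the scale-by-scale tail bound to tight H\"older regularity is routine.
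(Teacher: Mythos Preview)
Your proposal is correct and follows essentially the same approach as the paper: both defer to \cite{Kemppainen:2012vma}, Section~3.3, and sketch the adaptation from the chordal to the radial Loewner equation via the observation that a large driving increment over a short capacity interval forces a long crossing of a thin boundary rectangle, which is then controlled by the annulus crossing condition (here Condition~\ref{cond: annulus}/\ref{cond: annulus exp} through Theorem~\ref{thm: FK Ising and condition conf}). Your write-up is in fact more detailed than the paper's, which essentially just points to the reference. One small remark: you use the radial width $O(\sqrt t)$, whereas the paper's motivating sentence writes $|\gamma(t)|\ge 1-ct$; your scale is the correct one (it is what transfers from the half-plane estimate under a M\"obius map) and is precisely what produces the threshold $\beta=\tfrac12$. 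Also, in this setting the long rectangle crossing is always unforced (after mapping out by $g_s$ the target sits at $0$), so you do not actually need to invoke the $A^{\mathrm f}$ alternative --- but including it does no harm.
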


\subsubsection{Uniform probability bound on the modulus of continuity of the hyperbolic geodesic}

Similarly as in the previous subsection, we can adapt the theory in \cite{Kemppainen:2012vma}
to the case of the Loewner equation of $\disc$ to deduce the following result.

\begin{theorem}\label{thm: tight speed to tip}
For any $T>0$, there exists a function 
$\lambda:(0,1) \to (0,\infty)$ such that 
$\lim_{\eps \to 0} \lambda (\eps) = 0$ and the following holds.
Any branch of $\rndttr$ with the target point in a compact subset of $\disc$ 
\eqref{ie: speed to tip} holds for $T>0$ and $0<\eps < \eps_0$
where $\eps_0$
is a tight random variable for the family $\Sigma_\disc$.
\end{theorem}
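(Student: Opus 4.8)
The plan is to derive \eqref{ie: speed to tip} from the multiple annulus-crossing bounds of Theorem~\ref{thm: aizenman--burchard bounds}, by transplanting to the radial Loewner equation the argument carried out for the chordal equation in Section~3.3 of \cite{Kemppainen:2012vma}. As in the discussion preceding Theorem~\ref{thm: tight driving process}, the passage from the half-plane to $\disc$ affects only the geometry near the root, and this is handled by the elementary bound $|\gamma_U(t)| \ge 1 - ct$ that the Loewner equation in $\disc$ provides.

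The first ingredient is deterministic. For a simple curve $\gamma_U$ parametrized by d-capacity with $\gamma_U(0) \in \partial\disc$ and $\lim_{t\to\infty}\gamma_U(t) = 0$, and for $t \in [0,T]$, the point $F_U(t,\eps) = f_U(t,(1-\eps)U_t)$ lies on the hyperbolic geodesic of $D_t = \disc \setminus \gamma_U(0,t]$ joining $0$ to the tip $\gamma_U(t)$, separated from the tip by a sub-geodesic of harmonic measure of order $\eps$ in the disc coordinate. Using the Koebe distortion theorem together with the Beurling projection estimate, I would show that, for every $\lambda>0$, if $|F_U(t,\eps) - \gamma_U(t)| \ge \lambda$ then $D_t$ has a bottleneck near the tip of width at most $\psi(\eps)\,\lambda$, with $\psi(\eps)\to 0$ as $\eps \to 0$ and $\psi$ independent of $\lambda$, enclosing a region of Euclidean diameter $\ge c\lambda$; equivalently, a subarc of $\gamma_U$ near the tip comes within distance $\psi(\eps)\,\lambda$ of $\gamma_U \cup \partial\disc$ while spanning diameter $\ge c\lambda$. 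By the combinatorial lemma of \cite{Aizenman:1999dt} and Section~3 of \cite{Kemppainen:2012vma} — such a ``small conformal extent, large diameter'' configuration forces, for any prescribed $n$ (at the cost of shrinking $\psi$), either $n$ disjoint crossings of some annulus $A(z_0,r,R)$ with $z_0$ near the piece, $R \asymp \lambda$ and $r/R \le \psi(\eps)$, or $n$ disjoint subarcs of diameter $\ge c\lambda\,2^{-j}$ for some $j \le C\log(1/\eps)$ — one obtains: if \eqref{ie: speed to tip} fails for $(\lambda,\eps)$ at some $t\le T$, then either $\rndttr$ crosses such a thin annulus at least $n$ times, or $M_{c\lambda\,2^{-j}}(\rndttr) \ge n$ for some such $j$. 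All estimates are uniform in $t \le T$ and in $\gamma_U$, provided $\gamma_U[0,T]$ stays in a fixed compact subset of $\disc$, which (up to a uniformly bounded M\"obius change of coordinates) is exactly the content of the hypothesis that the target point lies in a fixed compact subset.

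The second step is probabilistic and uniform over $\Sigma_\disc$. Fix $\Delta > 3$; Theorem~\ref{thm: aizenman--burchard bounds} then provides an $n$ and a $K$ with $\P(\text{at least $n$ disjoint segments of }\rndttr\text{ cross }A(z_0,r,R)) \le K(r/R)^\Delta$ for all $\P \in \Sigma_\disc$, together with the tightness of $M_\cdot$, which makes $\P(M_{c\lambda\,2^{-j}} \ge n)$ uniformly small once $n$ exceeds a tight random variable. Fix once and for all a slowly decaying $\lambda$ (say $\lambda(\eps) = \psi(\eps)^{1/2}$), cover the admissible base points $z_0$ at scale $R \asymp 2^{-k}$ by a net of cardinality $O(2^{2k})$, and take a union bound over the net, over dyadic scales $R \le 1$, and over the finitely many relevant $j$; this yields $a_m \dd= \P\big(\exists\, \eps \in (2^{-m-1}, 2^{-m}] : \sup_{t \in [0,T]}|F_U(t,\eps) - \gamma_U(t)| > \lambda(\eps)\big)$ with $\sum_m a_m < \infty$, the bound uniform over $\Sigma_\disc$. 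Since the crossing and $M_\cdot$ bounds concern $\rndttr$ as a whole, the resulting estimate does not depend on the branch. By Borel--Cantelli there is almost surely a largest bad index $J$; setting $\eps_0 = 2^{-J-1}$ gives \eqref{ie: speed to tip} for all $0 < \eps < \eps_0$ and every branch with target in the compact set, and $\P(\eps_0 < 2^{-m}) \le \sum_{k \ge m} a_k \to 0$ uniformly over $\Sigma_\disc$, so $\eps_0$ is a tight random variable for the family $\Sigma_\disc$.

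The main obstacle is the deterministic geometric lemma of the second paragraph: quantifying, uniformly in $t \le T$ and in the curve, the chain ``$|F_U(t,\eps) - \gamma_U(t)|$ large $\Rightarrow$ a small-conformal-extent, large-diameter subarc near the tip $\Rightarrow$ many crossings of a thin annulus (or too many small segments)'', with explicit control of $r/R$ and of the crossing multiplicity in terms of $\eps$. This is precisely the part of \cite{Kemppainen:2012vma} that must be redone for the radial flow in $\disc$, combining distortion estimates for conformal maps near a boundary point of the slit disc, the Beurling projection estimate, and the bound $|\gamma_U(t)| \ge 1 - ct$ near the root. Granting this lemma, the probabilistic step above is a routine union bound followed by Borel--Cantelli, entirely parallel to the proofs of Theorems~\ref{thm: aizenman--burchard bounds} and \ref{thm: tight driving process}.
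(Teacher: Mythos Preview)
The paper itself does not give a self-contained proof here: it simply points to \cite{Kemppainen:2012vma}, Sections~3.2 and 3.4, and stresses that the argument uses the general $n$-arms/tortuosity bound \emph{together with} the more specific $6$-arm bound. Your proposal follows the right overall architecture (deterministic ``fjord $\Rightarrow$ crossing'' lemma, then a union bound plus Borel--Cantelli), but there is a genuine gap in the deterministic step, and it is exactly the place where the $6$-arm bound enters in \cite{Kemppainen:2012vma}.

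Concretely, your dichotomy ``if \eqref{ie: speed to tip} fails at $(\lambda,\eps)$ then either $\rndttr$ makes $n$ disjoint crossings of some thin annulus, or $M_{c\lambda 2^{-j}}\ge n$'' is not justified, and in fact is not what the fjord geometry produces. When $|F_U(t,\eps)-\gamma_U(t)|\ge\lambda$, the tip sits at the bottom of a narrow fjord: near the pinch point $z_0$ the curve enters and leaves the fjord and the opposite wall (another piece of $\gamma_U$ or of $\partial\disc$) passes by, giving on the order of six arms in an annulus $A(z_0,r,R)$ with $r/R$ small. Making $\psi(\eps)$ smaller does not manufacture additional disjoint crossings; it only makes the same crossings traverse a thinner annulus. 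So you cannot trade a smaller $\psi$ for an arbitrary multiplicity $n$, and the first assertion of Theorem~\ref{thm: aizenman--burchard bounds} (choose $n$ to get any $\Delta$) does not apply directly. What is actually needed --- and what the paper singles out --- is the specific input that the $6$-arm probability (and the boundary $3$-arm probability) decays like $(r/R)^{2+\delta}$ for some $\delta>0$; this exponent $>2$ is precisely what makes the union bound over an $r$-net of pinch locations and over dyadic scales summable. This is the content of Section~3.4 (not 3.3) of \cite{Kemppainen:2012vma}; Section~3.3 there handles the driving process, i.e.\ Theorem~\ref{thm: tight driving process}.

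In short: keep your overall outline, but replace the unproven ``$n$ crossings for any $n$'' dichotomy by the correct statement that a failure of \eqref{ie: speed to tip} produces an (interior) $6$-arm or (boundary) $3$-arm event at some pinch point, and then invoke the $6$-arm/$3$-arm exponents $>2$ to close the union bound. With that correction your argument matches the route the paper points to.
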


Let's stress here that this result is proven using the
general $n$-arms bounds 
--- and the implied bound for the tortuosity 
(Theorem~\ref{thm: aizenman--burchard bounds} and, in particular, its second assertion) ---
and the more specific $6$-arms bound. See \cite{Kemppainen:2012vma}, Sections~3.2
and 3.4.

The next result is the main theorem among the a priori bounds for a branch.

\begin{theorem}[%
   Tightness of a single branch in the uniform convergence in the d-capacity parametrization]\label{thm: tight branch}
For each $\eps>0$ and each compact set $D_1 \subset \disc$, 
there exists a compact subset $K$ of the space $C([0,\infty))$
such that the following holds. 
Any branch of $\rndttr$ with the target point in $D_1$ 
and parametrizated with the d-capacity (seen from the target) belongs to $K$
with probability at least $1-\eps$ uniformly
for the family $\Sigma_\disc$.
\end{theorem}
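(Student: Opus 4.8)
The plan is to combine the three tightness inputs just established --- Theorem~\ref{thm: aizenman--burchard bounds} (the $n$-arms and tortuosity bounds, hence Hölder control), Theorem~\ref{thm: tight driving process} (tightness of the driving process in $\beta$-Hölder norm), and Theorem~\ref{thm: tight speed to tip} (uniform control of the hyperbolic geodesic, i.e.\ tightness of $\eps_0$) --- with the deterministic continuity estimate of Lemma~\ref{lm: the main lemma}, and then invoke the Arzelà--Ascoli theorem in $C([0,\infty))$. First I would fix $\eps>0$ and, using Theorem~\ref{thm: tight driving process} with some $\beta\in(0,\tfrac12)$, choose $M_1$ so that with probability at least $1-\eps/3$ the driving term $U$ of the branch is $\beta$-Hölder with norm $\le M_1$ on every interval $[0,T]$ (with the norm controlled uniformly in $T$ after the usual normalisation of the d-capacity parametrization). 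Next, using Theorem~\ref{thm: tight speed to tip}, fix a single function $\lambda$ with $\lambda(\eps')\to0$ and choose $\eps_0^\ast>0$ so that with probability at least $1-\eps/3$ the branch satisfies \eqref{ie: speed to tip} for all $\eps'<\eps_0^\ast$ and all $T>0$; on this event the branch lies in $\mathcal{E}_{\lambda_{\eps_0^\ast},T}$ for every $T$, where $\lambda_{\eps_0^\ast}$ is $\lambda$ truncated/modified below $\eps_0^\ast$ so that it still tends to $0$.

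On the intersection of these two events (probability $\ge 1-2\eps/3$) the branch $\gamma_U$ is the image under the uniformly continuous map of Lemma~\ref{lm: the main lemma} of a driving term lying in a fixed $\beta$-Hölder ball; since that ball is a compact subset of $C([0,\infty))$ by Arzelà--Ascoli, and since Lemma~\ref{lm: the main lemma} shows $U\mapsto\gamma_U$ is (uniformly) continuous on $\mathcal{E}_{\lambda,T}$ for each $T$, the image of the ball under this map is relatively compact in $C([0,\infty))$ in the topology of local uniform convergence. More concretely, the estimate \eqref{ie: w to gamma continuity} turns a uniform modulus of continuity and uniform bound for $U$ into an explicit equicontinuity and equiboundedness statement for $\gamma_U$ on each $[0,T]$: given $\tilde\eps>0$ pick $\eps'$ with $2\lambda(\eps')\le\tilde\eps/2$, then the $\beta$-Hölder bound on $U$ controls $\|U_1-U_2\|_{\infty,[s,s']}$ and hence, via $C(T,\eps')$, the oscillation of $\gamma_U$ on short intervals up to the additive $2\lambda(\eps')$. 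Taking a diagonal sequence $T_m\uparrow\infty$ and intersecting over a countable family produces a single compact set $K\subset C([0,\infty))$ containing the branch with probability at least $1-\eps$; one may also throw in the Hölder continuity of $\gamma_U$ itself from Theorem~\ref{thm: aizenman--burchard bounds} to make $K$ concrete, though this is not strictly needed.

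The main obstacle is the bookkeeping needed to pass from control on each finite interval $[0,T]$ to a single compact set in $C([0,\infty))$: Lemma~\ref{lm: the main lemma} and the Hölder bounds are stated per $T$, with constants $C(T,\eps)$ depending on $T$, so one must argue that the branch, being a curve converging to $0$ with d-capacity growing to $\infty$, eventually enters any neighbourhood of $0$ at a rate that is itself tight (this is implicit in the capacity parametrization and the a priori bounds), so that the tail behaviour is uniformly controlled and the diagonal construction actually yields a compact --- not merely $\sigma$-compact --- set. The other point requiring a little care is that the event ``$U\in\mathcal{E}_{\lambda,T}$ for all $T$'' and the event ``$\gamma_U$ lies in the $\beta$-Hölder ball'' must be shown to have the claimed uniform-in-$\Sigma_\disc$ probability lower bounds; but this is exactly the content of the tightness assertions in Theorems~\ref{thm: tight driving process} and~\ref{thm: tight speed to tip}, which hold uniformly over $\Sigma_\disc$, so no new probabilistic input is required and the argument is purely a compactness-extraction from the already-established a priori estimates.
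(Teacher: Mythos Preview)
Your approach is essentially the paper's: combine the H\"older tightness of the driving term (Theorem~\ref{thm: tight driving process}), the uniform speed-to-tip bound (Theorem~\ref{thm: tight speed to tip}), and the deterministic Lemma~\ref{lm: the main lemma} to manufacture the compact set. The paper differs in two ways worth noting. First, it opens with an explicit reduction to the target $x=0$, observing that the M\"obius selfmaps $\varconfmap_x$ of $\disc$ sending $x$ to $0$ are uniformly bi-Lipschitz on $\overline\disc$ when $x$ ranges over a compact $D_1\subset\disc$; this disposes of the ``target in a compact subset'' quantifier at once. Second---and this is exactly what you flag as ``the main obstacle''---the paper handles the tail by defining, for each $n$, a set $K_n$ that includes the explicit requirement $\gamma([n,\infty))\subset B(0,e^{-\alpha_0 n})$, obtained directly from Condition~\ref{cond: annulus exp}; the compact set is then $K=\overline{\bigcap_{n\ge n_0}K_n}$ with $\P(K_n)\ge 1-2^{-n}\eps$.

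One correction: the metric $\decc$ is the supremum norm on all of $[0,\infty)$, not the local-uniform topology you invoke. The d-capacity parametrization by itself only forces the \emph{hull} to approach $0$ (Koebe $1/4$), not that $\gamma(s)$ does for all large $s$; the tail bound is therefore a genuine input from the annulus-crossing power law rather than something ``implicit'', and without it a diagonal argument over finite $T_m\uparrow\infty$ yields compactness only in the strictly weaker local-uniform sense.
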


\begin{proof}
Let $D_1\subset \disc$ be compact and $\eta>0$ such that $\dist(D_1,\partial \disc) \geq 2 \eta$.
For any $x \in D_1$, let $\varconfmap_x$ be the conformal and onto selfmap of the unit disc
such that $\varconfmap_x(x)=0$.
Then $\varconfmap_x$ and $\varconfmap_x^{-1}$ 
are Lipschitz continuous on $\overline{\disc}$ with a uniform Lipschitz norm over all $x$
by a direct caluculation.
Thus we can infact assume that $x=0$ and consider only the branches from the root
(which can be set to be $-1$) to the target $x=0$.

For each $n$, let $K_n$ be the set of simple curves $\gamma$ going from the root to the target
and parametrized with the d-capacity such that for some $\alpha_0>0$, $\beta \in (0,\frac{1}{2})$,
$C_n$, $\lambda_n:(0,1) \to (0,\infty)$ such that 
$\lim_{u \to 0} \lambda_n (u) = 0$ and
$v_n$, the following holds
\begin{itemize}
\item $\gamma([n,\infty)) \subset B(0,e^{-\alpha_0 n})$
\item $\gamma(t)$, $t \in [0,n]$, satisfies that its driving term is
$\beta$-H\"older continuous and the H\"older norm 
is at most $C_n$
\item the bound \eqref{ie: speed to tip}, where $\lambda$ is replaced by $\lambda_n$
and $\eps$ is replaced by $u$, 
holds for $\lambda$, $t \in [0,n]$ and $u \in (0,v_n)$
\end{itemize}
and it holds that $\P(K_n) \geq 1-2^{-n}\eps$.
Such $\alpha_0>0$, $\beta$,
$C_n$, $\lambda_n$ and
$v_n$ exist for $n \geq n_0$ for some $n_0 \in \N$  by 
using \refcond, 
Theorem~\ref{thm: tight driving process} and 
Theorem~\ref{thm: tight speed to tip} for the three claims, respectively,
and them bounding the probability of the intersection of three events
from below by $1$ minus the sum of the probabilities of their complementary events.

Let $K_\infty = \bigcap_{n \geq n_0}  K_n$. Then  $\P(K_\infty) \geq 1-\eps$.
Let $K = \overline{K_\infty}$ where the closure is on the uniform
convergence of the d-capacity-parametrized curves on compact subintervals of $[0,\infty)$.
Using Lemma~\ref{lm: the main lemma} and the Arzel\`a–Ascoli theorem,
it is straightforward to check that
$K$ is sequentially compact and thus compact.
\end{proof}

\subsection{A priori bounds for trees}\label{ssec: a priori trees}

It is straightforward to extend the convergence 
to many, but fixed number of, branches
by using the result of a single branch $n$ times and then using the union bound of probabilities. 
Consequently, we find that the probability that $n$ branches whose target points are in a compact subset of $\disc$,
each belong to a compact subset $K$ of $C([0,\infty))$ has probability greater than $1-\eps$.
It remains to be shown 
that a tree with fixed, large number
of branches is close to the full tree with uniformly high probability.

Let $\eta>0$ and $\tilde I_\eta = (\eta \Z^2) \cap \disc$. For each $x \in \tilde I_\eta$,
choose a point $z_x \in \domain_\delta \cap \delta V_\textnormal{mid}$ such that
$\varconfmap(z_x) \in B(x,\eta)$. This can be done using the following lemma.

\begin{lemma}\label{lm: aux ssec a priori trees}
For each $\eta,r,R$, there exists $\delta_0>0$ such that the following holds.
The set $\varconfmap(\domain_\delta \cap \delta V_\textnormal{mid}) \cap B(x,\eta)$
is non-empty when $\delta \in (0,\delta_0)$ and $B(0,r) \subset \Omega_\delta \subset B(0,R)$.
\end{lemma}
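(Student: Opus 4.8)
The plan is to show that $\varconfmap(\domain_\delta \cap \delta V_\textnormal{mid})$ becomes an $\eta$-net of $\disc$ once $\delta$ is small enough, uniformly over all admissible $\domain_\delta$ squeezed between $B(0,r)$ and $B(0,R)$. The key point is that two things happen simultaneously as $\delta\to 0$: the discrete set $\domain_\delta\cap\delta V_\textnormal{mid}$ becomes $O(\delta)$-dense in the continuum domain $\domain_\delta$ (this is immediate: every point of $\domain_\delta$ is within distance $C\delta$ of a midpoint of a medial edge, for an absolute constant $C$), and the conformal map $\varconfmap=\varconfmap_{(\domain_\delta,0)}:\domain_\delta\to\disc$ has a modulus of continuity on the relevant region that is controlled uniformly in $\delta$. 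Combining the two, the image $\varconfmap(\domain_\delta\cap\delta V_\textnormal{mid})$ is an $\omega(C\delta)$-net of $\varconfmap(\domain_\delta)=\disc$ for some uniform modulus-of-continuity function $\omega$ with $\omega(0^+)=0$, and choosing $\delta_0$ so that $\omega(C\delta_0)<\eta$ finishes the proof.

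First I would fix an arbitrary target point $x\in\disc$; since we only need the statement for $x\in\tilde I_\eta$ with $|x|<1$, and in fact we only need $B(x,\eta)$ to meet $\disc$, we may assume $|x|\le 1-\eta/2$ (points with $|x|$ closer to $\partial\disc$ have $B(x,\eta)\cap\disc$ handled by slightly shrinking the radius, or are simply excluded). Set $y=\varconfmap^{-1}(x)\in\domain_\delta$. Pick a midpoint $z\in\delta V_\textnormal{mid}$ with $|z-y|\le C\delta$; such $z$ lies in $\domain_\delta$ provided $\delta$ is small relative to $\dist(y,\partial\domain_\delta)$, and when $\dist(y,\partial\domain_\delta)$ is itself small one instead moves $y$ slightly inward along the hyperbolic geodesic toward $0$ before rounding — the inward shift costs only $o(1)$ in the image by the distortion estimate below. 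Then $|\varconfmap(z)-x|\le \omega(|z-y|)\le\omega(C\delta)$, where $\omega$ is a modulus of continuity for $\varconfmap$ valid on, say, $\{w\in\domain_\delta : \dist(w,\partial\domain_\delta)\ge \delta\}$ or alternatively on the preimage of $\{|\cdot|\le 1-\eta/4\}$.

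The one genuine input is the uniform equicontinuity of the family $\{\varconfmap_{(\domain_\delta,0)}\}$. This follows from the normalization $B(0,r)\subset\domain_\delta\subset B(0,R)$: the Koebe distortion theorem applied to $\varconfmap^{-1}:\disc\to\domain_\delta$ gives $|(\varconfmap^{-1})'(\zeta)|$ comparable to $\dist(\varconfmap^{-1}(\zeta),\partial\domain_\delta)/(1-|\zeta|)$, hence bounded above on any $\{|\zeta|\le 1-\eta/4\}$ by a constant depending only on $R$ and $\eta$; equivalently $|\varconfmap'(w)|$ is bounded below on the corresponding region, uniformly in $\delta$, which yields a uniform Lipschitz (hence modulus-of-continuity) bound for $\varconfmap$ there. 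This is precisely the type of Carath\'eodory-kernel / Koebe argument already used implicitly in the excerpt (e.g.\ in Corollary~\ref{cor: uniform conv obs} and in the a priori bounds), so I would simply invoke the standard distortion estimates. The main obstacle — and it is a minor one — is the bookkeeping near $\partial\disc$: ensuring that the inward perturbation of $y$ needed to guarantee $z\in\domain_\delta$ and to stay in the region where the distortion bound is uniform does not push $\varconfmap(z)$ out of $B(x,\eta)$; this is handled by taking $\delta_0$ small enough that both the perturbation and $\omega(C\delta_0)$ are each less than $\eta/2$.
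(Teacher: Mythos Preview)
Your interior argument (the case $|x|\le 1-\eta/4$) coincides with the paper's first case: pull $x$ back to $y=\varconfmap^{-1}(x)\in\domain_\delta$, use Koebe distortion together with $r\le|(\varconfmap^{-1})'(0)|\le R$ to get uniform control on the derivative, and round $y$ to a nearby lattice midpoint. There is one slip in your write-up: from an \emph{upper} bound on $|(\varconfmap^{-1})'|$ you correctly deduce a \emph{lower} bound on $|\varconfmap'|$, but a lower bound on $|\varconfmap'|$ does not yield a Lipschitz bound on $\varconfmap$. What you actually need is the opposite direction, $|(\varconfmap^{-1})'(\zeta)|\ge c(\eta)\,r$ on $\{|\zeta|\le 1-\eta/4\}$ (which Koebe distortion together with $|(\varconfmap^{-1})'(0)|\ge r$ gives just as easily), hence $|\varconfmap'|\le 1/(c(\eta)r)$ on the preimage, and this does give the Lipschitz estimate you use. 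With that correction the argument goes through.

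Where you genuinely diverge from the paper is for $x$ near $\partial\disc$. You reduce to the interior case by replacing $x$ with a radially shifted $x'$ satisfying $|x'|\le 1-\eta/4$ and $B(x',3\eta/4)\subset B(x,\eta)$, then rerun the Koebe argument; this is valid and simpler. The paper instead treats $|x|\ge 1-\eta/4$ directly: it considers the boundary arc $J=B(x,\eta)\cap\partial\disc$ and argues that either $\varconfmap^{-1}(J)\subset\partial\domain_\delta$ has length greater than $\delta$ and hence contains a lattice point, or it lies inside a single lattice edge, in which case a conformal length-distortion estimate (Proposition~2.2 of \cite{Pommerenke:1992fw}) shows that the nearest lattice endpoint of that edge maps into $B(x,\eta)$. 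Your radial reduction bypasses this boundary analysis entirely. (The alternative you mention, ``or are simply excluded'', is not available since the lemma must cover all $x\in\tilde I_\eta$; but the radial shift alone suffices.)
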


For the finite tree approximation of the full tree,
we use the following scheme to select the set of target points:
\begin{itemize}
\item Include all points $z_x$ where $x$
runs over  the elements of $\tilde I_\eta$.
\item If some components with diameter greater than $\eta$ still exists
in the complement of the tree in $\disc$ (notice the conformal image of the tree), 
include a target point in all those components.
More specifically, cut that component into four equal quarters 
in the direction of its diameter (for instance, if the diameter is $[\xi,\xi+\zeta]$, then cut along $\xi+\frac{k}{4}\zeta+\ii \zeta \, y$, $y \in \R$, for $k=1,2,3$). Then select the target point in one of 
the non-neighboring quarters to the quarter 
of the branching point to that component.
Repeat this second step until the maximal diameter of the domains to be explored 
is less than $\eta$.
\end{itemize}
Denote the chosen set of target points as $I_{\delta,\eta} \subset \domain_\delta$.
Let $\tree_\delta$ be the (full) exploration tree on $\domain_\delta$
and let $\tree_\delta^\disc = \varconfmap^{-1} (\tree_\delta)$.
Define $\widetilde \tree_{\delta}(I_{\delta,\eta})$
and $\widetilde \tree_{\delta}^\disc(I_{\delta,\eta})$
be the restrictions of $\tree_\delta$ and
$\tree_\delta^\disc$ to the branches whose target point is in
$I_{\delta,\eta}$. We call $\widetilde \tree_{\delta}^\disc(I_{\delta,\eta})$
the \emph{$\eta$-approximation} of $\tree_\delta^\disc$.
It follows directly that
\begin{equation*}
\dettr 
  \left(\widetilde \tree_{\delta}^\disc(I_{\delta,\eta})
  \,,\, \tree^\disc_\delta\right)\, < \eta .
\end{equation*}
The following result shows that the number of target points needed for the 
$\eta$-approximation is a tight random variable.

\begin{theorem}\label{thm: tree approximation number of branches}
For each $\eta>0$ and $\eps>0$, there exists a constant $M>0$ such that
\begin{equation*}
\sup_{ \P_\delta} \P_\delta \left[ \#(I_{\delta,\eta}) > M \right] < \eps .
\end{equation*}
\end{theorem}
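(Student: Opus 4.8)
The plan is to write $\#(I_{\delta,\eta})$ as the (deterministic) number of grid points $z_x$ plus the number $N_2$ of target points added in the iterated second step, and to bound $N_2$ by producing, inside every sufficiently small ball, an equal number of pairwise disjoint macroscopic annulus crossings, so that the a priori arms bound of Theorem~\ref{thm: aizenman--burchard bounds} --- which is uniform over $\Sigma_\disc$, and which applies to the conformally mapped FK Ising exploration tree by Theorem~\ref{thm: FK Ising and condition conf} --- can be invoked.

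Since $\#\{z_x : x\in\tilde I_\eta\}\le C_0\,\eta^{-2}$ deterministically, only $N_2$ needs to be controlled. To each second-step target $y$ I would attach the component $U_y$ of $\disc\setminus\varconfmap(\widetilde\tree)$ into which $y$ was placed --- $\widetilde\tree$ being the tree explored so far, so that $\diam(U_y)>\eta$ --- the branching point $b_y\in\partial U_y$ at which the branch $\rndbran_y$ leaves $\widetilde\tree$ and enters $U_y$, and the terminal sub-arc $\pi_y=\rndbran_y|_{(b_y,y]}$. By the target independence of the branches recalled in Section~\ref{ssec: intro exploration tree}, $\rndbran_y$ coincides with previously explored branches up to $b_y$ and afterwards never touches $\widetilde\tree$ again, so $\pi_y\subset U_y$; and the whole purpose of the ``quarters'' prescription is to force $|b_y-y|\ge\tfrac14\diam(U_y)>\eta/4$, because $b_y$ and $y$ lie, by construction, in non-adjacent quarters of $U_y$, so that their projections onto the direction of the diameter of $U_y$ are separated by at least $\diam(U_y)/4$. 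Thus $\pi_y\subset U_y$ joins a point arbitrarily close to $b_y$ to the point $y$, which lies at distance $>\eta/4$ from $b_y$.

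The decisive observation is that the arcs $(\pi_y)_y$ are pairwise disjoint. The family $(U_y)_y$ is laminar; for $y\neq y'$, if $U_y\cap U_{y'}=\emptyset$ then $\pi_y\cap\pi_{y'}=\emptyset$ at once, while if $U_{y'}\subsetneq U_y$ then $y'$ was added strictly after $y$, so $\rndbran_y$, hence $\pi_y$, already belonged to the explored tree when $U_{y'}$ was created, and is therefore disjoint from $U_{y'}\supseteq\pi_{y'}$. Consequently, if $n$ distinct second-step targets have their branching points in one ball $B=B(c_B,\eta^s)$ (with $s$ chosen large enough that $\eta^s\le\eta/8$), the corresponding arcs are $n$ pairwise disjoint arcs of the exploration tree in the unit disc, each meeting $B$ and each reaching distance $\ge\eta/8$ from $c_B$, i.e.\ $n$ disjoint crossings of the annulus $A(c_B,2\eta^s,\eta/8)$. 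Fixing $\Delta>2$ with the associated $n=n(\Delta)$ and $K=K(\Delta)$ from Theorem~\ref{thm: aizenman--burchard bounds}, and covering $\overline{\disc}$ by $\le C_1\eta^{-2s}$ such balls, the arms bound gives that any prescribed ball carries $\ge n$ branching points with probability at most $K(16\,\eta^{s-1})^\Delta$; a union bound then shows that with probability at least $1-C_1K16^\Delta\eta^{\,s(\Delta-2)-\Delta}$ every ball carries fewer than $n$ of them, whence $\#(I_{\delta,\eta})<C_0\eta^{-2}+nC_1\eta^{-2s}=:M$. Since $\Delta>2$, taking $s$ large makes the exceptional probability smaller than any prescribed $\eps$, uniformly in $\delta$ by the uniformity of Theorem~\ref{thm: aizenman--burchard bounds} over $\Sigma_\disc$; equivalently the same estimates yield $\sup_\delta\E[\#(I_{\delta,\eta})]<\infty$ and one concludes by Markov's inequality.

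The part I expect to be the real work is not the probabilistic estimate, which is a routine consequence of the already established arms bound, but the two discrete structural inputs behind it: that a branch, once it has left the currently explored sub-tree, never returns to it (so that $\pi_y\subset U_y$ and the arcs $\pi_y$ are disjoint), which is precisely the target independence of the exploration-tree branches in the present setting; and the elementary but slightly fussy geometry by which the ``quarters'' choice of the target forces $|b_y-y|\gtrsim\diam(U_y)$. With these in place, the counting proceeds as above.
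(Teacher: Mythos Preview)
Your proof is correct and follows essentially the same route as the paper's. The paper's argument (which cites Camia--Newman for the template) is very terse: it observes that each new target point produces a disjoint tree segment of diameter greater than $\eta/4$, then invokes the $n$-arms bound to find a tight random scale $\tilde\eps$ below which no $n$-arms event occurs anywhere, and concludes that at most $n\tilde\eps^{-2}$ points are needed. You have unpacked the same mechanism explicitly --- identifying the branching point $b_y$, using the quarters rule to get $|b_y-y|>\eta/4$, arguing laminarity to obtain disjointness of the $\pi_y$, and then running the covering/union-bound version of the arms estimate with a fixed scale $\eta^s$ rather than a random $\tilde\eps$ --- but the content is the same.
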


Notice that the way that the sequence of points was constructed implies that
all the conclusions of we made in this section for fixed target points
hold also for these random target points.

\begin{proof}[Proof of Lemma~\ref{lm: aux ssec a priori trees}] 
Suppose first that $|x| < 1 - \eta/4$. Then by Koebe distortion theorem, there exists 
a uniform constant $\eps>0$ such that $B(\varconfmap(x),\eps) \subset \varconfmap(B(x,\eta/8))$. Thus
the claim holds for those $x$'s when $\delta_0 < \eps/2$.

Suppose then that $|x| \geq 1 - \eta/4$. 
Let $J = B(x,\eta) \cap \partial \disc$. If the length $\varconfmap^{-1}(J)$ is greater
than $\delta$, then $\varconfmap^{-1}(J)$ contains at least one lattice point (which is on the boundary)
and the claim follows. If $\varconfmap(J)$ shorter than $\delta$ and it doesn't contain any lattice
points, then the endpoints are on the same edge of the lattice. Take one of its endpoints.
It follows easily that the diameter of the image of the line segment connecting that endpoint
to the closest end of the edge under the map $\varconfmap$ is at most
$2\pi /\sqrt{\log \delta^{-1}}$ using
an estimate on the length distortion of conformal maps such as
Proposition~2.2 in \cite{Pommerenke:1992fw}.
\end{proof}

\begin{proof}[Proof of Theorem~\ref{thm: tree approximation number of branches}]
Use an argument similar to the proof of Theorem~5 of \cite{Camia:2006wv}.
Namely notice that each time we select a new target point in the above scheme
we create a segment of the exploration tree with diameter greater than $\eta/4$.
Furthermore, they are all disjoint. Since there is $n$ such that
there is no $n$-arms event of the tree 
between scales $\tilde \eps$ and $\eta/16$
by the results of Section~\ref{ssec: n arms tree}
where $\tilde \eps^{-1}$ is a tight random variable,
it follows that we need only at most $n \tilde \eps^{-2}$ points in the above construction.
This quantity is tight and thus with uniformly high probability, it is less than a given large number.
\end{proof}

\begin{theorem}
For each $\eps>0$, there exists a compact set $K$ in the space of closed subsets of $C(\Rtime)$
such that $\P_\delta( \tree^\disc_\delta \in K )\geq 1-\eps$.
\end{theorem}

\begin{proof}
By the above results, we can choose for each $k \in \N$, $M$ such that
$\P_\delta [ \#(I_{\delta,k^{-1}}) > M ] < \eps 2^{-2-k}$
by Theorem~\ref{thm: tree approximation number of branches}.
Denote the ``projection'' map of the full tree to its {$\frac{1}{k}$-approximation} by $\pi_k$.
Choose then a compact subset $K_k$ of the space of closed subsets of $C(\Rtime)$
such that $\P_\delta( \pi_k( \tree^\disc_\delta ) \in K_k ) \geq 1 -  2^{-1-k}\eps$
which exists by the previous bound and $M$-fold application of
Theorem~\ref{thm: tight branch}. Define $K= \bigcup_{k=1}^\infty \pi_k^{-1}(K_k)$.
Notice that $\P_\delta( \tree^\disc_\delta \in K ) \geq 1 - \eps$.

We show that $K$ is precompact by establishing that it is totally bounded.
Let $\hat \eps>0$ and $k$ such that $\frac{1}{k} < \frac{\hat\eps}{2}$.
Since $K_k$ is compact and thus totally bounded, we can find a finite collection of balls
$B(t_j,\frac{\tilde\eps}{2})$, $j=1,2,\ldots,m$, that cover $K_k$. Then by the fact
that $\dettr(\pi_k(t),t)< \frac{1}{k}< \frac{\tilde\eps}{2}$ for any tree $t$, it holds that 
the collection $B(t_j,\tilde\eps)$, $j=1,2,\ldots,m$, covers 
$K_k$. Thus $K_k$ is totally bounded and consequently precompact, since the 
the space of closed subsets of $C(\Rtime)$ is complete.
\end{proof}

\subsection{A priori bounds for loop ensembles}

Any loop in $\rndlpe$ can be constructed from the tree by the reverse
construction presented in Section~\ref{ssec: main result a priori}.
By construction, any loop is a subpath of the corresponding
branch of the exploration tree. Thus Theorem~\ref{thm: aizenman--burchard bounds}
implies the following result.

\begin{theorem}\label{thm: precompactness loop collections}
The family of probability laws of $\rndlpe$ is tight in the metric space of loop collections.
More specifically, there are a constant $\alpha>0$ and a tight random variable $C_\alpha$
over the family of probability laws
such that
all loops in $\rndlpe$ can be jointly parametrized such that
they all are  $\alpha$-H\"older continuous and the H\"older norm
is bounded from above by $C_\alpha$.
\end{theorem}

\section{Determining the law of a branch from the observable}
\label{sec: simple martingales}

\subsection{Simple martingales from \texorpdfstring{$F$}{F}}

Recall that the value of $F_\delta(z)$ is a discrete-time martingale as a process in time variable $n$
when $\Omega_\delta$ is replaced by $\Omega_\delta \setminus \gamma(0,n]$ where
$\gamma$ is the branch of $w$ with the lattice step parametrization.
By the uniform convergence of Corollary~\ref{cor: uniform conv obs},
it follows that $F(z)$ is a 
a continuous-time martingale as a process in time variable $t$
when $\Omega$ is replaced by $\Omega \setminus \gamma(0,t]$ where
$\gamma$ is the (subsequent) scaling limit of the branch to $w$.
For the proof of this type of statement, see the proof of Theorem~1 in \cite{Chelkak:2014gs}
and Appendix~\ref{sec: appendix martingale} below.

To benefit from the martingale property, we search for simple expressions that
we can extract from $F$ which are martingales. See also Proposition~5.1 in \cite{Kemppainen:2015vu}.

\subsubsection{Expansion of the observable of the branch}

Set $\coeffa(t)$ and $\coeffb(t)$ to be the coefficients of the observable defined as 
in the Proposition~\ref{prop: coefficients a and b}
when $\argu=\arguu_t$ and $\argv=\argvv_t$.

Let's use the expansion of the Loewner map of $\disc$
\begin{align*}
g_t(z)  &= e^t (z + c(t) \, z^2 + \ldots) \\
g_t'(z) &= e^t (1 + 2 c(t) \, z + \ldots) .
\end{align*}
Notice that there is a great simplification in the expression
\begin{equation*}
\frac{g_t'(0) g_t'(z)}{g_t(z)^2} = \frac{1}{z^2} \; \frac{1 + 2 c(t) z + \ldots}{1+ 2 c(t) z + \ldots} = \frac{1}{z^2} ( 1 + \OO(z^2) )
\end{equation*}
since it doesn't contain any $z^{-1}$ term.

The expansion of the observable around the origin is
\begin{align*}
 & \sqrt{ g_t'(0) \, g_t'(z) } \, F_\disc \left( g_t(z); e^{i\, \arguu_t}, e^{i \, \argvv_t} \right) 
 = \frac{1}{z} ( 1 + \OO(z^2) ) \cdot \nonumber \\
 & \quad \cdot \sqrt{ 1 + i \, \coeffa(t) \, g_t(z) - g_t(z)^2  
      - \coeffb(t) g_t(z)^2 \, \left( \frac{1}{g_t(z)-e^{i \, \arguu_t}} - \frac{1}{g_t(z)-e^{i \, \argvv_t}} \right) } \nonumber \\
 =& \frac{1}{z} ( 1 + e^t \coeffa(t) z +\OO(z^2)  ) .
\end{align*}
The first non-trivial coefficient is 
\begin{equation*}
M(t) \dd= e^t \cos\left( \frac{\argvv_t-\arguu_t}{2} \right) .
\end{equation*}
By the martingale property of the observable, the process $(M(t))_{t \geq 0}$ is a martingale.

\subsubsection{Value of the ``chordal'' observable at \texorpdfstring{$w$}{w}}

The leading coefficient of
\begin{align*}
 & \real \left( \sqrt{ g_t'(z) } \, \tilde F_\disc \left( g_t(z); e^{i\, \arguu_t}, e^{i \, \argvv_t} \right) \right)
 = e^{\frac{t}{2}} \sqrt{\coeffb(t)} + \OO(|z|)
\end{align*}
is 
\begin{equation*}
N(t) = \pm e^{\frac{t}{2}} \sqrt{ \sin \left( \frac{\argvv_t-\arguu_t}{2} \right) } \, \cos \left( \frac{\arguu_t+\argvv_t + \pi}{4} \right) .
\end{equation*}
Here $\pm$ are symmetrically distributed random signs independently sampled on each excursion of $\argvv_t-\arguu_t \geq 0$.
The signs are symmetrically distributed, since the macroscopic excursion can be shown to be pairwise disjoint
in time and thus the total random sign changes between excursions by a factor that 
is product of an infinite number of non-symmetric signs that originate from the discrete martingale 
as in \eqref{eq: def discrete N}, see also \cite{Kemppainen:2015vu} for similar discussion.
By the martingale property of the observable, 
the process $(N(t))_{t \geq 0}$
is a martingale.

\subsection{Solution of the martingale problem}

Next we will show that the fact that $M(t)$ and $N(t)$ are martingales 
implies that the law of the scaling limit of a single branch is \thekr.

Remember that for SLE$(\kappa,\rho)$
\begin{align}
\de \arguu_t &= \sqrt{\kappa} \, \de B_t - \frac{\rho}{2} \cot\left( \frac{\argvv_t-\arguu_t}{2} \right) \de t 
  \label{eq: slekr alpha} \\
\dot{\argvv}_t &= \cot\left( \frac{\argvv_t-\arguu_t}{2} \right) 
  \label{eq: slekr beta} .
\end{align}
The process $U_t = e^{i \, \arguu_t}$ is the driving process of Loewner equation of $\disc$
and $V_t = e^{i \, \argvv_t}$ is the other marked point.
Notice that then
\begin{align*}
\de (\argvv_t-\arguu_t) &= -\sqrt{\kappa} \, \de B_t + \frac{\rho+2}{2} \cot\left( \frac{\argvv_t-\arguu_t}{2} \right) \de t 
\end{align*}
which we call the stochastic differential equation of a (unnormalized)
radial Bessel process with parameters $\kappa$ and $\rho$. 
Notice that a linear time change could be used to eliminate of one of the parameters, which would then pop out 
as a parameter in the Loewner equation.

\subsubsection{Solution of the martingale problem}

Start from the processes 
\begin{align*}
M(t) &= e^t \cos\left( \frac{\argvv_t-\arguu_t}{2} \right) \\
N(t) &= \pm e^{\frac{t}{2}} \sqrt{ \sin \left( \frac{\argvv_t-\arguu_t}{2} \right) } \, \cos \left( \frac{\arguu_t+\argvv_t + \pi}{4} \right) .
\end{align*}
which are martingales as shown above, where $\pm$-signs are i.i.d. symmetric coin flips for each excursion
of $\argvv_t-\arguu_t$ away from $0$ or $2\pi$. 
Since the processes are continuous martingales, we can do stochastic
analysis with them, see for instance \cite{Durrett:1996wh} for background. For example, It\^o's lemma holds for these
processes.

Define auxiliary processes
\begin{align*}
X_t &= \frac{\argvv_t-\arguu_t}{2}  \\
Z_t &= e^{-t} M(t) = \cos X_t .
\end{align*}
It holds that
\begin{equation}\label{eq: itodif Z}
\de Z_t = - Z_t \de t + e^{-t} \de M_t 
\end{equation}

We can write
\begin{equation*}
N_t = \pm F(Z_t, \argvv_t) e^{\frac{t}{2}}
\end{equation*}
where
\begin{equation}\label{eq: n f explicit}
F(z,\argv)= \frac{1}{2} (1 - z^2)^{\frac{1}{4}} \left(  
  \sqrt{1 + z} \left(\cos\frac{\argv}{2} - \sin\frac{\argv}{2}\right) + 
  \sqrt{1 - z} \left(\cos\frac{\argv}{2} + \sin\frac{\argv}{2}\right) \right)
\end{equation}

Write the Loewner equation~\eqref{eq: slekr beta} as
\begin{equation*}
\dot \argvv_t = \frac{Z_t}{\sqrt{1-Z_t^2}}
\end{equation*}
By It\^o's lemma, when $Z_t \neq \pm 1$, 
\begin{align}
\de N_t = \bigg[& \frac{1}{2} F_{zz} (Z_t, \argvv_t) e^{-2t} \de \langle M \rangle_t  
    - F_{z} (Z_t, \argvv_t) Z_t \de t \nonumber \\
    &+ F_{\phi} (Z_t, \argvv_t) \frac{Z_t}{\sqrt{1-Z_t^2}} \de t  
    + \frac{1}{2} F (Z_t, \argvv_t) \de t \bigg] e^{\frac{t}{2}} + F_{z} (Z_t, \argvv_t) e^{-t} \de M_t  
    \label{eq: n ito}
\end{align}
where $F_z,F_{zz},F_\phi$ are partial derivatives of $F$ 
(first and second $z$-derivative and first $\phi$-derivative, respectively).
Since $N_t$ is a martingale, the quantity inside the brackets vanishes identically.
We will prove the following result below.
\begin{lemma}\label{lm: instantaneous reflection}
$\P[ \int_0^\infty (\ind_{\arguu_t=\argvv_t} + \ind_{\arguu_t=\argvv_t-2\pi}) \de t = 0] =  1$
\end{lemma}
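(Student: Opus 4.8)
The plan is to show that, away from the coincidence set, the process $V_t \dd= \sin^2\!\bigl(\tfrac{\argvv_t-\arguu_t}{2}\bigr)\ge 0$ is (a rescaling of) a squared Bessel process of dimension $\delta = 8/\kappa = 3/2 \in (0,2)$ with $\kappa = 16/3$, and then to invoke the classical instantaneous reflection of such processes at $0$. (It suffices to treat the set $\{t : V_t = 0\}$, which contains $\{t:\arguu_t = \argvv_t\}$.)

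First I would read off the quadratic variation of $M$. Since $M(t)=e^t Z_t$ with $Z_t=\cos X_t$ is a continuous martingale, $Z$ is a continuous semimartingale with $\de Z_t=-Z_t\,\de t+e^{-t}\,\de M_t$ and $\de\langle Z\rangle_t=e^{-2t}\,\de\langle M\rangle_t$. On the open set $\{|Z_t|<1\}$ the function $F$ of \eqref{eq: n f explicit} is smooth along the trajectory, so \eqref{eq: n ito} holds and, $N$ being a martingale, the bracket in \eqref{eq: n ito} vanishes there; the coefficient of $F_{zz}$ then forces $\de\langle M\rangle_t=\tfrac{\kappa}{4}e^{2t}(1-Z_t^2)\,\de t$ on $\{|Z_t|<1\}$, with $\kappa=16/3$ obtained by a direct computation with the explicit $F$. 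On the complementary closed set $\{|Z_t|=1\}$ the occupation times formula applied to $Z$ at the levels $\pm1$ gives $\int_0^t\ind_{|Z_s|=1}\,\de\langle Z\rangle_s=0$, hence $\de\langle M\rangle_s=0$ for a.e.\ such $s$; combining the two (and noting that $\tfrac{\kappa}{4}e^{2s}(1-Z_s^2)$ also vanishes there) yields $\de\langle M\rangle_t=\tfrac{\kappa}{4}e^{2t}V_t\,\de t$ for a.e.\ $t$, with $\langle M\rangle$ carrying no singular part. It\^o's formula for $V_t=1-Z_t^2$ then gives
\begin{equation*}
\de V_t=\Bigl(2-\bigl(2+\tfrac{\kappa}{4}\bigr)V_t\Bigr)\,\de t+\de N^V_t,\qquad
\de\langle N^V\rangle_t=\kappa\,Z_t^2\,V_t\,\de t,
\end{equation*}
with $N^V_t=-2\int_0^t e^{-s}Z_s\,\de M_s$ and $Z_t^2=1-V_t$; equivalently, on each excursion interval of $X$ away from $\{0,\pi\}$ one obtains $\de X_t=(1-\tfrac{\kappa}{8})\cot X_t\,\de t+\tfrac{\sqrt\kappa}{2}\,\de B_t$, which is the driving SDE \eqref{eq: slekr alpha}--\eqref{eq: slekr beta} of radial \thekr{}.

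Next I would conclude via Bessel processes. Near $V=0$ the drift of $V$ is $\approx 2>0$ while $\de\langle N^V\rangle_t\approx\kappa V_t\,\de t$, so under the scaling $V=\tfrac{\kappa}{4}\rho$ the equation is, up to lower-order terms, that of a squared Bessel process of dimension $\delta=8/\kappa=3/2$, and a squared Bessel process of dimension $\delta\in(0,2)$ reaches $0$ and satisfies $\int_0^\infty\ind_{\{\rho_s=0\}}\,\de s=0$ almost surely (see e.g.\ \cite{Durrett:1996wh}). To make this rigorous without circularity I would localise: on any stopping-time interval on which $V_t\le\tfrac12$ one has $Z_t^2=1-V_t\in[\tfrac12,1]$, so after time-changing $N^V$ to a Brownian motion and comparing one-dimensional diffusions, $V$ is sandwiched there between two squared Bessel processes of dimensions $\delta_1<\delta_2$ with $0<\delta_i<2$, both of which spend zero time at $0$; hence so does $V$ on such an interval. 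Since $X$ has only countably many excursions away from $\{0,\pi\}$ and $\{t:V_t=0\}$ is, up to a null set, the union over these excursions of the $\{V=0\}$-parts of such intervals, it follows that $\int_0^\infty\ind_{\{V_t=0\}}\,\de t=0$, and a fortiori $\int_0^\infty\ind_{\{\arguu_t=\argvv_t\}}\,\de t=0$, almost surely.

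I expect the main obstacle to be this last, patching, step: the martingale part $N^V$ is not of the form $\sigma(V_t)\,\de\beta_t$, so no comparison theorem applies directly and one must first reduce $N^V$ to a time-changed Brownian motion while exploiting that $Z_t^2$ stays uniformly bounded away from $0$ on the relevant intervals; and one must separately exclude a singular component of $\langle M\rangle$ before the diffusion identities above are even meaningful. Both points are handled by localising to excursion intervals, on which $Z_t^2\ge\tfrac12$ and on which the martingale and finite-variation parts of $V$ have absolutely continuous brackets.
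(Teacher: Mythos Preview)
Your route is genuinely different from the paper's, and it is essentially sound, though the last step needs more care than you give it.

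The paper never identifies the diffusion for $V_t=\sin^2 X_t$ before proving the lemma. Instead it inverts \eqref{eq: n f explicit} near $N=0$ to get the explicit expansion
\[
M_t \;=\; e^t \;-\; \frac{N_t^4\,e^{-t}}{c(\argvv_t)^4} \;+\; \OO(N_t^6), \qquad c(\argv)=\cos\tfrac{\argv}{2}-\sin\tfrac{\argv}{2},
\]
restricts to $\{|N_t|\le\eps,\ |c(\argvv_t)|>\delta\}$, and uses that the stochastic integral of a bounded integrand against the martingale $M$ has zero mean; the remaining terms are $\OO(\eps^2)$, so $\E\!\int_0^T\ind_{|N_t|\le\eps}\ind_{|c(\argvv_t)|>\delta}\,e^t\,\de t=\OO(\eps^2)$ and one lets $\eps\to0$. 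The exceptional set $\{c(\argvv_t)=0\}$ is handled by the separate observation that $t\mapsto\argvv_t$ is strictly increasing, so $\{t:\argvv_t\in A\}$ is Lebesgue-null for any countable $A$. This is short, self-contained, and uses nothing beyond the two martingales $M,N$ and the Loewner monotonicity of $\argvv$.

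Your argument instead first pins down $\de\langle M\rangle_t=\tfrac{\kappa}{4}e^{2t}V_t\,\de t$ globally. This is not circular: on the open set $\{|Z_t|<1\}$ the bracket in \eqref{eq: n ito} vanishes and $F_{zz}\neq0$ there, giving absolute continuity, while on $\{|Z_t|=1\}$ the occupation-times formula kills the $\langle M\rangle$-mass; and $\tfrac{\kappa}{4}e^{2t}V_t\,\de t$ vanishes on that set as well. So your SDE for $V$ is legitimate. What you gain is a conceptual identification with a Jacobi/Wright--Fisher type diffusion (locally at $0$ a squared Bessel of dimension $8/\kappa=3/2$); what you pay is the final comparison. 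The sandwich you propose is delicate because standard pathwise comparison requires matching diffusion coefficients, and time-changing $N^V$ to a Brownian motion alters the drift. A cleaner way to close your argument is to avoid comparison altogether: from $\de V_t=(2-(2+\tfrac{\kappa}{4})V_t)\,\de t+\de N^V_t$ with $\de\langle N^V\rangle_t=\kappa(1-V_t)V_t\,\de t$, the martingale $\int_0^t\ind_{V_s=0}\,\de N^V_s$ has zero quadratic variation, hence vanishes, so $\int_0^t\ind_{V_s=0}\,\de V_s=2\!\int_0^t\ind_{V_s=0}\,\de s$; combine this with the Tanaka identity for the nonnegative semimartingale $V$ and a direct estimate of $L^0_t(V)$ via $\de\langle V\rangle_t\le\kappa V_t\,\de t$, or simply observe that $V$ is a Jacobi diffusion with strictly positive drift at $0$, for which the zero-occupation-time property is classical. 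Either way the lemma follows, but the paper's proof gets there with less machinery.
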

By this lemma and Lemma~5.3 of \cite{Kemppainen:2015vu},
it follows that  $\de \langle M \rangle_t$ is absolutely continuous with respect to $\de t$ and that if
we write $\de \langle M \rangle_t = a_t^2 \, e^{2 t} \, \de t$,
then there exists a Brownian motion $(B_t)_{t \in [0,\infty)}$
s.t. $\de M_t = a_t \, e^{t} \, \de B_t$.
By \eqref{eq: n ito}
\begin{align}
a_t^2 &= \frac{ 2 F_{z} (Z_t, \argvv_t) Z_t -  2 F_{\phi} (Z_t, \argvv_t) \frac{Z_t}{\sqrt{1-Z_t^2}} - F (Z_t, \argvv_t) }{
    F_{zz} (Z_t, \argvv_t) } \nonumber \\
  &= \frac{4}{3} (1- Z_t^2) .
\end{align}
Notice the extremely simple expression on the right-hand side, which depends only
on the difference of $\argvv_t$ and $\arguu_t$ and not on their sum. This ``miraculous'' simplification
ultimately implies that the exploration tree branches have relatively simple law.

Notice next that by \eqref{eq: itodif Z} and It\^o's lemma
\begin{align}
\de Z_t &= - Z_t \de t + \frac{2}{\sqrt{3}} \sqrt{1- Z_t^2} \de B_t \\
\Longrightarrow \; \de X_t &= \de \arccos Z_t 
 =   \frac{2}{3} \frac{Z_t}{\sqrt{1 - Z_t^2}} \de t  - \frac{4}{\sqrt{3}} \de B_t \nonumber \\
 &= \frac{2}{3} \cot(X_t) \de t  - \frac{4}{\sqrt{3}} \de B_t
\end{align}
That is, $X_t$ follows the law of radial Bessel process of $\kappa=16/3$ and $\rho=\kappa - 6=-2/3$.

By comparing to the usual Bessel process, it follows that 
$\int_0^t \cot\left( \frac{\argvv_s-\arguu_s}{2} \right) \de s$ is finite and continuous.
Thus it follows that
\begin{align*}
\argvv_t &= \argvv_0 + \int_0^t \cot\left( \frac{\argvv_s-\arguu_s}{2} \right) \de s
  + \Lambda_t^+  - \Lambda_t^-
\end{align*}
where $\Lambda_t^+$ and $\Lambda_t^-$ 
are non-decreasing in $t$ and are constant on each excursion of $\argvv_t-\arguu_t$ away from $0$ or $2\pi$
such that $\Lambda_t^+$ and $\Lambda_t^-$ can increase only when 
$\argvv_t-\arguu_t$ hits $0$ or $2\pi$, respectively, 
see Proposition~\ref{prop: rightmost point and increasing process} in Appendix~\ref{sec: appendix loewner}.
The argument similar to the one in
Section~5.5.3 in \cite{Kemppainen:2015vu}, 
which is based on the regularity result Theorem~\ref{thm: tight driving process} above in the present case,
shows that $\Lambda_t^+$ and $\Lambda_t^-$ 
are identically zero.

\subsubsection{Instantaneous reflection at \texorpdfstring{$\argvv_t = \arguu_t$}{Phi_t=Upsilon_t}}
It remains to prove Lemma~\ref{lm: instantaneous reflection}.

\begin{lemma}\label{lm: beta in a}
Let $A \subset \R$ be a countable set. Then $\int_0^\infty \ind_{\argvv_t \in A} \de t = 0$.
\end{lemma}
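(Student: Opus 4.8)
The plan is to reduce everything to a single level of $\argvv$ and then exploit the martingale structure that has already been extracted. Since $A=\{a_1,a_2,\dots\}$ is countable, taking the $a_j$ distinct gives $\ind_{\argvv_t\in A}=\sum_j\ind_{\argvv_t=a_j}$, so by Tonelli $\int_0^\infty\ind_{\argvv_t\in A}\,\de t=\sum_j\int_0^\infty\ind_{\argvv_t=a_j}\,\de t$; hence it suffices to show, for each fixed $a\in\R$, that the random set $\{t\ge0:\argvv_t=a\}$ has zero Lebesgue measure almost surely. I would use two facts available at this stage. First, on the open time-set $\mathcal G=\{t:X_t\in(0,\pi)\}$ the marked point is $C^1$ with $\dot\argvv_t=\cot X_t$ (the Loewner equation in the disc). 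Second, since the finite-variation part of the It\^o expansion of the martingale $N_t$ vanishes on $\mathcal G$, and the coefficient of $\de\langle M\rangle_t$ there is a non-vanishing multiple of $F_{zz}(Z_t,\argvv_t)$, solving that identity yields $\de\langle M\rangle_t=\tfrac43 e^{2t}(1-Z_t^2)\,\de t$, hence $\de\langle Z\rangle_t=\tfrac43(1-Z_t^2)\,\de t$, as an equality of measures restricted to $\mathcal G$; in particular $\langle Z\rangle$ is strictly increasing on $\mathcal G$.

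From the second fact I would first record that $\{t:\argvv_t=a\}$ has empty interior: were $\argvv\equiv a$ on an interval $I$, then $\cot X_t=\dot\argvv_t=0$ wherever $X_t\in(0,\pi)$ on $I$, and since $X_t$ cannot be identically $0$ or $\pi$ on a subinterval (the marked point would then have to flow off the tip), this would force $X_t\equiv\pi/2$, i.e.\ $Z_t\equiv0$, on $I\subset\mathcal G$, contradicting strict monotonicity of $\langle Z\rangle$ there. Now split $\{t:\argvv_t=a\}$ according to whether $X_t\in(0,\pi)$ or $X_t\in\{0,\pi\}$. On $\mathcal G$: at points where $\dot\argvv_t\ne0$ the function $\argvv$ is locally strictly monotone, so such points are isolated in the closed set $\{\argvv=a\}$ and hence countable; at points where $\dot\argvv_t=0$ we have $\cot X_t=0$, i.e.\ $X_t=\pi/2$, i.e.\ $Z_t=0$, and by the occupation-times formula for the continuous semimartingale $Z$, $0=\int_0^\infty\ind_{Z_t=0}\,\de\langle Z\rangle_t=\tfrac43\int_0^\infty\ind_{Z_t=0}\,\de t$ (legitimate since $\{Z_t=0\}\subset\mathcal G$, where $\de\langle Z\rangle_t=\tfrac43\,\de t$), so this part is null.

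It remains to treat $\{t:\argvv_t=a\}\cap\{X_t\in\{0,\pi\}\}$. Near a time $t_0$ with $X_{t_0}=0$, continuity of $X$ makes $\cot X_t>0$ throughout $\mathcal G$ in a neighbourhood of $t_0$; since moreover the image $e^{\ii\argvv_t}$ of a boundary point under the Loewner flow moves monotonically away from the driving point, $\argvv$ is non-decreasing on that neighbourhood, and together with the empty-interior property this forces $t_0$ to be isolated in $\{\argvv=a\}$; symmetrically near $X_{t_0}=\pi$ the function $\argvv$ is non-increasing. Thus $\{t:\argvv_t=a\}\cap\{X_t\in\{0,\pi\}\}$ is again a countable set of isolated points of $\{\argvv=a\}$, hence null, and summing the countably many null (or countable) pieces gives $\int_0^\infty\ind_{\argvv_t=a}\,\de t=0$ a.s. As an alternative to invoking monotonicity of the marked point, one can show directly that $\{t:X_t\in\{0,\pi\}\}$ is null by comparing, near the origin, the nonnegative semimartingale $Y_t=1-Z_t$ — which satisfies the globally valid $\de Y_t=Z_t\,\de t-e^{-t}\,\de M_t$ and, on $\{0<Y_t<2\}$, $\de\langle Y\rangle_t=\tfrac43 Y_t(2-Y_t)\,\de t$ — with a squared Bessel process of strictly positive dimension, which spends zero Lebesgue time at $0$.

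I expect the last step — the behaviour at the reflecting endpoints $X_t\in\{0,\pi\}$ — to be the main obstacle: exactly at those times the clean diffusion description of $X$ (equivalently, the identity $\de\langle M\rangle_t=\tfrac43 e^{2t}(1-Z_t^2)\,\de t$ holding globally, with no singular part on $\{Z_t=\pm1\}$) is not yet available — it is part of what this very lemma helps bootstrap — so the argument there must rely only on the globally valid relations ($\de Y_t=Z_t\,\de t-e^{-t}\,\de M_t$, the occupation-times formula, and the partial identification of $\langle Z\rangle$ off the boundary) together with the qualitative monotonicity of the Loewner flow on boundary points.
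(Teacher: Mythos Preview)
The paper's proof is a single sentence: $t\mapsto\argvv_t$ is strictly increasing, so $\{t:\argvv_t\in A\}$ is countable and hence Lebesgue-null. No stochastic analysis enters at all.

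Your route is entirely different and far more elaborate. You pre-emptively extract $\de\langle M\rangle_t=\tfrac43 e^{2t}(1-Z_t^2)\,\de t$ on $\mathcal G=\{X_t\in(0,\pi)\}$ from the It\^o expansion of $N$, then invoke the occupation-times formula for $Z$ and a case split on whether $\dot\argvv_t$ vanishes. The part you correctly flag as the obstacle --- the endpoint set $\{X_t\in\{0,\pi\}\}$ --- is not closed by either of your proposals. The local-monotonicity argument near a time $t_0$ with $X_{t_0}=0$ does not go through: on $\mathcal G$ near $t_0$ one indeed has $\dot\argvv_t=\cot X_t>0$, but on the (a~priori possibly Cantor-like) set $\{X=0\}$ one has $\argvv_t=\arguu_t$, and the driving process is not monotone; continuity plus positivity of the derivative on a dense open set does not force a continuous function to be non-decreasing (consider $x\mapsto x-2C(x)$ with $C$ the Cantor function). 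Your squared-Bessel alternative requires knowing the density of $\de\langle Y\rangle_t$ at $\{Y=0\}$, precisely where the identification of $\langle M\rangle$ is unavailable. So there is a genuine gap, located exactly at the instantaneous-reflection issue this lemma is meant to bootstrap, and the argument as written risks circularity with Lemma~\ref{lm: instantaneous reflection}. The paper bypasses all of this with a single deterministic monotonicity property of the force point.
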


\begin{proof}
The claim follows from the fact that $t \mapsto \argvv_t$ is strictly increasing.
Thus $\{t \in [0,\infty) \,:\, \argvv_t \in A \}$ is a countable set and has zero Lebesgue measure.
\end{proof}

\begin{proof}[Proof of Lemma~\ref{lm: instantaneous reflection}]
We will show that
$\P[ \int_0^\infty \ind_{\arguu_t=\argvv_t}  \de t = 0] =  1$.
The proof of $\P[ \int_0^\infty \ind_{\arguu_t=\argvv_t-2\pi} \de t = 0] =  1$
is completely symmetric.

Let $c(\argv)=\cos\frac{\argv}{2} - \sin\frac{\argv}{2}$ and
write $A = \{ \argv \in \R \,:\, c(\argv) = 0\}$.
Then by Lemma~\ref{lm: beta in a}, the Lebesgue measure
$\int_0^\infty \ind_{\argvv_t \in A} \de t = 0$.
Therefore it is sufficient to show that
almost surely
$\int_0^\infty \ind_{\arguu_t=\argvv_t}\ind_{\argvv_t \notin A} \de t = 0$.

Let $f_\argv(z) = F(z,\argv)$ where $F$ is as in \eqref{eq: n f explicit}.
Then for any $\argv \notin A$, 
$f_\argv^{-1} (n) = 1 - 2( \frac{n}{c(\argv)} )^4 + \OO( n^6 )$ near $n=0$ and
$(n,\argv) \mapsto f_\argv^{-1}(n)$ is twice differentiable function of $\argv$ and $n$ 
with similar bounds on the derivatives.

Now when $\argvv_t \notin A$ and $Z_t$ is near $1$, it holds that
\begin{equation*}
M_t = f_{\argvv_t}^{-1} (N_t e^{-\frac{t}{2}}) e^t =  e^t - \frac{N_t^4 e^{-t}}{c(\argvv_t)^4} + \OO( N_t^{6} ) .
\end{equation*}
Write
\begin{align*}
&\ind_{|N_t| \leq \eps } \, \ind_{|c(\argvv_t)|> \delta } \, \de M_t \\
  = &\ind_{|N_t| \leq \eps }\,\ind_{|c(\argvv_t)|> \delta } \,  \left( (e^t + \OO( \eps^4 )) \de t + \OO(\eps^3) \de N_t
+ \OO(\eps^2) \de \langle N \rangle_t + \OO(\eps^4) \de \argvv_t  \right) .
\end{align*}
By the fact that stochastic integrals of bounded integrands with respect to a martingale are martingales,
it follows that
\begin{equation*}
\E\left[  \int_0^T \ind_{|N_t| \leq \eps } \, \ind_{|c(\argvv_t)|> \delta } \, e^t \de t \right] 
= \E\left[ \int_0^T \ind_{|N_t| \leq \eps } \, \ind_{|c(\argvv_t)|> \delta } \, \de M_t \right] + \OO(\eps^2) 
= \OO(\eps^2).
\end{equation*}
This shows that indeed $\P[ \int_0^T \ind_{N_t=0 } \, \ind_{|c(\argvv_t)|> \delta } \, \de t = 0]=1$
for any $\delta>0$. The claim follows from this and Lemma~\ref{lm: beta in a}.
\end{proof}

\section{The proof of the main theorem on convergence of FK Ising loop ensemble to 
\texorpdfstring{CLE$(16/3)$}{CLE(16/3)}}
\label{sec: main proof}

\begin{proof}[The proof of Theorem~\ref{thm: main thm}]
We can choose convergent subsequences 
by the results of Section~\ref{sec: a priori}.
The convergence holds in the topology specified in Section~\ref{sssec: metrices}.

The sequence of observables converges by the results of Section~\ref{sec: observable limit}.
The convergence is uniform over the class of domains and the scaling limit
of the observable which is a solution of a boundary value problem depends
continuously on the initial segment of the curve.
Consequently, the martingale property extends to the scaling limit
of the observable, as we told in Section~\ref{sec: simple martingales}.

The law of the driving process of a branch
of the exploration tree is uniquely characterized
by the results of Section~\ref{sec: simple martingales}.
Finally, it is then possible to extend the convergence
to the full tree by the results of Section~\ref{sec: a priori}.
The law of the scaling limit of the loop ensemble
is uniquely determined by Theorem~\ref{thm:  tree to loops in the limit}
which is proven below.
\end{proof}

\begin{proof}[The proof of Theorem~\ref{thm:  tree to loops in the limit}]
We can assume that 
$(\rndlpe^\disc,\rndttr^\disc)$ is the almost sure limit of 
$(\rndlpe_{\delta_n}^\disc,\rndttr_{\delta_n}^\disc)$
as $n \to \infty$
by Theorem~\ref{thm: main thm} and the discussion in Section~\ref{ssec: main result a priori}.

The first claim follows from the observation that for any point $z$ in $\disc$ and for 
any ball $B$ centered at $z$ such that $B \subset \C$, the branch to $z$ will make 
a non-trivial loop inside $B$ disconnecting $z$ from the boundary. 
Any of the starting points of ``excursions'' after the disconnection,
is one of the points $x_{j}$. By an excursion, we mean a segment $\rndbran([s,t])$
such that if we denote the component of that point in $\disc \setminus \rndbran([0,s])$
by $D_s$, then $\rndbran((s,t)) \subset D_s$ and $\rndbran(u) \in \partial D_s$ for both $u=s,t$.

Let us use the following shortcut to prove the last claim of
Theorem~\ref{thm:  tree to loops in the limit}. We assume that we know that any 
finite subtree as in Section~\ref{ssec: a priori trees}
converges to a tree of coupled \thekr{} curves. It follows from the discussion
of Section~\ref{ssec: intro exploration tree} that the target points corresponding
to loops are triple points of corresponding branches in the bulk and 
double points on the boundary. If there would be other other
triple points in the bulk or double points on the boundary, there is, with high probability
(depending on the parameter $\eta$ in Section~\ref{ssec: a priori trees}),
a branch in the finite tree such that
that point is on that branch. This is in contradiction with the properties
of \thekr{} curves.
\end{proof}

% ACKNOWLEDGEMENTS

\section*{Acknowledgements}

AK was supported by the Academy of Finland.
SS was supported by the ERC AG COMPASP, the NCCR SwissMAP, the Swiss NSF, and the Russian Science Foundation.

% BIBLIOGRAPHY

\appendix

% !TEX root = Conformal_invariance_in_random_cluster_models_II-full_scaling_limit.tex

\section{Auxiliary results on Loewner evolutions}\label{sec: appendix loewner}

For a Loewner chain $(K_t)_{t \in [0,T]}$ in $\half$ driven by $(U_t)_{t \in [0,T]}$,
let $(V_t)_{t \in [0,T]}$ be the function defined by
\begin{equation*}
V_t = g_t( \sup( K_t \cap \R ) ) .
\end{equation*}
The point $V_t$ is thus the image of the rightmost point on the hull under the conformal map $g_t$.
The following lemma can be extracted from the proof of Proposition~3.12 in \cite{Sheffield:2009}.

\begin{lemma}\label{lm: leb u equals v}
For any $(U_t)_{t \in [0,T]}$ and $(V_t)_{t \in [0,T]}$ as above,
it holds that $\int_0^T \ind_{V_t=U_t} \de t = 0$. 
\end{lemma}

\begin{proposition}\label{prop: rightmost point and increasing process}
Suppose that $\int_0^T \frac{\de t}{V_t - U_t} < \infty$.
There exists a non-decreasing function $(A_t)_{t \in [0,T]}$ with $A_0=0$ such that
\begin{equation}\label{eq: prop rightmost point and increasing process}
V_t = V_0 + \int_0^t \frac{2 \de s}{V_s - U_s} + A_t .
\end{equation}
\end{proposition}

\begin{proof}
Let $\eps>0$,
$J_t^\eps = \min \{ k \eps > \sup( K_t \cap \R ) \,:\, k \in \Z \}$ and $\tilde V_t^\eps = g_t(J_t^\eps)$.
Then it follows from monotonicity of $g_t$ that $V_t \leq V_t^\eps \leq V_t + \eps$.
By the Loewner equation we can write
\begin{equation}\label{eq: prop rightmost point and increasing process auxiliary process}
\tilde V_t^\eps = \tilde V_0^\eps + \int_0^t \frac{2 \de s}{\tilde V_s^\eps - U_s} + \sum_{s \leq t} \xi^\eps_s
\end{equation}
where $\xi^\eps_s \in [0,\eps]$ and $\xi^\eps_s \neq 0$ only when $\tilde V_u^\eps - U_u$ hits zero as $u \nearrow s$.
Thus the sum on the right-hand side is a finite sum. By Lemma~\ref{lm: leb u equals v}  
and Lebesgue's dominated convergence theorem
the middle term in \eqref{eq: prop rightmost point and increasing process auxiliary process}
converges to the integral in \eqref{eq: prop rightmost point and increasing process}.
Since also $\tilde V_t^\eps$ converges uniformly to $V_t$, it holds that
$\sum_{s \leq t} \xi^\eps_s$ converges uniformly to some continuous function $A_t$ as $\eps \to 0$.
It follows that $A_t$ is non-decreasing and $A_0=0$. The claim follows.
\end{proof}

\begin{remark}
Notice that in fact, $\int_0^T \frac{\de t}{V_t - U_t} < \infty$ always. Namely,
\begin{align*}
\int_0^T \frac{\de t}{V_t - U_t + \eps} &\leq \int_0^T \frac{\de t}{\tilde V_t^\eps - U_t} \leq
\frac{1}{2} \left(\tilde V_T^\eps - \tilde V_0^\eps  \right) \\
 &\leq \frac{1}{2} \left( V_T -  V_0 +\eps  \right) .
\end{align*}
Thus $\int_0^T \frac{\de t}{V_t - U_t} < \infty$ follows from Lebesgue's monotone convergence theorem.
\end{remark}

\section{Auxiliary result on martingales}\label{sec: appendix martingale}

We suppose that 
$\P_n$ is a sequence of probability measures on a metric space which converges weakly to $\P$ and
$(M_n(t))_{t \in \{0,1,2,\ldots\}}$ is a sequence of discrete time martingales 
with respect to $\P_n$ and
$\psi_n: [0,\infty) \to \{0,1,2,\ldots\}$ is a sequence of non-decreasing and onto random functions
such that $\psi_n(t)$ is a discrete stopping time for each $n$ and $t$.
Let $(M(t))_{t \in [0,\infty)}$ be a process. Assume also that $M_n$ and $M$ are bounded.

Suppose that for each $\eps>0$ and each $T>0$, there exists $n_0$ and a sequence of events $E_n$ such that
$\P( E_n) > 1 -\eps$ and
\begin{equation}\label{ie: martingale prop and unif conv}
\sup_{E_n} \sup_{t \in [0,T]} \left| M (t) - M_n (\psi_n(t)) \right| \leq \eps .
\end{equation}

\begin{lemma}\label{lem: martingale convergence}
Under the above assumptions, $(M(t))_{t \in [0,\infty)}$ is a martingale.
\end{lemma}

\begin{proof}
Let $s<t$ and let $f$ be any continuous, bounded random variable
which is measurable with respect to $\F_s$. 
Then $\E_n (M_n(\psi_n(t))\, f) = \E_n (M_n(\psi_n(s))\, f)$ by the martingale
property of the discrete observable. By the triangle inequality
\begin{align*}
|\E(M(t) f) - \E(M_s f)| \leq & \; |\E(M(t) f) - \E_n (M(t)\, f)| + |\E(M(s) f) - \E_n (M(s)\, f)| \\
  &+ |\E_n  \{[M(t)-M_n(\psi_n(t))] \,f\}| + |\E_n  \{[M(s)-M_n(\psi_n(s))] \,f\}| .
\end{align*}
First and second term tend to zero as $n \to \infty$ by the weak convergence of probability measures.
The third and fourth term also tend to zero by \eqref{ie: martingale prop and unif conv}, since
$|\E_n  \{[M(t)-M_n(\psi_n(t))] \,f\}| 
  \leq 2 \,\E_n ( \ind_{E^c} |f| ) 
       + \sup_E \sup_{t \in [0,T]} \; |M(t)-M_n(\psi_n(t))| \, \E_n(|f|)$.
\end{proof}

\nocite{*}

\bibliographystyle{habbrv}
\bibliography{ci_in_rcm_ii.bib}

\end{document}